\def\eqref#1{Eq.~(\ref{#1})}
\def\1{\bm{1}}
\DeclareMathAlphabet{\mathsfit}{\encodingdefault}{\sfdefault}{m}{sl}
\SetMathAlphabet{\mathsfit}{bold}{\encodingdefault}{\sfdefault}{bx}{n}
\newcommand{\argmax}{\mathop{\rm arg~max}\limits}
\newtheorem{theorem}{Theorem}[section]
\newtheorem{lemma}[theorem]{Lemma}
\newtheorem{assumption}[theorem]{Assumption}
\newtheorem{example}[theorem]{Example}
\definecolor{airforceblue}{rgb}{0.36, 0.54, 0.66}
\definecolor{aliceblue}{rgb}{0.94, 0.97, 1.0}
\definecolor{bluegray}{rgb}{0.4, 0.6, 0.8}
\definecolor{bluefill}{rgb}{0.5, 0.7, 0.9}
\definecolor{cornflowerblue}{rgb}{0.39, 0.58, 0.93}
\definecolor{coolblack}{rgb}{0.0, 0.18, 0.39}
\definecolor{coolblack2}{rgb}{0.05, 0.23, 0.49}
\definecolor{beaublue}{rgb}{0.74, 0.83, 0.9}
\title{Boosting Perturbed Gradient Ascent \\ for Last-Iterate Convergence in Games}
\author{
Kenshi Abe$^{1,2}$ \quad Mitsuki Sakamoto$^1$ \quad Kaito Ariu$^1$ \quad Atsushi Iwasaki$^2$ \\
$^1$CyberAgent \quad $^2$The University of Electro-Communications \\
\texttt{\{abe\_kenshi, sakamoto\_mitsuki, kaito\_ariu\}@cyberagent.co.jp} \\
\texttt{atsushi.iwasaki@uec.ac.jp}
}
\begin{document}

\maketitle

\begin{abstract}
This paper presents a payoff perturbation technique, introducing a strong convexity to players' payoff functions in games. This technique is specifically designed for first-order methods to achieve last-iterate convergence in games where the gradient of the payoff functions is monotone in the strategy profile space, potentially containing additive noise. Although perturbation is known to facilitate the convergence of learning algorithms, the magnitude of perturbation requires careful adjustment to ensure last-iterate convergence. Previous studies have proposed a scheme in which the magnitude is determined by the distance from a periodically re-initialized anchoring or reference strategy. Building upon this, we propose Gradient Ascent with Boosting Payoff Perturbation, which incorporates a novel perturbation into the underlying payoff function, maintaining the periodically re-initializing anchoring strategy scheme. This innovation empowers us to provide faster last-iterate convergence rates against the existing payoff perturbed algorithms, even in the presence of additive noise.
\end{abstract}

\section{Introduction}
This study considers online learning in monotone games, where the gradient of the payoff function is monotone in the strategy profile space.
Monotone games encompassed diverse well-studied games as special instances, such as concave-convex games, zero-sum polymatrix games \citep{cai2011minmax,cai2016zero}, $\lambda$-cocoercive games \citep{lin2020finite}, and Cournot competition \citep{monderer1996potential}.
Due to their wide-ranging applications, there has been growing interest in developing learning algorithms to compute Nash equilibria in monotone games.

Typical learning algorithms such as Gradient Ascent \citep{zinkevich2003online} and Multiplicative Weights Update \citep{bailey2018multiplicative} have been extensively studied and shown to converge to equilibria in an average-iterate sense, which is termed {\it average-iterate convergence}.
However, averaging the strategies can be undesirable because it can lead to additional memory or computational costs in the context of training Generative Adversarial Networks \citep{goodfellow2014generative} and preference-based fine-tuning of large language models \citep{munos2024nash, swamy2024minimaximalist}.
In contrast, {\it last-iterate convergence}, in which the updated strategy profile itself converges to a Nash equilibrium, has emerged as a stronger notion than average-iterate convergence.

Payoff-perturbed algorithms have recently been regaining attention in this context~\citep{sokota2022unified,liu2022power}. 
Payoff perturbation, a classical technique referenced in \citet{facchinei2003finite}, introduces a strongly convex penalty to the players' payoff functions to stabilize learning.
This leads to convergence toward approximate equilibria, not only in the {\it full feedback} setting where the perfect gradient vector of the payoff function can be used to update strategies, but also in the {\it noisy feedback} setting where the gradient vector is contaminated by noise.

However, to ensure convergence toward a Nash equilibrium of the underlying game, the magnitude of perturbation requires careful adjustment. As a remedy, it is adjusted by the distance from an anchoring or reference strategy. 
\citet{koshal2010single} and \citet{tatarenko2019learning} simply decay the magnitude in each iteration, and their methods asymptotically converge, since the perturbed function gradually loses strong convexity.
In response to this, recent studies \citep{perolat2021poincare,abe2022last,abe2024slingshot} 
re-initialize the anchoring strategies periodically, or in a predefined interval, so that they keep the perturbed function strongly convex and achieve non-asymptotic convergence. 

We should also mention the \textit{optimistic} family of learning algorithms, which incorporates recency bias and exhibits last-iterate convergence \citep{daskalakis2017training,daskalakis2018last,mertikopoulos2018optimistic,wei2020linear}.
Unfortunately, the property has mainly been proven in the full feedback setting.
Although it might empirically work with noisy feedback, the convergence is slower, as demonstrated in Section~\ref{sec:experiments}.
The fast convergence in the noisy feedback setting is another reason why payoff-perturbed algorithms have been gaining renewed interest. 

The most recent payoff-perturbed algorithm, {\it Adaptively Perturbed Mirror Descent} (APMD)~\citep{abe2024slingshot}, achieves $\tilde{\mathcal{O}}(1/\sqrt{T})$\footnote{We use $\tilde{\mathcal{O}}$ to denote a Landau notation that disregards a polylogarithmic factor.} and $\tilde{\mathcal{O}}(1/T^{\frac{1}{10}})$ last-iterate convergence rates in the full/noisy feedback setting, respectively.
The motivation of this study lies in improving these convergence rates. We propose an elegant one-line modification of APMD, which effectively accelerates convergence.
In fact, we just add the difference between the current anchoring strategy and the initial anchoring strategy to the payoff perturbation function in APMD.

Our contributions are manifold.
Firstly, we propose a novel payoff-perturbed learning algorithm named {\it Gradient Ascent with Boosting Payoff Perturbation}\footnote{An implementation of our method is available at \url{https://github.com/CyberAgentAILab/boosting-perturbed-ga}} (GABP).
This method incorporates a unique perturbation payoff function, enabling it to achieve fast convergence.
Subsequently, we prove that GABP exhibits accelerated $\tilde{\mathcal{O}}(1/T)$ and $\tilde{\mathcal{O}}(1/T^{\frac{1}{7}})$ last-iterate convergence rates to a Nash equilibrium with full and noisy feedback, respectively\footnote{For a more detailed comparison of our rates with other works, please refer to Table \ref{tab:comparison_with_exsisting_lic_results} in Appendix \ref{sec:appx_comparison_with_exsisting_lic_results}.}. 
To derive these rates, we utilize the concept of the potential function used in \citet{cai2023doubly}. Specifically, the potential function we employ is customized for handling noisy feedback.
We further show that each player's individual regret is at most $\mathcal{O}\left((\ln T)^2\right)$ in the full feedback setting, provided all players play according to GABP.
Finally, through our experiments, we demonstrate the competitive or superior performance of GABP over the Optimistic Gradient algorithm \citep{daskalakis2017training,wei2020linear}, the Accelerated Optimistic Gradient algorithm \citep{cai2023doubly}, and APMD in concave-convex games, irrespective of the presence of noise.

\section{Preliminaries}
\paragraph{Monotone games.}
In this study, we focus on a continuous multi-player game, which is denoted as $\left([N], (\mathcal{X}_i)_{i\in [N]}, (v_i)_{i\in [N]}\right)$.
$[N] = \{1, 2,\cdots, N\}$ denotes the set of $N$ players.
Each player $i\in [N]$ chooses a {\it strategy} $\pi_i$ from a $d_i$-dimensional compact convex strategy space $\mathcal{X}_i$, and we write $\mathcal{X} = \prod_{i\in [N]} \mathcal{X}_i$.
Each player $i$ aims to maximize her payoff function $v_i: \mathcal{X} \to \mathbb{R}$, which is differentiable on $\mathcal{X}$.
We denote $\pi_{-i}\in \prod_{j\neq i}\mathcal{X}_j$ as the strategies of all players except player $i$, and $\pi=(\pi_i)_{i\in [N]}\in \mathcal{X}$ as the {\it strategy profile}.
This paper particularly studies learning in {\it smooth monotone games}, where the gradient operator $V(\cdot) = (\nabla_{\pi_i}v_i(\cdot))_{i\in [N]}$ of the payoff functions is monotone: $\forall \pi, \pi'\in \mathcal{X}$,
\begin{align}
    \left\langle V(\pi) - V(\pi'), \pi - \pi'\right\rangle \leq 0,
\label{eq:monotone_game}
\end{align}
and $L$-Lipschitz for $L>0$,
\begin{align}
    \left\| V(\pi) - V(\pi')\right\| \leq L \left\|\pi - \pi'\right\|,
\label{eq:L-smooth}
\end{align}
where $\|\cdot\|$ denotes the $\ell_2$-norm.

Many common and well-studied games, such as concave-convex games, zero-sum polymatrix games \citep{cai2016zero}, $\lambda$-cocoercive games \citep{lin2020finite}, and Cournot competition \citep{monderer1996potential}, are included in the class of monotone games.
\begin{example}[Concave-convex games] 
\normalfont
Consider a game defined by $(\{1, 2\}, (\mathcal{X}_1, \mathcal{X}_2), (v, -v))$, where $v:\mathcal{X}_1\times \mathcal{X}_2\to \mathbb{R}$ is the payoff function.
In this game, player $1$ wishes to maximize $v$, while player $2$ aims to minimize $v$.
If $v$ is concave in $\pi_1\in \mathcal{X}_1$ and convex in $\pi_2\in \mathcal{X}_2$, the game is called a concave-convex game or minimax optimization problem, and it is not hard to see that this game is a special case of monotone games.
\end{example}

\begin{example}[Cournot Competition]
\normalfont
Consider a Cournot competition model with a linear price function. 
There are $N$ firms in competition, and each independently and simultaneously chooses a quantity $\pi_i\in \mathcal{X}_i := [0, C_i]$ to produce certain goods, where $C_i$ is a constant greater than 0. 
The price of the goods is determined by a linear function $P(\pi) = a - b\sum_{i\in [N]}\pi_i$, where $a$ and $b$ are constants greater than 0. 
The payoff for each firm $i$ is calculated as the total revenue from producing $\pi_i$ units of the goods, minus the associated production cost, i.e., $v_i(\pi) = \pi_i P(\pi) - c_i \pi_i$.
This game has been shown to satisfy the property of monotone games as defined in \eqref{eq:monotone_game} \citep{bravo2018bandit}.
\end{example}

\paragraph{Nash equilibrium and gap function.}
A {\it Nash equilibrium} \citep{nash1951non} is a widely used solution concept for a game, which is a strategy profile where no player can gain by changing her own strategy.
Formally, a strategy profile $\pi^{\ast}\in \mathcal{X}$ is called a Nash equilibrium, if and only if $\pi^{\ast}$ satisfies the following condition:
\begin{align*}
\forall i\in [N], \forall \pi_i\in \mathcal{X}_i, ~v_i(\pi_i^{\ast}, \pi_{-i}^{\ast}) \geq v_i(\pi_i, \pi_{-i}^{\ast}).
\end{align*}
We define the set of all Nash equilibria to be $\Pi^{\ast}$.
It has been shown that there exists at least one Nash equilibrium \citep{debreu1952social} for any smooth monotone games.

To quantify the proximity to Nash equilibrium for a given strategy profile $\pi\in \mathcal{X}$, we use the {\it gap function}, which is defined as:
\begin{align*}
\mathrm{GAP}(\pi):=\max_{\tilde{\pi}\in \mathcal{X}}\left\langle V(\pi), \tilde{\pi} - \pi\right\rangle.
\end{align*}
Additionally, we use another measure of proximity to Nash equilibrium, referred to as the {\it tangent residual}.
This measure is defined as:
\begin{align*}
r^{\mathrm{tan}}(\pi) := \min_{a\in N_{\mathcal{X}}(\pi)} \left\| -V(\pi) + a\right\|,
\end{align*}
where $N_{\mathcal{X}}(\pi) = \{(a_i)_{i\in [N]}\in \prod_{i=1}^N\mathbb{R}^{d_i} ~|~ \sum_{i=1}^N\langle a_i, \pi_i' - \pi_i\rangle \leq 0, ~\forall \pi'\in \mathcal{X}\}$ is the normal cone at $\pi \in \mathcal{X}$.
It is easy to see that $\mathrm{GAP}(\pi) \geq 0$ (resp. $r^{\mathrm{tan}}(\pi) \geq 0$) for any $\pi\in \mathcal{X}$, and the equality holds if and only if $\pi$ is a Nash equilibrium.
Defining $D:= \sup_{\pi, \pi'\in \mathcal{X}}\left\|\pi - \pi'\right\|$ as the diameter of $\mathcal{X}$, the gap function for any given strategy profile $\pi\in \mathcal{X}$ is upper bounded by its tangent residual:
\begin{lemma}[Lemma 2 of \citet{cai2022finite}]
\label{lem:upper_bound_on_gap_function_by_tangent}
For any $\pi \in \mathcal{X}$, we have:
\begin{align*}
\mathrm{GAP}(\pi) \leq D \cdot r^{\mathrm{tan}}(\pi).
\end{align*}
\end{lemma}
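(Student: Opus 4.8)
The plan is to bound the inner product defining $\mathrm{GAP}(\pi)$ by exploiting the defining inequality of the normal cone together with Cauchy--Schwarz. First I would fix a minimizer $a^{\ast} \in N_{\mathcal{X}}(\pi)$ attaining $r^{\mathrm{tan}}(\pi) = \|-V(\pi) + a^{\ast}\|$; this minimum is attained because it amounts to projecting $V(\pi)$ onto the closed convex cone $N_{\mathcal{X}}(\pi)$, so the infimum in the definition of the tangent residual is in fact a minimum.

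Next, for an arbitrary competitor $\tilde{\pi} \in \mathcal{X}$, I would decompose the inner product as
\begin{align*}
\langle V(\pi), \tilde{\pi} - \pi \rangle = \langle V(\pi) - a^{\ast}, \tilde{\pi} - \pi \rangle + \langle a^{\ast}, \tilde{\pi} - \pi \rangle.
\end{align*}
The second term is nonpositive: since $a^{\ast} \in N_{\mathcal{X}}(\pi)$ and $\tilde{\pi} \in \mathcal{X}$, the defining property of the normal cone (taking $\pi' = \tilde{\pi}$) gives $\langle a^{\ast}, \tilde{\pi} - \pi \rangle \leq 0$. Hence
\begin{align*}
\langle V(\pi), \tilde{\pi} - \pi \rangle \leq \langle V(\pi) - a^{\ast}, \tilde{\pi} - \pi \rangle \leq \left\| V(\pi) - a^{\ast}\right\| \cdot \left\|\tilde{\pi} - \pi\right\|,
\end{align*}
where the final step is Cauchy--Schwarz.

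Finally, I would identify $\|V(\pi) - a^{\ast}\| = \|-V(\pi) + a^{\ast}\| = r^{\mathrm{tan}}(\pi)$ and bound $\|\tilde{\pi} - \pi\| \leq D$ using the definition of the diameter. Since the resulting bound $\langle V(\pi), \tilde{\pi} - \pi \rangle \leq D \cdot r^{\mathrm{tan}}(\pi)$ holds uniformly over all $\tilde{\pi} \in \mathcal{X}$, taking the maximum over $\tilde{\pi}$ on the left-hand side yields $\mathrm{GAP}(\pi) \leq D \cdot r^{\mathrm{tan}}(\pi)$, as desired.

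There is no serious obstacle here, as the argument is a single one-shot estimate. The only step requiring a moment's care is the use of the normal-cone inequality: one must recognize that splitting $V(\pi)$ into the feasible-direction component $a^{\ast}$ and the residual $V(\pi) - a^{\ast}$ is exactly what allows the $a^{\ast}$ contribution to be discarded, leaving the tangent residual scaled by the diameter. Everything else is routine.
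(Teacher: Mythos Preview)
Your argument is correct and is exactly the standard proof: split off an element of the normal cone, discard its contribution via the normal-cone inequality, and bound the residual by Cauchy--Schwarz and the diameter. The paper does not prove this lemma itself but simply cites it from \citet{cai2022finite}, so there is nothing further to compare; your proof is the natural one and matches what one finds in that reference.
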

The gap function and the tangent residual are standard measures of proximity to Nash equilibrium; e.g., it has been used in \cite{cai2023doubly,abe2024slingshot}.

\paragraph{Problem setting.}
This study focuses on the online learning setting in which the following process repeats from iterations $t=1$ to $T$:
(i) Each player $i\in [N]$ chooses her strategy $\pi_i^t\in \mathcal{X}_i$, based on previously observed feedback;
(ii) Each player $i$ receives the (noisy) gradient vector $\widehat{\nabla}_{\pi_i}v_i(\pi^t)$ as feedback.
This study examines two feedback models: {\it full feedback} and {\it noisy feedback}.
In the full feedback setting, each player observes the perfect gradient vector $\widehat{\nabla}_{\pi_i}v_i(\pi^t) = \nabla_{\pi_i}v_i(\pi^t)$.
In the noisy feedback setting, each player's gradient feedback $\nabla_{\pi_i}v_i(\pi^t)$ is contaminated by an additive noise vector $\xi_i^t$, i.e., $\widehat{\nabla}_{\pi_i}v_i(\pi^t) = \nabla_{\pi_i}v_i(\pi^t) + \xi_i^t$, where $\xi_i^t\in \mathbb{R}^{d_i}$.
Throughout the paper, we assume that $\xi_i^t$ is the zero-mean and bounded-variance noise vector at each iteration $t$.

\paragraph{Payoff-perturbed learning algorithms.}
To facilitate the convergence in the online learning setting, recent studies have utilized a {\it payoff perturbation} technique, where payoff functions are perturbed by strongly convex functions \citep{sokota2022unified,liu2022power,abe2022mutationdriven}.
However, while the addition of these strongly convex functions leads learning algorithms to converge to a stationary point, this stationary point may be significantly distant from a Nash equilibrium.
Therefore, the magnitude of perturbation requires careful adjustment.
\cite{perolat2021poincare,abe2022last,abe2024slingshot} have introduced a scheme in which the magnitude is determined by the distance (or divergence function) from an anchoring strategy $\sigma_i$, which is periodically re-initialized.
Specifically, Adaptively Perturbed Mirror Descent (APMD) \citep{abe2024slingshot} perturbs each player's payoff function by a strongly convex divergence function $G(\pi_i, \sigma_i): \mathcal{X}_i\times \mathcal{X}_i\to [0, \infty)$, where the anchoring strategy $\sigma_i$ is periodically replaced by the current strategy $\pi_i^t$ every predefined iterations~$T_{\sigma}$.

Let us denote the number of updates of $\sigma_i$ up to iteration $t$ as $k(t)$, and the anchoring strategy after $k(t)$ updates as $\sigma_i^{k(t)}$.
Since $\sigma_i$ is overwritten every $T_{\sigma}$ iterations, we can write $k(t) = \lfloor (t-1)/T_{\sigma}\rfloor + 1$ and $\sigma_i^{k(t)} = \pi_i^{T_{\sigma}(k(t) - 1)+1}$.
Except for the payoff perturbation and the update of the anchoring strategy, APMD updates each player $i$'s strategy in the same way as standard Mirror Descent algorithms:
\begin{align*}
    \pi_i^{t+1} = \argmax_{x\in \mathcal{X}_i}\left\{\eta_t \left\langle \widehat{\nabla}_{\pi_i}v_i(\pi^t) - \mu \nabla_{\pi_i} G(\pi_i^t, \sigma_i^{k(t)}), x\right\rangle - D_{\psi}(x, \pi_i^t) \right\},
\end{align*}
where $\eta_t$ is the learning rate at iteration $t$, $\mu\in (0, \infty)$ is the {\it perturbation strength}, and $D_{\psi}(\pi_i, \pi_i') = \psi(\pi_i) - \psi(\pi_i') - \langle \nabla \psi(\pi_i'), \pi_i - \pi_i'\rangle$ is the Bregman divergence associated with a strictly convex function $\psi: \mathcal{X}_i\to \mathbb{R}$.
When both $G$ and $D_{\psi}$ is set to the squared $\ell^2$-distance, this algorithm can be equivalently written as:
\begin{align*}
\pi_i^{t+1} = \argmax_{x\in \mathcal{X}_i}\left\{\eta_t \left\langle \widehat{\nabla}_{\pi_i}v_i(\pi^t) - \mu \left(\pi_i^t - \sigma_i^{k(t)}\right), x\right\rangle - \frac{1}{2}\left\|x - \pi_i^t\right\|^2 \right\}.
\end{align*}
We refer to this version of APMD as Adaptively Perturbed Gradient Ascent (APGA).
\citet{abe2024slingshot} have shown that APGA exhibits the convergence rates of $\tilde{\mathcal{O}}(1/ \sqrt{T})$ and $\tilde{\mathcal{O}}(1 / T^{\frac{1}{10}})$ with full and noisy feedback, respectively.

\begin{figure}[t!]
\begin{algorithm}[H]
    \caption{GABP for player $i$.}
    \label{alg:gabp}
    \begin{algorithmic}[1]
    \REQUIRE{Learning rates $\{\eta_t\}_{t \ge 0}$, perturbation strength $\mu$, update interval $T_{\sigma}$, initial strategy $\pi_i^1$}
    \STATE $k\gets 1, ~\tau \gets 0$
    \STATE $\sigma_i^1 \gets \pi_i^1$
    \FOR{$t=1,2,\cdots, T$}
        \STATE Receive the gradient feedback $\widehat{\nabla}_{\pi_i}v_i(\pi^t)$
        \STATE Update the strategy by
        $$\pi_i^{t+1} = \argmax_{x\in \mathcal{X}_i}\left\{\eta_t \left\langle \widehat{\nabla}_{\pi_i}v_i(\pi^t) - \mu\frac{\sigma_i^k - \sigma_i^1}{k+1} - \mu \left(\pi_i^t - \sigma_i^k\right), x\right\rangle - \frac{1}{2}\left\|x - \pi_i^t\right\|^2 \right\}$$
        \STATE $\tau \gets \tau + 1$
        \IF{$\tau=T_{\sigma}$}
            \STATE $k\gets k+1, ~\tau\gets 0$
            \STATE $\sigma_i^k\gets \pi_i^{t+1}$
        \ENDIF
    \ENDFOR
    \end{algorithmic}
\end{algorithm}
\end{figure}

\section{Gradient ascent with boosting payoff perturbation}
This section proposes a novel payoff-perturbed learning algorithm, Gradient Ascent with Boosting Payoff Perturbation (GABP).
The pseudo-code of GABP is outlined in Algorithm \ref{alg:gabp}.
At each iteration $t\in [T]$, GABP receives the gradient feedback $\widehat{\nabla}_{\pi_i}v_i(\pi^t)$, and updates each player's strategy by the following update rule:
\begin{align}
\pi_i^{t+1} = \argmax_{x\in \mathcal{X}_i}\Bigg\{\eta_t \left\langle \widehat{\nabla}_{\pi_i}v_i(\pi^t) - \underbrace{\mu\frac{ \sigma_i^{k(t)}  - \sigma_i^1}{k(t)+1}}_{(*)}- \mu \left(\pi_i^t - \sigma_i^{k(t)}\right), x\right\rangle - \frac{1}{2}\left\|x - \pi_i^t\right\|^2 \Bigg\}.
\label{eq:gabp}
\end{align}
$\sigma_i$ is overwritten every $T_{\sigma}$ iterations, and thus $\sigma_i^{k(t)}$ is define as $\sigma_i^{k(t)} = \pi_i^{T_{\sigma}(k(t)-1)+1}$.
The term ($*$) in \eqref{eq:gabp} is our proposed additional perturbation term.
It shrinks as $k(t)$, the number of updates of $\sigma_i^{k(t)}$, increases.

For a more intuitive explanation of the proposed perturbation term, we present the following update rule, which is equivalent to \eqref{eq:gabp}:
\begin{align*}
    \pi_i^{t+1} = \argmax_{x\in \mathcal{X}_i}\left\{\eta_t \left\langle \widehat{\nabla}_{\pi}v_i(\pi^t) - \mu \left(\pi_i^t - \frac{k(t)\sigma_i^{k(t)} + \sigma_i^1}{k(t)+1}\right), x\right\rangle - \frac{1}{2} \left\|x - \pi_i^t\right\|^2\right\}.
\end{align*}
According this formula, it is evident that GABP perturbs the gradient vector $\widehat{\nabla}_{\pi}v_i(\pi^t)$ so that $\pi_i^t$ approaches $\frac{k(t)\sigma_i^{k(t)}+\sigma_i^1}{k(t)+1}$, instead of $\sigma_i^{k(t)}$.
This gradual evolution of the anchoring strategy in GABP, compared to $\sigma_i^{k(t)}$ itself, is anticipated to contribute to the stabilization of the learning dynamics.
There is a tradeoff between the shrinking speed of the term ($*$) and the stabilizing impact on the last-iterate convergence rate of GABP.
The shrinking speed of $1/(k(t)+1)$ achieves a faster convergence rate, and we believe that this represents the optimal balance for this trade-off.
We remark that the term ($*$) bears a resemblance to the update rule of the Accelerated Optimistic Gradient (AOG) algorithm \citep{cai2023doubly}. However, AOG differs in the sense that it actually modifies the proximal point in gradient ascent, instead of perturbing the gradient vector. A detailed comparison is discussed in Appendix \ref{sec:appx_aog}.

\section{Last-iterate convergence rates}
This section provides the last-iterate convergence rates of GABP.
Specifically, we derive $\tilde{\mathcal{O}}(1/T)$ and $\tilde{\mathcal{O}}(1/T^{\frac{1}{7}})$ rates for the full/noisy feedback setting, respectively.

\subsection{Full feedback setting}
First, we demonstrate the last-iterate convergence rate of GABP with {\it full feedback} where each player receives the perfect gradient vector as feedback at each iteration $t$, i.e., $\widehat{\nabla}_{\pi_i}v_i(\pi^t) = \nabla_{\pi_i}v_i(\pi^t)$.
Theorem \ref{thm:lic_rate_in_full_fb} shows that the last-iterate strategy profile $\pi^T$ updated by GABP converges to a Nash equilibrium with an $\tilde{\mathcal{O}}(1/T)$ rate in the full feedback setting.

\begin{theorem}
\label{thm:lic_rate_in_full_fb}
If we use the constant learning rate $\eta_t = \eta \in (0, \frac{\mu}{(L+\mu)^2})$ and the constant perturbation strength $\mu>0$, and set $T_{\sigma} = c \cdot \max(1, \frac{6 \ln 3(T+1)}{\ln (1 + \eta \mu)})$ for some constant $c\geq 1$, then the strategy $\pi^t$ updated by GABP satisfies for any $t\in [T]$:
\begin{align*}
\mathrm{GAP}(\pi^{t+1}) \leq D\cdot r^{\mathrm{tan}}(\pi^{t+1}) \leq  \frac{17c D^2\left(\frac{6 \ln 3(T+1)}{\ln (1 + \eta \mu)} + 1\right)}{t} \left(\mu + \frac{1 + \eta L}{\eta}\right).
\end{align*}
\end{theorem}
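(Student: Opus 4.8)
The plan is to decouple the dynamics into an \emph{inner loop} (the $T_{\sigma}$ gradient-ascent steps within one block of fixed anchor) and an \emph{outer loop} (the evolution of the anchoring strategies across blocks), and to argue that the boosting term $(*)$ turns the outer loop into a Halpern iteration whose residual decays at the optimal $\mathcal{O}(1/k)$ rate. Working in the profile space and suppressing the player index, I fix a block $k$ and write $\bar{\sigma}^{k}:=\frac{k\sigma^{k}+\sigma^{1}}{k+1}$, so that the update \eqref{eq:gabp} is the projected gradient-ascent step $\pi^{t+1}=\Pi_{\mathcal{X}}(\pi^{t}+\eta_t\tilde{V}_k(\pi^{t}))$ on the perturbed operator $\tilde{V}_k(\pi):=V(\pi)-\mu(\pi-\bar{\sigma}^{k})$, where $\Pi_{\mathcal{X}}$ is the Euclidean projection. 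By \eqref{eq:monotone_game} and \eqref{eq:L-smooth}, $\tilde{V}_k$ is $\mu$-strongly monotone and $(L+\mu)$-Lipschitz, so it has a unique stationary point $\pi^{*,k}$. Expanding $\|\pi^{t+1}-\pi^{*,k}\|^2$ via nonexpansiveness of $\Pi_{\mathcal{X}}$ and using $\eta<\mu/(L+\mu)^2$ gives the per-step contraction $\|\pi^{t+1}-\pi^{*,k}\|^2\le(1+\eta\mu)^{-1}\|\pi^{t}-\pi^{*,k}\|^2$; iterating over the block and invoking the prescribed $T_{\sigma}$ forces the endpoint error $\|\sigma^{k+1}-\pi^{*,k}\|$ below $(3(T+1))^{-3}D$.

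Next I would convert this into a residual bound at the actual iterate. The first-order optimality condition of the $\argmax$ in \eqref{eq:gabp} gives $g^{t}-\frac{\pi^{t+1}-\pi^{t}}{\eta_t}\in N_{\mathcal{X}}(\pi^{t+1})$, where $g^{t}$ is the perturbed gradient inside \eqref{eq:gabp}. Taking this element as the candidate in the definition of $r^{\mathrm{tan}}$ and using \eqref{eq:L-smooth} to trade $V(\pi^{t})$ for $V(\pi^{t+1})$ yields
\begin{align*}
r^{\mathrm{tan}}(\pi^{t+1}) \le \Big(L+\tfrac{1}{\eta}\Big)\|\pi^{t+1}-\pi^{t}\| + \mu\|\pi^{t}-\sigma^{k}\| + \mu\frac{\|\sigma^{k}-\sigma^{1}\|}{k+1}.
\end{align*}
The last term is at most $\mu D/(k+1)$; the first two are controlled by the inner-loop contraction together with $\|\pi^{t}-\sigma^{k}\|\le\|\pi^{t}-\pi^{*,k}\|+\|\pi^{*,k}-\bar{\sigma}^{k}\|+\|\bar{\sigma}^{k}-\sigma^{k}\|$ and $\|\bar{\sigma}^{k}-\sigma^{k}\|=\frac{\|\sigma^{1}-\sigma^{k}\|}{k+1}\le\frac{D}{k+1}$. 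Everything thus reduces to bounding $\|\pi^{*,k}-\bar{\sigma}^{k}\|$.

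The crux is the outer loop. Let $\Gamma(z)$ denote the solution of $V(\pi)-\mu(\pi-z)\in N_{\mathcal{X}}(\pi)$, so that $\pi^{*,k}=\Gamma(\bar{\sigma}^{k})$; a short monotonicity computation shows $\Gamma$ is nonexpansive and that its fixed points are exactly the Nash equilibria in $\Pi^{\ast}$. Because $\sigma^{k+1}=\pi^{*,k}$ up to the geometrically small inner-loop error above, the anchors obey $\bar{\sigma}^{k+1}=\frac{1}{k+2}\sigma^{1}+\frac{k+1}{k+2}\Gamma(\bar{\sigma}^{k})+(\text{error})$, which is precisely a Halpern iteration with weights $\lambda_k=1/(k+1)$ toward $\sigma^{1}$. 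Feeding this into the potential-function argument of \citet{cai2023doubly}, in the form customized to absorb the accumulated inner-loop error, gives $\|\pi^{*,k}-\bar{\sigma}^{k}\|=\|\Gamma(\bar{\sigma}^{k})-\bar{\sigma}^{k}\|=\mathcal{O}(D/k)$, and hence $r^{\mathrm{tan}}(\pi^{*,k})\le\mu\|\pi^{*,k}-\bar{\sigma}^{k}\|=\mathcal{O}(\mu D/k)$. I expect this potential-function estimate, and in particular the bookkeeping that prevents the summed inner-loop errors from degrading the $\mathcal{O}(1/k)$ rate, to be the main obstacle.

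Finally I would assemble the pieces. Combining the three terms of the residual bound gives $r^{\mathrm{tan}}(\pi^{t+1})=\mathcal{O}\big((\mu+L+\tfrac{1}{\eta})\,D/k(t)\big)$. Substituting $k(t)\ge t/T_{\sigma}$ together with $T_{\sigma}=c\max(1,\frac{6\ln 3(T+1)}{\ln(1+\eta\mu)})$ converts the $1/k(t)$ decay into the stated $\tilde{\mathcal{O}}(1/t)$ rate, collecting the coefficient $\mu+\frac{1+\eta L}{\eta}$ and one factor of $D$ from the residual; a concluding application of Lemma~\ref{lem:upper_bound_on_gap_function_by_tangent} supplies the second $D$ and the $\mathrm{GAP}$ bound, matching the constant $17c$ after the routine arithmetic is carried through.
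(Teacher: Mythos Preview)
Your proposal is correct and follows essentially the same architecture as the paper's proof: inner-loop geometric contraction to the perturbed stationary point (the paper's Lemma~\ref{lem:convergence_rate_of_inner_loop_in_full_fb}), a tangent-residual decomposition via the optimality condition of the update, and an outer-loop $\mathcal{O}(1/k)$ bound on $\|\pi^{*,k}-\bar\sigma^{k}\|$ (the paper's Lemmas~\ref{lem:convergence_rate_of_distance_between_sigma_k(t)_and_pimu_k(t)} and~\ref{lem:telescoping_inequality}).

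The one genuine difference is presentational: you explicitly recognize that $\Gamma(z)$ is the resolvent of the maximal monotone operator $\frac{1}{\mu}(-V+N_{\mathcal X})$, hence firmly nonexpansive with fixed-point set $\Pi^{\ast}$, and that the boosting term turns the anchor recursion into the Halpern iteration $\bar\sigma^{k+1}=\tfrac{1}{k+2}\sigma^{1}+\tfrac{k+1}{k+2}\Gamma(\bar\sigma^{k})+\varepsilon_k$, so that the Sabach--Shtern/Lieder residual rate $\|\bar\sigma^{k}-\Gamma(\bar\sigma^{k})\|=\mathcal{O}(1/k)$ applies once the $\varepsilon_k$ are summable. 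The paper does not name this mechanism; instead it verifies the same inequality by hand through the potential $P^{k}=k(k+1)\bigl(\tfrac12\|\pi^{\mu,k-1}-\hat\sigma^{k-1}\|^{2}+\langle\hat\sigma^{k}-\pi^{\mu,k-1},\pi^{\mu,k-1}-\hat\sigma^{k-1}\rangle\bigr)$ in Lemma~\ref{lem:telescoping_inequality}, which is precisely the Halpern potential specialized to $\Gamma$. Your framing is cleaner and makes transparent \emph{why} the $(*)$ term yields the accelerated rate; the paper's explicit computation, on the other hand, makes the error bookkeeping (the $(k+1)^{2}\|\pi^{\mu,l}-\sigma^{l+1}\|$ terms that your $\varepsilon_k$ hide) fully visible and confirms that the choice $T_{\sigma}\ge\frac{6\ln 3(T+1)}{\ln(1+\eta\mu)}$ is exactly what is needed to keep their sum below $D^{2}/(k+1)^{2}$.
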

This rate is competitive compared to the previous state-of-the-art rate of $\mathcal{O}(1/T)$ \citep{yoon2021accelerated,cai2023doubly}.
Note that the rate in Theorem \ref{thm:lic_rate_in_full_fb} holds for any fixed $\mu > 0$.

\subsection{Noisy feedback setting}
Next, we establish the last-iterate convergence rate in the {\it noisy feedback} setting, where each player $i$ observes a noisy gradient vector contaminated by an additive noise vector $\xi_i^t\in \mathbb{R}^{d_i}$: $\widehat{\nabla}_{\pi_i}v_i(\pi^t)= \nabla_{\pi_i}v_i(\pi^t)+ \xi_i^t$.
We assume that the noisy vector $\xi_i^t$ is zero-mean and its variance is bounded.
Formally, defining the sigma-algebra generated by the history of the observations as $\mathcal{F}_t := \sigma\left((\widehat{\nabla}_{\pi_i}v_i(\pi^1))_{i \in [N]}, \ldots, ( \widehat{\nabla}_{\pi_i}v_i(\pi^{t-1}))_{i \in [N]}\right)$, $\forall t\ge1$, the noisy vector $\xi_i^t$ is assumed to satisfy the following conditions:
\begin{assumption}\label{asm:noise}
$\xi^t_i$ satisfies the following properties for all $t\ge1$ and $i \in [N]$: (a) Zero-mean: $\mathbb{E}[\xi^t_i | \mathcal{F}_t] = (0, \cdots, 0)^{\top}$;  (b)  Bounded variance: $\mathbb{E}[\|\xi^t_i \|^2 | \mathcal{F}_t] \le C^2$ with some constant $C>0$.
\end{assumption}
Assumption \ref{asm:noise} is standard in online learning in games with noisy feedback \citep{mertikopoulos2019learning,hsieh2019ontheconvergence,abe2024slingshot} and stochastic optimization \citep{nemirovski2009robust,nedic2014stochastic}. 
Under Assumption \ref{asm:noise} and a decreasing learning rate sequence $\eta_t$, we can obtain a faster last convergence rate  $\tilde{\mathcal{O}}(1/T^{\frac{1}{7}})$ than $\tilde{\mathcal{O}}(1/T^{\frac{1}{10}})$ of APGA \citep{abe2024slingshot}.
\begin{theorem}
\label{thm:lic_rate_in_noisy_fb}
Let $\kappa = \frac{\mu}{2}, \theta = \frac{3\mu^2 + 8L^2}{2\mu}$.
Suppose that Assumption \ref{asm:noise} holds and $V(\pi) \leq \zeta$ for any $\pi \in \mathcal{X}$.
We also assume that $T_{\sigma}$ is set to satisfy $T_{\sigma} = c \cdot \max(T^{\frac{6}{7}}, 1)$ for some constant $c\geq 1$.
If we use the constant perturbation strength $\mu>0$ and the decreasing learning rate sequence $\eta_t = \frac{1}{\kappa (t - T_{\sigma}(k(t) - 1)) + 2\theta }$, then the strategy $\pi^{T+1}$ satisfies:
\begin{align*}
&\mathbb{E}\left[\mathrm{GAP}(\pi^{T+1})\right] = \mathcal{O}\left(\frac{\ln T}{T^{\frac{1}{7}}}\right).
\end{align*}
\end{theorem}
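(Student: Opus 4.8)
The plan is to route the noisy last-iterate bound through the period-wise perturbed equilibria, so that the noise enters only as a variance rather than as a non-vanishing floor. Let $K = k(T+1)$ be the final period, write $z_i^k = \frac{k\sigma_i^k + \sigma_i^1}{k+1}$ for the boosted anchor appearing in the equivalent form of \eqref{eq:gabp}, and let $\pi^{\ast,k}$ be the unique solution of the $\mu$-strongly-monotone perturbed variational inequality with operator $\pi \mapsto V(\pi) - \mu(\pi - z^k)$; existence and uniqueness follow from monotonicity of $V$ in \eqref{eq:monotone_game} together with the strong monotonicity supplied by the perturbation. The first step is a purely deterministic decomposition: splitting $V(\pi) = (V(\pi)-\mu(\pi-z^K)) + \mu(\pi-z^K)$ inside $\mathrm{GAP}(\pi^{T+1}) = \max_{\tilde\pi}\langle V(\pi^{T+1}),\tilde\pi-\pi^{T+1}\rangle$, using the optimality condition of the perturbed problem at $\pi^{\ast,K}$, the $L$-smoothness \eqref{eq:L-smooth}, and the bound $V(\pi)\le\zeta$ to control $\|V(\pi^{\ast,K})-\mu(\pi^{\ast,K}-z^K)\|$, gives
\[ \mathrm{GAP}(\pi^{T+1}) \le D\, r^{\mathrm{tan}}(\pi^{T+1}), \qquad \mathrm{GAP}(\pi^{T+1}) \lesssim \big\|\pi^{T+1}-\pi^{\ast,K}\big\| + \mu D\,\big\|\pi^{\ast,K}-z^K\big\|, \]
where $\lesssim$ hides constants in $L,\mu,D,\zeta$ and the second display uses $r^{\mathrm{tan}}(\pi^{\ast,K}) \le \mu\|\pi^{\ast,K}-z^K\|$ (via Lemma \ref{lem:upper_bound_on_gap_function_by_tangent}). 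The crucial point is that this bound contains no un-cancelled noise vector $\xi^t$, so after taking $\mathbb{E}[\cdot]$ only a variance term survives.

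The second step controls the within-period term $\mathbb{E}\|\pi^{T+1}-\pi^{\ast,K}\|$. On each period the anchor is frozen, so the iteration is exactly projected stochastic gradient ascent on a $\mu$-strongly-monotone operator with the reset stepsize $\eta_t = 1/(\kappa\tau + 2\theta)$, $\tau = t - T_\sigma(k(t)-1)$. Here the noise-customized potential of \citet{cai2023doubly} enters: I would prove a one-step inequality $\mathbb{E}[\Phi_{t+1}\mid\mathcal{F}_t] \le (1-2\kappa\eta_t)\Phi_t + \eta_t^2(\mathrm{const})\,C^2$, where the specific values $\kappa=\mu/2$ and $\theta=(3\mu^2+8L^2)/(2\mu)$ are exactly what forces the $L$-smoothness cross-terms to be absorbed by the strong-monotonicity gain — the stochastic analogue of the stepsize condition $\eta<\mu/(L+\mu)^2$ in Theorem \ref{thm:lic_rate_in_full_fb}. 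Assumption \ref{asm:noise}(a) kills the linear noise term in conditional expectation and (b) bounds the quadratic one; unrolling the recursion over a full period of length $T_\sigma$ then yields a per-period error $\mathbb{E}\|\sigma^{k+1}-\pi^{\ast,k}\|^2 = \tilde{\mathcal{O}}(C^2/(\mu^2 T_\sigma))$, i.e. of order $\delta := C\sqrt{\ln T}/(\mu\sqrt{T_\sigma})$.

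The third step, which I expect to be the main obstacle, is the across-period analysis of the bias $\mu\|\pi^{\ast,K}-z^K\|$. Since the end-of-period iterate equals $\pi^{\ast,k}$ up to the error $\delta$ above, the deterministic skeleton of the anchors is precisely a Halpern iteration: $z^{k+1}=\frac{1}{k+2}\sigma^1+\frac{k+1}{k+2}\,R(z^k)$, where $R$ is the (nonexpansive) solution map of the perturbed problem. This is exactly what the boosting term $(*)$ implements, and it is why GABP can drive the bias down at the accelerated rate $\mathcal{O}(1/k)$ rather than the slower decay of the un-boosted scheme of \citet{abe2024slingshot}. The difficulty is that this anchoring acceleration also amplifies the per-period stochastic error: propagating the perturbation through the Halpern weights shows that the position deviation of the anchors grows like $K\delta$, and because the residual $\mu\|\pi^{\ast,K}-z^K\|$ is itself a first-order (difference) quantity of size $1/K$, the error accounting produces a noise contribution to $\mathbb{E}[r^{\mathrm{tan}}(\pi^{\ast,K})]$ of order $K^2\delta$ with $K=T/T_\sigma$. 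Designing the global potential so that this acceleration survives the noise, and pinning down the exact amplification power, is the technical heart of the argument.

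Finally, I would assemble the pieces. Taking expectations in the decomposition, the first term is $\tilde{\mathcal{O}}(\delta)=\tilde{\mathcal{O}}(1/T^{3/7})$ under the stated schedule, the noise-free bias contributes $\mathcal{O}(1/K)=\mathcal{O}(T_\sigma/T)$, and the amplified noise contributes $\mathcal{O}(K^2\delta)$. Requiring the amplified-noise term to be no larger than the $1/K$ bias, i.e. $K^3\delta \lesssim 1$, is equivalent to $T_\sigma \gtrsim T^{6/7}$; choosing $T_\sigma = c\,T^{6/7}$ is precisely the threshold at which both the bias and the amplified noise become $\tilde{\mathcal{O}}(T^{-1/7})$, while the within-period term $\tilde{\mathcal{O}}(T^{-3/7})$ is dominated. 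This yields $\mathbb{E}[\mathrm{GAP}(\pi^{T+1})]=\mathcal{O}(\ln T / T^{1/7})$. The logarithmic factor is inherited from the $\ln T$ in the per-period stochastic-gradient floor.
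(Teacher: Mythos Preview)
Your plan is correct and matches the paper's proof: the same decomposition of $\mathrm{GAP}(\pi^{T+1})$ into a within-period term $\|\pi^{T+1}-\pi^{\ast,K}\|$ and an across-period bias $\mu\|\pi^{\ast,K}-z^K\|$, the same within-period stochastic-approximation recursion under the reset stepsize (the paper's Lemma~\ref{lem:convergence_rate_of_inner_loop_in_noisy_fb}; the specific $\kappa,\theta$ arise exactly as you describe, absorbing the $L$-smoothness cross-terms into the strong-monotonicity gain), and the same final balancing $T_\sigma=T^{6/7}$. Your Halpern-iteration reading of the anchor dynamics is apt: the paper's across-period potential $P^{k}$ is, up to re-indexing, the standard Halpern/anchoring potential for the nonexpansive resolvent $R$, and the property you single out---that it contains no gradient estimate $\widehat V$---is precisely why the deterministic telescoping lemma (Lemma~\ref{lem:telescoping_inequality}) carries over verbatim from the full-feedback analysis and can simply be averaged.

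One correction to your step 3: the heuristic amplification $K^2\delta$ is not what the potential route actually produces. The telescoping error in Lemma~\ref{lem:telescoping_inequality} accumulates as $\sum_{l\le k}(l{+}1)^2\|\sigma^{l+1}-\pi^{\ast,l}\|\cdot D \approx K^3\delta$, but this bounds the $K^2$-scaled \emph{squared} residual, so the noise floor on $\|\pi^{\ast,K}-z^K\|$ is $\sqrt{K\delta}$, not $K^2\delta$ (this is the content of Lemma~\ref{lem:convergence_rate_of_expected_distance_between_sigma_k(t)_and_pimu_k(t)}). A direct comparison-to-clean argument through the nonexpansive $R$, as your Halpern framing suggests, would in fact give the sharper $O(K\delta)$. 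Fortunately your $K^2\delta\le 1/K$ and the paper's $\sqrt{K\delta}\le 1/K$ are the \emph{same} condition $K^3\delta\le O(1)$, so the threshold $T_\sigma\ge T^{6/7}$ and the final rate $\tilde{\mathcal O}(T^{-1/7})$ are unaffected; just be aware that the ``extra factor of $K$ because the residual is a difference'' reasoning will not go through as stated when you write out the details.
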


\subsection{Proof sketch of Theorems \ref{thm:lic_rate_in_full_fb} and \ref{thm:lic_rate_in_noisy_fb}}
\label{sec:proof_sketch_of_lic_rate}
This section outlines the sketch of the proofs for Theorems~\ref{thm:lic_rate_in_full_fb} and \ref{thm:lic_rate_in_noisy_fb}.
The complete proofs are placed in Appendix~\ref{sec:appx_proof_of_lic_rate_in_full_fb} and \ref{sec:appx_proof_of_lic_rate_in_noisy_fb}.

We define the stationary point $\pi^{\mu, k(t)}$, which satisfies the following condition: $\forall i\in [N]$,
\begin{align*}
\pi_i^{\mu, k(t)} = \argmax_{x\in \mathcal{X}_i}\left\{v_i(x, \pi_{-i}^{\mu, k(t)}) - \frac{\mu}{2} \left\|x -  \hat{\sigma}_i^{k(t)}\right\|^2 \right\},
\end{align*}
where $\hat{\sigma}_i^{k(t)} = \frac{k(t)\sigma_i^{k(t)}+\sigma_i^1}{k(t)+1}$.
The primary technical challenge in deriving the last-iterate convergence rates lies in the construction of the following potential function $P^{k(t)}$, which can be utilized in the proofs for both full and noisy feedback settings:
\begin{align*}
P^{k(t)} &\!:=\! k(t)(k(t)+1)\!\left(\!\frac{\left\| \pi^{\mu, k(t)-1} - \hat{\sigma}^{k(t)-1} \right\|^2}{2} + \left\langle \hat{\sigma}^{k(t)} - \pi^{\mu, k(t)-1}, \pi^{\mu, k(t)-1} - \hat{\sigma}^{k(t)-1} \right\rangle\!\right).
\end{align*}
Specifically, we demonstrate that this potential function is approximately non-increasing regardless of the presence of noise.
Although the potential function $P^{k(t)}$ is inspired by one in \citet{cai2023doubly}, their potential function contains the term $\eta^2\sum_{i\in [N]}\left\| \widehat{\nabla} v_i(\pi^t) - \widehat{\nabla} v_i(\pi^{t-\frac{1}{2}})\right\|^2$, which could have a high value in the noisy feedback setting even if $\pi^t=\pi^{t-\frac{1}{2}}$ holds\footnote{The comparison of the potential function of \citet{cai2023doubly} with ours can be found in Appendix \ref{sec:appx_aog}.}.
This complicates providing a last-iterate convergence result for the noisy feedback setting via their potential function.
In contrast, our potential function $P^{k(t)}$ does not include the term dependent on $\widehat{\nabla}v_i(\pi^t)$.
As a result, we can provide the last-iterate convergence rates even for the noisy feedback setting.

Furthermore, compared to \citet{abe2024slingshot}, our new potential function $P^{k(t)}$ allows us to improve the convergence rate with respect to the distance between the $k(t)$-th anchoring strategy profile $\sigma^{k(t)}$ and the corresponding stationary point $\pi^{\mu, k(t)}$, i.e., $\left\| \pi^{\mu, k(t)} - \sigma^{k(t)} \right\|$.
In fact, our analysis enhances this rate from $\mathcal{O}(1/\sqrt{k(t)})$ to $\mathcal{O}(1/k(t))$.
This improvement leads to faster last-iterate convergence results, further demonstrating the effectiveness of our approach.

\paragraph{(1) Potential function for bounding the distance between $\pi^{\mu, k(t)}$ and $\sigma^{k(t)}$.}
As mentioned above, our main technical contribution is proving that $P^{k(t)}$ is approximately non-increasing (as shown in Lemma \ref{lem:telescoping_inequality}).
That is, we have for any $t\geq 1$ such that $k(t)\geq 2$:
\begin{align}
P^{k(t)+1} \leq P^{k(t)} + (k(t)+1)^2 \cdot \mathcal{O}\left(\left\| \pi^{\mu, k(t)}-\sigma^{k(t)+1}\right\| + \left\|\pi^{\mu, k(t)-1} - \sigma^{k(t)}\right\|\right).
\label{eq:telescoping_inequality}
\end{align}
By telescoping of \eqref{eq:telescoping_inequality} and the first-order optimality condition for $\pi^{\mu, k(t)}$, we can derive the following upper bound on the distance between $\pi^{\mu, k}$ and $\hat{\sigma}^{k(t)}$:
\begin{align*}
&\frac{(k(t)+1)(k(t)+2)}{2}\left\| \pi^{\mu, k(t)} - \hat{\sigma}^{k(t)} \right\|^2 \leq \mathcal{O}(1) + (k(t)+1)^2 \sum_{l=1}^{k(t)} \mathcal{O}\left(\left\| \pi^{\mu, l}-\sigma^{l+1} \right\|\right).
\end{align*}
Applying the definition of $\hat{\sigma}^{k(t)}$ and Cauchy-Schwarz inequality to this inequality, we obtain:
\begin{align}
&\!\!\!\!\!\!\left\| \pi^{\mu, k(t)} \!- \sigma^{k(t)} \right\|^2 \!\leq\! \frac{\mathcal{O}\!\left(\left\| \pi^{\mu, k(t)} - \sigma^{k(t)} \right\|\right)}{k(t)+1} \!+ \mathcal{O}\!\left(\frac{1}{(k(t)+1)^2}\right) \!+\! \sum_{l=1}^{k(t)} \mathcal{O}\!\left(\left\| \pi^{\mu, l}-\sigma^{l+1} \right\|\right)\!.
\label{eq:convergence_rate_of_squared_distance_between_sigma_k(t)_and_pimu_k(t)}
\end{align}

Note that the non-increasing property of our potential function, as described in \eqref{eq:telescoping_inequality}, holds even in the noisy feedback setting.
This implies that a similar proof technique for deriving \eqref{eq:convergence_rate_of_squared_distance_between_sigma_k(t)_and_pimu_k(t)} can be utilized to provide last-iterate convergence results both in full and noisy feedback settings.

\paragraph{(2) Convergence rate of $\sigma^{k(t)+1}$ to the stationary point $\pi^{\mu, k(t)}$.}
Leveraging the strong convexity of the perturbation payoff function, $\frac{\mu}{2}\|x -  \hat{\sigma}_i^{k(t)}\|^2$, we show that $\pi^t$ converges to $\pi^{\mu, k(t)}$ exponentially fast in the full feedback setting (as shown in Lemma \ref{lem:convergence_rate_of_inner_loop_in_full_fb}).
Specifically, we have for any $t\geq 1$:
\begin{align}
\begin{aligned}
&\left\|\pi^{\mu, k(t)} - \pi^{t+1}\right\|^2 \leq \left(\frac{1}{1+\eta\mu}\right)^{t-(k(t)-1)T_{\sigma}}\left\|\pi^{\mu, k(t)} - \sigma^{k(t)}\right\|^2.
\end{aligned}
\label{eq:convergence_rate_of_inner_loop_in_full_fb}
\end{align}
By using \eqref{eq:convergence_rate_of_inner_loop_in_full_fb} and the assumption that $T_{\sigma} \geq \frac{6 \ln 3(T+1)}{\ln (1 + \eta \mu)}$ in the full feedback setting, we can easily show that $\left\| \pi^{\mu, l}-\sigma^{l+1}\right\| = \left\|\pi^{\mu, l} - \pi^{T_{\sigma}l+1}\right\| \leq \mathcal{O}(k(t)^{-3})$ for any $l \leq k(t)$.
Hence, from \eqref{eq:convergence_rate_of_squared_distance_between_sigma_k(t)_and_pimu_k(t)}, we can derive the following convergence rate of the distance between $\pi^{\mu, k(t)}$ and $\sigma^{k(t)}$ with respect to $k(t)$ (as shown in Lemma \ref{lem:convergence_rate_of_distance_between_sigma_k(t)_and_pimu_k(t)}):
\begin{align}
\left\| \pi^{\mu, k(t)} - \sigma^{k(t)} \right\| \leq \mathcal{O}(1/k(t)).
\label{eq:convergence_rate_of_distance_between_sigma_k(t)_and_pimu_k(t)}
\end{align}

\paragraph{(3) Decomposition of the gap function of the last-iterate strategy profile $\pi^{T+1}$.}
Let us define $K:=\lfloor T/T_\sigma \rfloor$.
From Cauchy-Schwarz inequality and Lemma \ref{lem:upper_bound_on_gap_function_by_tangent}, we can decompose the gap function $\mathrm{GAP}(\pi^{T+1})$ as follows:
\begin{align*}
    \mathrm{GAP}(\pi^{T+1}) &\leq \mathrm{GAP}(\pi^{\mu, K}) + \mathcal{O}\left(\left\|\pi^{\mu, K} - \pi^{T+1}\right\|\right) \\
    &\leq D\cdot \min_{a\in N_{\mathcal{X}}(\pi^{\mu, K})} \left\| -V(\pi^{\mu, K}) + a\right\| + \mathcal{O}\left(\left\|\pi^{\mu, K} - \pi^{T+1}\right\|\right).
\end{align*}
From the first-order optimality condition for $\pi^{\mu, K}$, we can see that $V(\pi^{\mu, K}) - \mu (\pi^{\mu, K} - \hat{\sigma}^K) \in N_{\mathcal{X}}(\pi^{\mu, K})$.
Thus, from the triangle inequality and $L$-smoothness of the gradient operator in \eqref{eq:L-smooth}, the gap function $\mathrm{GAP}(\pi^{T+1})$ can be bounded as:
\begin{align}
\mathrm{GAP}(\pi^{T+1}) &\leq \mathcal{O}(1/K) + \mathcal{O}\left( \left\|\pi^{\mu, K} - \sigma^K \right\|\right) + \mathcal{O}\left(\left\|\pi^{\mu, K} - \pi^{T+1}\right\|\right).
\label{eq:upper_bound_on_gap_function_for_pi_T+1}
\end{align}

\paragraph{(4) Putting it all together: last-iterate convergence rate of $\pi^{T+1}$.}
By combining \eqref{eq:convergence_rate_of_inner_loop_in_full_fb}, \eqref{eq:convergence_rate_of_distance_between_sigma_k(t)_and_pimu_k(t)}, and \eqref{eq:upper_bound_on_gap_function_for_pi_T+1}, it holds that $\mathrm{GAP}(\pi^{T+1}) \leq \mathcal{O}(1/K)$ in the full feedback setting.
Hence, given $K=\lfloor T/T_\sigma \rfloor$, we can deduce that $\mathrm{GAP}(\pi^{T+1}) \leq \mathcal{O}(T_{\sigma}/T)$.
Finally, taking $T_{\sigma} = \Theta(\ln T)$, we obtain the upper bound on the gap function for the full feedback setting:
$\mathrm{GAP}(\pi^{T+1}) \leq \mathcal{O}(\ln T/T)$.
Note that using a similar proof technique, we can also derive an upper bound on the tangent residual for the full feedback setting.

In the context of the noisy feedback setting, we achieve the following convergence rate to $\pi^{\mu, k(t)}$ instead of \eqref{eq:convergence_rate_of_inner_loop_in_full_fb} (as shown in Lemma \ref{lem:convergence_rate_of_inner_loop_in_noisy_fb}):
\begin{align}
&\mathbb{E}\left[\left\|\pi^{\mu, k(t)} - \pi^{t+1}\right\|^2 \right] \leq \mathcal{O}\left(\frac{\ln t}{t - (k(t) - 1)T_{\sigma}}\right).
\label{eq:convergence_rate_of_inner_loop_in_noisy_fb}
\end{align}
By using \eqref{eq:convergence_rate_of_inner_loop_in_noisy_fb} and the assumption that $T_{\sigma} = \Theta(T^{\frac{6}{7}})$, we can still derive \eqref{eq:convergence_rate_of_distance_between_sigma_k(t)_and_pimu_k(t)} and \eqref{eq:upper_bound_on_gap_function_for_pi_T+1} for the noisy feedback setting.
Therefore, we conclude that:
$\mathbb{E}\left[\mathrm{GAP}(\pi^{T+1})\right] \leq \mathcal{O}(\ln T/T^{\frac{1}{7}})$.

\section{Individual regret bound}
In this section, we present an upper bound on an individual regret for each player.
Specifically, our study examines two performance measures: the {\it external regret} and the {\it dynamic regret} \citep{zinkevich2003online}.
The external regret is a conventional measure in online learning.
In online learning in games, the external regret for player $i$ is defined as the gap between the player's realized cumulative payoff and the cumulative payoff of the best fixed strategy in hindsight:
\begin{align*}
\mathrm{Reg}_i(T) := \max_{x \in \mathcal{X}_i}\sum_{t=1}^T\left(v_i(x, \pi_{-i}^t) - v_i(\pi^t)\right).
\end{align*}
The dynamics regret is a much stronger performance metric, which is given by:
\begin{align*}
\mathrm{DynamicReg}_i(T) := \sum_{t=1}^T\left(\max_{x\in \mathcal{X}_i}v_i(x, \pi_{-i}^t) - v_i(\pi^t)\right).
\end{align*}
We show in Theorem \ref{thm:individual_regret_bound} that the individual regret is at most $\mathcal{O}\left((\ln T)^2\right)$ if each player $i\in [N]$ plays according to GABP in the full feedback setting.
The proof is given in Appendix \ref{sec:appx_proof_of_individual_regret_bound}.
\begin{theorem}
\label{thm:individual_regret_bound}
In the same setup of Theorem \ref{thm:lic_rate_in_full_fb}, we have for any player $i\in [N]$ and $T\geq 2$:
\begin{align*}
\mathrm{Reg}_i(T) \leq \mathrm{DynamicReg}_i(T) &\leq \mathcal{O}\left((\ln T)^2\right).
\end{align*}
\end{theorem}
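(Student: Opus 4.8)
The plan is to reduce both regret measures to a single sum of gap functions along the trajectory $(\pi^t)_t$ and then invoke the last-iterate rate of Theorem~\ref{thm:lic_rate_in_full_fb}. First I would handle the reduction from external to dynamic regret: for every fixed comparator $x\in\mathcal{X}_i$ and every $t$ we have $v_i(x,\pi_{-i}^t)\leq\max_{y\in\mathcal{X}_i}v_i(y,\pi_{-i}^t)$, so summing over $t$ and then maximizing over $x$ gives $\mathrm{Reg}_i(T)\leq\mathrm{DynamicReg}_i(T)$. It therefore suffices to bound the dynamic regret, and the bound I obtain will be identical for every player.

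Next I would bound each per-round term of the dynamic regret by the gap function of the current profile. Restricting the monotonicity inequality \eqref{eq:monotone_game} to variations of a single player's coordinate shows that each $v_i$ is concave in its own argument $\pi_i$, so the first-order inequality yields $v_i(x,\pi_{-i}^t)-v_i(\pi^t)\leq\langle\nabla_{\pi_i}v_i(\pi^t),x-\pi_i^t\rangle$ for all $x\in\mathcal{X}_i$. Since $\mathrm{GAP}(\pi^t)=\sum_{j\in[N]}\max_{\tilde\pi_j\in\mathcal{X}_j}\langle\nabla_{\pi_j}v_j(\pi^t),\tilde\pi_j-\pi_j^t\rangle$ is a sum of terms that are each nonnegative (choose $\tilde\pi_j=\pi_j^t$), the single-player maximum is dominated by the full gap:
\begin{align*}
\max_{x\in\mathcal{X}_i}v_i(x,\pi_{-i}^t)-v_i(\pi^t)\leq\max_{x\in\mathcal{X}_i}\langle\nabla_{\pi_i}v_i(\pi^t),x-\pi_i^t\rangle\leq\mathrm{GAP}(\pi^t).
\end{align*}
Summing over $t=1,\dots,T$ gives $\mathrm{DynamicReg}_i(T)\leq\sum_{t=1}^T\mathrm{GAP}(\pi^t)$.

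Finally I would substitute the last-iterate rate. Under the hypotheses of Theorem~\ref{thm:lic_rate_in_full_fb}, and noting that $\frac{6\ln 3(T+1)}{\ln(1+\eta\mu)}+1=\Theta(\ln T)$ for constant $\eta,\mu$, the theorem gives $\mathrm{GAP}(\pi^{t+1})\leq C\,\ln T/t$ for all $t\in[T]$, where $C$ absorbs the constants $c,D,\mu,\eta,L$. The initial term $\mathrm{GAP}(\pi^1)$ is a finite constant because $V$ is continuous on the compact set $\mathcal{X}$, and the remaining terms form a harmonic sum:
\begin{align*}
\sum_{t=1}^T\mathrm{GAP}(\pi^t)\leq\mathrm{GAP}(\pi^1)+C\,\ln T\sum_{t=2}^T\frac{1}{t-1}=\mathcal{O}(1)+\mathcal{O}(\ln T)\cdot\mathcal{O}(\ln T)=\mathcal{O}\!\left((\ln T)^2\right).
\end{align*}
Chaining the three steps establishes $\mathrm{Reg}_i(T)\leq\mathrm{DynamicReg}_i(T)\leq\mathcal{O}((\ln T)^2)$.

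I expect the crux of the argument to be the second step rather than any lengthy computation: one must recognize that the right object to sum is the gap function, verify that the per-player best-response suboptimality is controlled by the full gap through nonnegativity of each summand, and confirm that concavity in the own coordinate is inherited from operator monotonicity. Once this is in place the result is essentially immediate from the already-proven $\tilde{\mathcal{O}}(1/T)$ last-iterate rate together with the divergence of the harmonic series; the only bookkeeping is to track the $\Theta(\ln T)$ factor concealed in the constant of Theorem~\ref{thm:lic_rate_in_full_fb}, which is precisely what upgrades a naive $\mathcal{O}(\ln T)$ into the stated $\mathcal{O}((\ln T)^2)$.
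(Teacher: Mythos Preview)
Your proposal is correct and follows essentially the same route as the paper: bound the dynamic regret by $\sum_t \mathrm{GAP}(\pi^t)$, then plug in the $\mathcal{O}(\ln T / t)$ rate from Theorem~\ref{thm:lic_rate_in_full_fb} and sum the harmonic series. The only cosmetic difference is that the paper cites Lemma~2 of \citet{cai2022finite} to pass from per-player suboptimality to $\mathrm{GAP}$, whereas you derive this inline from own-coordinate concavity (itself a consequence of \eqref{eq:monotone_game}) together with the separability of the gap over $\mathcal{X}=\prod_i\mathcal{X}_i$; both arguments yield the same inequality.
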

This regret bound is significantly superior to the $\mathcal{O}(\sqrt{T})$ regret bound of the Optimistic Gradient (OG) algorithm, and it is slightly inferior to the $\mathcal{O}(\ln T)$ regret bound of AOG \citep{cai2023doubly}.

\section{Experiments}
\label{sec:experiments}
In this section, we present the empirical results of our GABP, comparing its performance with APGA \citep{abe2024slingshot}, OG  \citep{daskalakis2017training,wei2020linear}, and AOG \citep{cai2023doubly}.
We conduct experiments on two classes of concave-convex games.
One is random payoff games, which are two-player zero-sum normal-form games with payoff matrices of size $d$.
Each player's strategy space is represented by the $d$-dimensional probability simplex, i.e., $\mathcal{X}_1 = \mathcal{X}_2 = \Delta^{d}$. 
All entries of the payoff matrix are drawn independently from a uniform distribution over the interval $[-1, 1]$. 
We set $d = 50$ and the initial strategies are set to $\pi_1^1 = \pi_2^1 = \frac{1}{d} \mathbf{1}$.
The other is a {\it hard concave-convex game} \citep{ouyang2022lower}, formulated as the following max-min optimization problem:
$\max_{x \in \mathcal{X}_1} \min_{y \in \mathcal{X}_2} f(x, y)$,
where $ f(x, y) = -\frac{1}{2} x^\top H x + h^\top x + \langle A x - b, y \rangle $.
Following the setup in \citet{cai2023doubly}, we choose $\mathcal{X}_1 = \mathcal{X}_2 = [-200, 200]^d$ with $d=100$.
The precise terms of $H\in \mathbb{R}^{d\times d}, A\in \mathbb{R}^{d\times d}, b\in \mathbb{R}^d$, and $h\in \mathbb{R}^d$ are provided in Appendix \ref{app:hard_concave_convex_game}.
All algorithms are executed with initial strategies $\pi_1^1 = \pi_2^1 = \frac{1}{d} \mathbf{1}$.
The detailed hyperparameters of the algorithms, tuned for best performance, are shown in Table \ref{tab:hyperparameters} in Appendix \ref{sec:appx_hyperparameter}.

Figure~\ref{fig:gap} illustrates the logarithmic $\mathrm{GAP}$ values per iteration for each of the two games under each type of feedback.
For the random payoff games with full or noisy feedback, $50$ payoff matrices are generated using different random seeds.
Likewise, for the hard concave-convex games, we use $10$ different random seeds.
We assume that the noise vector $\xi_i^t$ is generated from the multivariate Gaussian distribution $\mathcal{N}(0,~ 0.1^2\mathbf{I})$ in an i.i.d. manner for both games.
In the former game with full feedback, GABP performs almost as well as the others. With noisy feedback, GABP outperforms the others, although the margin from APGA is slight. 
In the latter game, under the full feedback setting, GABP is competitive against AOG, whereas, under the noisy feedback setting, it demonstrates a substantial advantage over the others.

Figure~\ref{fig:affine_dynamic_regret} illustrates the dynamic regret in the hard concave-convex game. 
GABP exhibits lower regret than APGA and OG with both feedback, demonstrating its efficiency and robustness.
Note that APGA and OG exhibit almost identical trajectories with full feedback, with their plots overlapping completely. 
In addition, GABP achieves competitive regret in comparison to AOG.

\section{Related literature}
No-regret learning algorithms have been extensively studied with the intent of achieving key objectives such as average-iterate convergence or last-iterate convergence.
Recently, learning algorithms introducing optimism \citep{rakhlin2013online,rakhlin2013optimization}, such as optimistic Follow the Regularized Leader~\citep{shalev2006convex} and optimistic Mirror Descent~\citep{zhou2017mirror,hsieh2021adaptive}, have been introduced to admit last-iterate convergence in a broad spectrum of game settings.
These optimistic algorithms with full feedback have been shown to achieve last-iterate convergence in various classes of games, including bilinear games \citep{daskalakis2017training, daskalakis2018last, liang2019interaction, de2022convergence}, cocoercive games \citep{lin2020finite}, and saddle point problems \citep{daskalakis2018limit, mertikopoulos2018optimistic, golowich2020last, wei2020linear, lei2021last, yoon2021accelerated, Lee2021FastExtraGrad, cevher2023min}.
Recent studies have provided finite convergence rates for monotone games \citep{golowich2020tight, cai2022finite, cai2022tight, gorbunov2022last, cai2023doubly}.

\begin{figure}[t!]
    \centering
    \includegraphics[width=1.0\textwidth]{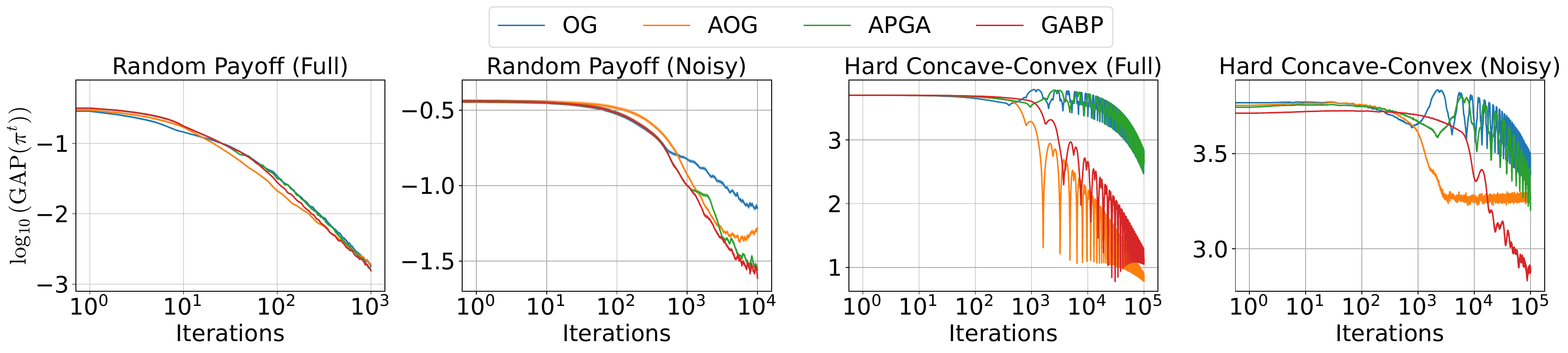}
    \caption{
    The gap function for $\pi^t$ of GABP, APGA, OG, and AOG with full and noisy feedback. The shaded area represents the standard errors.
    }
    \label{fig:gap}
\end{figure}

\begin{figure}[t!]
    \centering
    \includegraphics[width=1.0\textwidth]{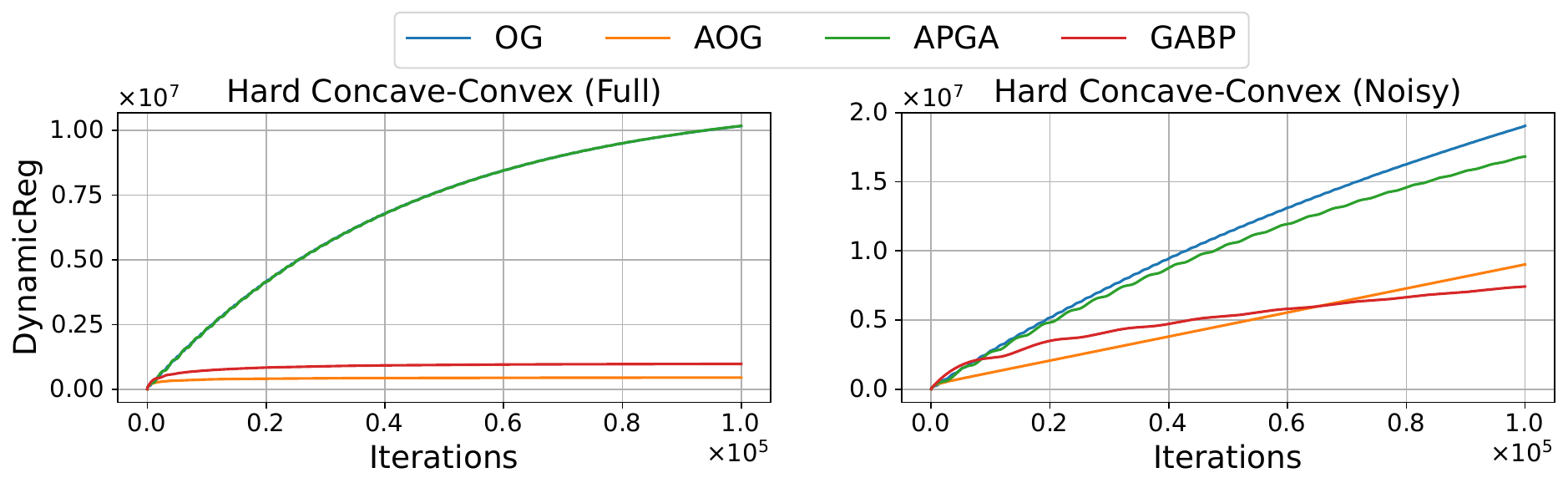}
    \caption{
    Dynamic regret for GABP, APGA, OG, and AOG with full and noisy feedback.
    }
    \label{fig:affine_dynamic_regret}
\end{figure}

Compared to the full feedback setting, there are significant challenges in learning with noisy feedback.
For example, a learning algorithm must estimate the gradient from feedback that is contaminated by noise.
Despite the challenge, a vast literature has successfully achieved last-iterate convergence with noisy feedback in specific classes of games, including potential games~\citep{Heliou2017Learning}, strongly monotone games~\citep{giannou2021convergence,giannou2021Survival}, and two-player zero-sum games~\citep{abe2022last}.
These results have often leveraged unique structures of their payoff functions, such as strict (or strong) monotonicity~\citep{bravo2018bandit,kannan2019optimal,hsieh2019ontheconvergence,Anagnostides2022frequency} and strict variational stability~\citep{mertikopoulos2019learning,mertikopoulos2018optimistic,mertikopoulos2022learning,azizian2021LastIterate}.
Without these restrictions, convergence is mainly demonstrated in an asymptotic manner, with no quantification of the rate~\citep{hsieh2020explore,hsieh2022no,abe2022last}.
Consequently, an exceedingly large number of iterations might be necessary to reach an equilibrium.

There have been several studies focusing on payoff-regularized learning, where each player's payoff or utility function is perturbed or regularized via strongly convex functions~\citep{cen2021fast,cen2022faster,pmlr-v206-pattathil23a}.
Previous studies have successfully achieved convergence to stationary points, which are approximate equilibria.
For instance, \citet{sokota2022unified} have demonstrated that their perturbed mirror descent algorithm converges to a quantal response equilibrium~\citep{mckelvey1995quantal, mckelvey1998quantal}. Similar results have been obtained with the Boltzmann Q-learning dynamics~\citep{Tuyls2006} and penalty-regularized dynamics~\citep{coucheney2015penalty} in continuous-time settings~\citep{leslie2005individual,abe2022mutationdriven,hussain2023asymptotic}.
To ensure convergence toward a Nash equilibrium of the underlying game, the magnitude of perturbation requires careful adjustment.
Several learning algorithms have been proposed to gradually reduce the perturbation strength $\mu$ in response to this \citep{bernasconi2022last,liu2022power,cai2023uncoupled}.
These include well-studied methods such as iterative Tikhonov regularization \citep{facchinei2003finite,koshal2010single,koshal2013regularized,yousefian2017smoothing,tatarenko2019learning}.
Alternatively, \citet{perolat2021poincare} and \citet{abe2022last} have employed a payoff perturbation scheme, where the magnitude of perturbation is determined by the distance from an anchoring strategy, which is periodically re-initialized by the current strategy.
Recently, \citet{abe2024slingshot} have established $\tilde{\mathcal{O}}(1/\sqrt{T})$ and $\tilde{\mathcal{O}}(1/T^{\frac{1}{10}})$ last-iterate convergence rates for the payoff perturbation scheme in the full/noisy feedback setting, respectively.
Our algorithm achieves faster $\tilde{\mathcal{O}}(1/T)$ and $\tilde{\mathcal{O}}(1/T^{\frac{1}{7}})$ last-iterate convergence rates by modifying the periodically re-initializing anchoring strategy scheme so that the anchoring strategy evolves more gradually.

\section{Conclusion}
This study proposes a novel payoff-perturbed algorithm, Gradient Ascent with Boosting Payoff Perturbation, which achieves $\tilde{\mathcal{O}}(1/T)$ and $\tilde{\mathcal{O}}(1/T^{\frac{1}{7}})$ last-iterate convergence rates in monotone games with full/noisy feedback, respectively.
Extending our results in settings where each player only observes bandit feedback is an intriguing and challenging future direction.

\section*{Acknowledgments}
Kaito Ariu is supported by JSPS KAKENHI Grant Number 23K19986.
Atsushi Iwasaki is supported by JSPS KAKENHI Grant Numbers 21H04890 and 23K17547.

\bibliography{references}
\bibliographystyle{iclr2025_conference}

\clearpage
\appendix
\section{Notations}
\label{sec:notations}
In this section, we summarize the notations we use in Table~\ref{tb:notation}.
\begin{table}[h!]
    \centering
    \caption{Notations}
    \label{tb:notation}
    \begin{tabular}{cc} \hline
        Symbol & Description \\ \hline
        $N$ & Number of players \\
        $\mathcal{X}_i$ & Strategy space for player $i$ \\ 
        $\mathcal{X}$ & Joint strategy space: $\mathcal{X}=\prod_{i=1}^N \mathcal{X}_i$ \\
        $v_i$ & Payoff function for player $i$ \\
        $\pi_i$ & Strategy for player $i$ \\
        $\pi$ & Strategy profile: $\pi=(\pi_i)_{i\in [N]}$ \\
        $\pi^{\ast}$ & Nash equilibrium \\
        $\Pi^{\ast}$ & Set of Nash equilibria \\
        $\mathrm{GAP}(\pi)$ & Gap function of $\pi$: $\mathrm{GAP}(\pi)=\max_{\tilde{\pi}\in \mathcal{X}}\sum_{i=1}^N\langle \nabla_{\pi_i}v_i(\pi), \tilde{\pi}_i - \pi_i\rangle$ \\
        $r^{\mathrm{tan}}(\pi)$ & Tangent residual of $\pi$: $r^{\mathrm{tan}}(\pi) = \min_{a\in N_{\mathcal{X}}(\pi)} \left\| -V(\pi) + a\right\|$ \\
        $\nabla_{\pi_i}v_i(\pi)$ & Gradient vector of $v_i$ with respect to $\pi_i$ \\
        $\widehat{\nabla}_{\pi_i}v_i(\pi)$ & Noisy gradient vector of $v_i$ with respect to $\pi_i$: $\widehat{\nabla}_{\pi_i}v_i(\pi) = \nabla_{\pi_i}(\pi) + \xi_i^t$ \\
        $\xi_i^t$ & Noise vector for player $i$ at iteration $t$ \\
        $V(\cdot)$ & Gradient operator of the payoff functions: $V(\cdot) = (\nabla_{\pi_i}v_i(\cdot))_{i\in [N]}$ \\
        $T$ & Total number of iterations \\
        $\eta_t$ & Learning rate at iteration $t$ \\
        $\mu$ & Perturbation strength \\
        $T_{\sigma}$ & Update interval for the anchoring strategy \\
        $\pi^t$ & Strategy profile at iteration $t$ \\
        $k(t)$ & Number of updates of the anchoring strategy up to iteration $t$ \\
        $K$ & Total number of the updates of the anchoring strategy \\
        $\sigma^{k(t)}$ & Anchoring strategy profile at iteration $t$ \\
        $\hat{\sigma}^{k(t)}$ & convex combination of $\sigma^{k(t)}$ and $\sigma^1$: $\hat{\sigma}^{k(t)} = \frac{k(t)\sigma^{k(t)}+\sigma^1}{k(t)+1}$ \\
        \multirow{2}{*}{$\pi^{\mu, k(t)}$} & Stationary point satisfies: \\
        & $\forall i\in [N], ~\pi_i^{\mu, k(t)} = \argmax_{x\in \mathcal{X}_i}\left\{v_i(x, \pi_{-i}^{\mu, k(t)}) - \frac{\mu}{2} \left\|x -  \hat{\sigma}^{k(t)}\right\|^2 \right\}$ \\
        $L$ & Smoothness parameter of $(v_i)_{i\in [N]}$ \\ \hline
    \end{tabular}
\end{table}

\section{Proofs for Theorem \ref{thm:lic_rate_in_full_fb}}
\label{sec:appx_proof_of_lic_rate_in_full_fb}

\subsection{Proof of Theorem \ref{thm:lic_rate_in_full_fb}}
\begin{proof}[Proof of Theorem \ref{thm:lic_rate_in_full_fb}]
From the first-order optimality condition for $\pi^t$, we have for any $x\in \mathcal{X}$ and $t\geq 2$:
\begin{align*}
\left\langle V(\pi^{t-1}) - \mu \left(\pi^{t-1} - \frac{k(t-1)\sigma^{k(t-1)} + \sigma^1}{k(t-1)+1}\right)  - \frac{1}{\eta }\left(\pi^t - \pi^{t-1}\right), \pi^t - x\right\rangle \geq 0,
\end{align*}
and then $V(\pi^{t-1}) - \mu \left(\pi^{t-1} - \frac{k(t-1)\sigma^{k(t-1)} + \sigma^1}{k(t-1)+1}\right)  - \frac{1}{\eta }\left(\pi^t - \pi^{t-1}\right) \in N_{\mathcal{X}}(\pi^t)$.
Thus, the tangent residual for $\pi^t$ can be bounded as:
\begin{align*}
r^{\mathrm{tan}}(\pi^t) &= \min_{a\in N_{\mathcal{X}}(\pi^t)} \left\| -V(\pi^t) + a\right\| \\
&\leq \left\|-V(\pi^t) + V(\pi^{t-1}) - \mu \left(\pi^{t-1} - \frac{k(t-1)\sigma^{k(t-1)} + \sigma^1}{k(t-1)+1}\right)  - \frac{1}{\eta }\left(\pi^t - \pi^{t-1}\right)\right\|.
\end{align*}
Letting us define
\begin{align*}
\pi_i^{\mu, k} = \argmax_{\pi_i\in \mathcal{X}_i}\left\{v_i(\pi_i, \pi_{-i}^{\mu, k}) - \frac{\mu}{2} \left\|\pi_i -  \frac{k\sigma_i^k + \sigma_i^1}{k+1}\right\|^2 \right\},
\end{align*}
then we get by triangle inequality:
\begin{align}
r^{\mathrm{tan}}(\pi^t) &\leq \left\|-V(\pi^t) + V(\pi^{t-1}) - \frac{\mu}{k(t-1)+1}(\sigma^{k(t-1)} - \sigma^1) \right. \nonumber\\
&\left. \phantom{=} - \mu (\pi^{\mu, k(t-1)} - \pi^{\mu, k(t-1)} + \pi^{t-1} - \sigma^{k(t-1)}) - \frac{1}{\eta}(\pi^t - \pi^{t-1})\right\| \nonumber\\
&\leq \left\|-V(\pi^t) + V(\pi^{t-1})\right\| + \frac{\mu}{k(t-1)+1}\left\|\sigma^{k(t-1)} - \sigma^1\right\| \nonumber\\
& \phantom{=} + \mu \left\| \pi^{\mu, k(t-1)} - \sigma^{k(t-1)}\right\| + \mu \left\| \pi^{\mu, k(t-1)} - \pi^{t-1} \right\| + \frac{1}{\eta} \left\|\pi^t - \pi^{t-1}\right\| \nonumber\\
&\leq \frac{1 + \eta L}{\eta} \left\|\pi^t - \pi^{t-1}\right\| + \frac{\mu D}{k(t-1)+1} \nonumber\\
& \phantom{=} + \mu \left\| \pi^{\mu, k(t-1)} - \sigma^{k(t-1)}\right\| + \mu \left\| \pi^{\mu, k(t-1)} - \pi^{t-1} \right\|\nonumber\\
&\leq \frac{1 + \eta L}{\eta} \left\|\pi^{\mu, k(t-1)} - \pi^t\right\| + \frac{\mu D}{k(t-1)+1} + \mu \left\| \pi^{\mu, k(t-1)} - \sigma^{k(t-1)}\right\| \nonumber\\
& \phantom{=} + \left(\mu + \frac{1 + \eta L}{\eta}\right) \left\| \pi^{\mu, k(t-1)} - \pi^{t-1} \right\|.
\label{eq:upper_bound_on_gap_funciton_of_pi_T}
\end{align}

In terms of upper bound on $\left\|\pi^{\mu, k(t-1)} - \pi^t\right\|$ and $\left\| \pi^{\mu, k(t-1)} - \pi^{t-1} \right\|$, we introduce the following lemma:
\begin{lemma}
\label{lem:convergence_rate_of_inner_loop_in_full_fb}
If we use the constant learning rate $\eta_t = \eta \in (0, \frac{\mu}{(L+\mu)^2})$, we have for any $t\geq 1$:
\begin{align*}
&\left\|\pi^{\mu, k(t)} - \pi^t\right\|^2 \leq \left(\frac{1}{1+\eta\mu}\right)^{t-(k(t)-1)T_{\sigma}-1}\left\|\pi^{\mu, k(t)} - \sigma^{k(t)}\right\|^2, \\
&\left\|\pi^{\mu, k(t)} - \pi^{t+1}\right\|^2 \leq \left(\frac{1}{1+\eta\mu}\right)^{t-(k(t)-1)T_{\sigma}}\left\|\pi^{\mu, k(t)} - \sigma^{k(t)}\right\|^2.
\end{align*}
\end{lemma}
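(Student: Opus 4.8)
The plan is to recognize the GABP update within a single epoch as a projected-gradient step on a strongly monotone, Lipschitz operator, and then invoke the standard contraction estimate for such iterations. Fix $t$ and write $k := k(t)$. Throughout the epoch $j\in\{(k-1)T_{\sigma}+1,\dots,kT_{\sigma}\}$ the anchoring quantity $\hat{\sigma}^{k} = \frac{k\sigma^{k}+\sigma^1}{k+1}$ is constant, so using the equivalent form of the update I would write each step as $\pi^{j+1} = \argmax_{x\in\mathcal{X}}\{\eta\langle V^{\mu}(\pi^j), x\rangle - \frac12\|x-\pi^j\|^2\}$ with the epoch-local operator $V^{\mu}(\pi) := V(\pi) - \mu(\pi - \hat{\sigma}^{k})$ (in the full feedback setting $\widehat{\nabla}v=\nabla v$). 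This is exactly the Euclidean projection of $\pi^j + \eta V^{\mu}(\pi^j)$ onto $\mathcal{X}$. The stationary point $\pi^{\mu,k}$ is then the fixed point of this map: its first-order optimality condition reads $V^{\mu}(\pi^{\mu,k})\in N_{\mathcal{X}}(\pi^{\mu,k})$, i.e.\ $\langle V^{\mu}(\pi^{\mu,k}), x-\pi^{\mu,k}\rangle\le 0$ for all $x\in\mathcal{X}$, which is precisely the variational inequality characterizing the projection fixed point (independently of the value of $\eta>0$).

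Next I would establish a one-step contraction toward $\pi^{\mu,k}$. The structural facts are that $V^{\mu}$ is $\mu$-strongly monotone and $(L+\mu)$-Lipschitz: combining monotonicity of $V$ from \eqref{eq:monotone_game} with the linear term gives $\langle V^{\mu}(\pi)-V^{\mu}(\pi'),\pi-\pi'\rangle \le -\mu\|\pi-\pi'\|^2$, while $L$-smoothness \eqref{eq:L-smooth} gives $\|V^{\mu}(\pi)-V^{\mu}(\pi')\|\le (L+\mu)\|\pi-\pi'\|$. Using nonexpansiveness of the Euclidean projection together with the fixed-point property of $\pi^{\mu,k}$, I would start from
\begin{align*}
\left\|\pi^{j+1}-\pi^{\mu,k}\right\|^2 \le \left\|(\pi^j-\pi^{\mu,k}) + \eta\left(V^{\mu}(\pi^j)-V^{\mu}(\pi^{\mu,k})\right)\right\|^2,
\end{align*}
bound the cross term by strong monotonicity and the quadratic term by Lipschitzness, and obtain $\|\pi^{j+1}-\pi^{\mu,k}\|^2 \le (1 - 2\eta\mu + \eta^2(L+\mu)^2)\|\pi^j-\pi^{\mu,k}\|^2$. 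The step-size hypothesis $\eta < \mu/(L+\mu)^2$ forces $\eta^2(L+\mu)^2\le \eta\mu$, so the contraction factor is at most $1-\eta\mu$, and since $(1-\eta\mu)(1+\eta\mu)\le 1$ I can replace it by the cleaner $1/(1+\eta\mu)$ stated in the lemma.

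Finally I would iterate this estimate across the epoch. The epoch begins at iteration $s := (k-1)T_{\sigma}+1$, where by construction $\pi^{s}=\sigma^{k}$; applying the contraction for each of the $t-s = t-(k-1)T_{\sigma}-1$ steps from $s$ to $t$ yields the first displayed inequality, and one further step gives the second (exponent $t-(k-1)T_{\sigma}$). The only real care needed is bookkeeping: I must confirm that all intermediate iterates lie in the same epoch, so the perturbation center $\hat{\sigma}^{k}$—and hence the fixed point $\pi^{\mu,k}$ serving as the contraction anchor—stays fixed; this holds because $k(j)=k$ for every $j\in\{s,\dots,kT_{\sigma}\}$. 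I do not anticipate a genuine obstacle here: the argument is a textbook strongly-monotone projected-gradient contraction, and the main subtlety is simply aligning the epoch indices with the claimed exponents.
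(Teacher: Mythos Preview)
Your proposal is correct and complete, but the route differs from the paper's. You cast each epoch step as the projected-gradient map $\pi\mapsto\mathrm{Proj}_{\mathcal{X}}(\pi+\eta V^{\mu}(\pi))$, verify that $\pi^{\mu,k}$ is a fixed point of this map, and then invoke nonexpansiveness of the Euclidean projection together with $\mu$-strong monotonicity and $(L+\mu)$-Lipschitzness of $V^{\mu}$ to obtain the one-step contraction factor $1-2\eta\mu+\eta^2(L+\mu)^2\le 1-\eta\mu\le 1/(1+\eta\mu)$. The paper instead works from the three-point identity $\frac12\|\pi^{\mu,k}-\pi^{t+1}\|^2-\frac12\|\pi^{\mu,k}-\pi^t\|^2+\frac12\|\pi^{t+1}-\pi^t\|^2=\langle\pi^t-\pi^{t+1},\pi^{\mu,k}-\pi^{t+1}\rangle$, combines the first-order optimality conditions for $\pi^{t+1}$ and for $\pi^{\mu,k}$, and then bounds the residual inner product $\langle V(\pi^t)-V(\pi^{t+1})-\mu(\pi^t-\pi^{t+1}),\pi^{t+1}-\pi^{\mu,k}\rangle$ via Cauchy--Schwarz and Young's inequality, arriving directly at $(1+\eta\mu)\|\pi^{\mu,k}-\pi^{t+1}\|^2\le\|\pi^{\mu,k}-\pi^t\|^2$. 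Your argument is the textbook operator-theoretic contraction and is arguably more elementary; the paper's argument is in the Bregman/mirror-descent style used elsewhere in their analysis and would extend verbatim to non-Euclidean geometries, which yours would not without reworking. Under the Euclidean setup here both are valid, require the same step-size condition, and yield the identical contraction factor; the epoch bookkeeping you describe matches the paper's.
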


Combining \eqref{eq:upper_bound_on_gap_funciton_of_pi_T} and Lemma \ref{lem:convergence_rate_of_inner_loop_in_full_fb}, we have for any $t\geq 2$:
\begin{align}
r^{\mathrm{tan}}(\pi^t) \leq 2 \left(\mu + \frac{1 + \eta L}{\eta}\right)\left\|\pi^{\mu, k(t-1)} - \sigma^{k(t-1)}\right\| + \frac{\mu D}{k(t-1)+1}.
\label{eq:upper_bound_on_gap_funciton_of_pi_T_by_distance_between_sigma_k(T-1)_and_pimu_k(T-1)}
\end{align}

Next, we derive the following upper bound on $\left\|\pi^{\mu, k(t-1)} - \sigma^{k(t-1)}\right\|$:
\begin{lemma}
\label{lem:convergence_rate_of_distance_between_sigma_k(t)_and_pimu_k(t)}
If we set $\eta_t = \eta \in (0, \frac{\mu}{(L+\mu)^2})$ and $T_{\sigma} \geq \max(1, \frac{6 \ln 3(T+1)}{\ln (1 + \eta \mu)})$, we have for any $t\geq 1$:
\begin{align*}
\left\| \pi^{\mu, k(t)} - \sigma^{k(t)} \right\| \leq \frac{8D}{k(t)+1}.
\end{align*}
\end{lemma}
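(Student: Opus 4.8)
The plan is to treat the self-referential bound \eqref{eq:convergence_rate_of_squared_distance_between_sigma_k(t)_and_pimu_k(t)} as the engine of the proof and to distill it into an explicit $\mathcal{O}(1/(k(t)+1))$ estimate. The right-hand side of \eqref{eq:convergence_rate_of_squared_distance_between_sigma_k(t)_and_pimu_k(t)} has three pieces: a term linear in the very quantity $\|\pi^{\mu,k(t)}-\sigma^{k(t)}\|$ we wish to bound (divided by $k(t)+1$), a $\mathcal{O}(1/(k(t)+1)^2)$ term, and a cumulative sum $\sum_{l=1}^{k(t)}\mathcal{O}(\|\pi^{\mu,l}-\sigma^{l+1}\|)$. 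The crux is to show this cumulative sum is itself of order $1/(k(t)+1)^2$, after which the inequality becomes a quadratic in $z:=\|\pi^{\mu,k(t)}-\sigma^{k(t)}\|$ that can be solved directly.

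First I would control each summand. Since the anchoring strategy is overwritten every $T_\sigma$ iterations, $\sigma^{l+1}=\pi^{T_\sigma l+1}$, so $\|\pi^{\mu,l}-\sigma^{l+1}\|$ is exactly the inner-loop residual produced by one full cycle of $T_\sigma$ gradient-ascent steps. Applying Lemma \ref{lem:convergence_rate_of_inner_loop_in_full_fb} at $t=T_\sigma l$ (so that $k(t)=l$ and $t-(k(t)-1)T_\sigma=T_\sigma$) gives
\[
\|\pi^{\mu,l}-\sigma^{l+1}\|^2 \leq \left(\frac{1}{1+\eta\mu}\right)^{T_\sigma}\|\pi^{\mu,l}-\sigma^l\|^2 \leq \left(\frac{1}{1+\eta\mu}\right)^{T_\sigma} D^2,
\]
using the diameter bound $\|\pi^{\mu,l}-\sigma^l\|\leq D$. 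Taking square roots and invoking $T_\sigma\geq\frac{6\ln 3(T+1)}{\ln(1+\eta\mu)}$ forces $(1+\eta\mu)^{-T_\sigma/2}\leq(3(T+1))^{-3}$; since the number of cycles satisfies $k(t)\leq T$, this is at most $\tfrac{1}{27}(k(t)+1)^{-3}$. This bound is uniform in $l\leq k(t)$, so summing the (at most $k(t)$) terms yields $\sum_{l=1}^{k(t)}\|\pi^{\mu,l}-\sigma^{l+1}\|=\mathcal{O}\!\left((k(t)+1)^{-2}\right)$.

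Substituting this back into \eqref{eq:convergence_rate_of_squared_distance_between_sigma_k(t)_and_pimu_k(t)} leaves an inequality of the form $z^2\leq\frac{a\,z}{k(t)+1}+\frac{b}{(k(t)+1)^2}$ for explicit constants $a,b$ depending only on $D$ and the $\mathcal{O}$-constants. Solving this quadratic via $z\leq\tfrac12\big(\tfrac{a}{k(t)+1}+\sqrt{\tfrac{a^2+4b}{(k(t)+1)^2}}\big)$ gives $z\leq\mathcal{O}(1/(k(t)+1))$, and carrying the numerical constants through each step is what produces the stated coefficient $8D$.

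I expect the delicate parts to be twofold. The deeper one lies upstream, in deriving \eqref{eq:convergence_rate_of_squared_distance_between_sigma_k(t)_and_pimu_k(t)} itself from the approximate monotonicity of the potential function $P^{k(t)}$ (Lemma \ref{lem:telescoping_inequality}) together with the first-order optimality condition for $\pi^{\mu,k(t)}$; this is the novel ingredient that sharpens the rate from $\mathcal{O}(1/\sqrt{k(t)})$ to $\mathcal{O}(1/k(t))$. Conditional on that inequality, the remaining obstacle is bookkeeping: the right-hand side depends on the unknown $\|\pi^{\mu,k(t)}-\sigma^{k(t)}\|$, so one cannot read off the bound directly and must instead resolve the self-referential quadratic while tracking every constant tightly enough to land on the clean factor $8D$; and one must ensure the per-cycle decay is governed by $T$ (not merely $k(t)$), so that each of the $k(t)$ summands is uniformly $\mathcal{O}((k(t)+1)^{-3})$ and the telescoped sum genuinely collapses to order $(k(t)+1)^{-2}$.
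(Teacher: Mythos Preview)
Your proposal is correct and mirrors the paper's proof essentially step for step: the paper also starts from the self-referential inequality \eqref{eq:upper_bound_on_distance_between_sigma_k(t)_and_pimu_k(t)_2} (the precise form of \eqref{eq:convergence_rate_of_squared_distance_between_sigma_k(t)_and_pimu_k(t)}, derived via Lemma~\ref{lem:telescoping_inequality}), invokes Lemma~\ref{lem:convergence_rate_of_inner_loop_in_full_fb} with the diameter bound to get $\|\pi^{\mu,l}-\sigma^{l+1}\|\leq D(1+\eta\mu)^{-T_\sigma/2}$, uses the hypothesis on $T_\sigma$ to force this below $\mathcal{O}((k(t)+1)^{-3})$, and then completes the square to extract $\|\pi^{\mu,k(t)}-\sigma^{k(t)}\|\leq 8D/(k(t)+1)$, handling $k(t)=1$ trivially. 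The only cosmetic difference is that the paper's explicit inequality carries an extra standalone term $8\|\sigma^{k(t)+1}-\pi^{\mu,k(t)}\|^2$, which is dispatched by the same exponential decay and absorbed into the left-hand side before solving the quadratic.
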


By combining \eqref{eq:upper_bound_on_gap_funciton_of_pi_T_by_distance_between_sigma_k(T-1)_and_pimu_k(T-1)} and Lemma \ref{lem:convergence_rate_of_distance_between_sigma_k(t)_and_pimu_k(t)}, we get:
\begin{align*}
r^{\mathrm{tan}}(\pi^t) &\leq \frac{16D}{k(t-1) + 1} \left(\mu + \frac{1 + \eta L}{\eta}\right) + \frac{\mu D}{k(t-1) + 1} \\
&\leq \frac{17D}{k(t-1) + 1} \left(\mu + \frac{1 + \eta L}{\eta}\right).
\end{align*}

Therefore, since $k(t)=\lfloor \frac{t - 1}{T_{\sigma}}\rfloor + 1$, it holds that:
\begin{align*}
r^{\mathrm{tan}}(\pi^t) &\leq \frac{17DT_{\sigma}}{t + T_{\sigma} - 2} \left(\mu + \frac{1 + \eta L}{\eta}\right).
\end{align*}

Finally, taking $T_{\sigma} = c\cdot \max(1, \frac{6 \ln 3(T+1)}{\ln (1 + \eta \mu)})$, we have for any $t\geq 2$:
\begin{align*}
r^{\mathrm{tan}}(\pi^t) &\leq \frac{17c D\left(\frac{6 \ln 3(T+1)}{\ln (1 + \eta \mu)} + 1\right)}{t - 1} \left(\mu + \frac{1 + \eta L}{\eta}\right).
\end{align*} 
\end{proof}

\subsection{Proof of Lemma \ref{lem:convergence_rate_of_inner_loop_in_full_fb}}
\begin{proof}[Proof of Lemma \ref{lem:convergence_rate_of_inner_loop_in_full_fb}]
First, we have for any three vectors $a, b, c$:
\begin{align*}
\frac{1}{2}\left\|a - b\right\|^2 - \frac{1}{2}\left\|a - c\right\|^2 + \frac{1}{2}\left\|b - c\right\|^2 = \left\langle c - b, a - b\right\rangle.
\end{align*}
Thus, we have for any $t\geq 1$:
\begin{align}
\frac{1}{2}\left\|\pi^{\mu, k(t)} - \pi^{t+1}\right\|^2 - \frac{1}{2}\left\|\pi^{\mu, k(t)} - \pi^t\right\|^2 + \frac{1}{2}\left\|\pi^{t+1} - \pi^t\right\|^2 = \left\langle \pi^t - \pi^{t+1}, \pi^{\mu, k(t)} - \pi^{t+1}\right\rangle.
\label{eq:three_point_identity}
\end{align}

Here, let us define $\hat{\sigma}^{k(t)} = \frac{k(t)\sigma^{k(t)} + \sigma^1}{k(t)+1}$.
Then, from the first-order optimality condition for $\pi^{t+1}$, we have for any $t\geq 1$:
\begin{align}
\left\langle \eta \left(V(\pi^t) - \mu\left(\pi^t - \hat{\sigma}^{k(t)}\right)\right) - \pi^{t+1} + \pi^t, \pi^{t+1} - \pi^{\mu, k(t)}\right\rangle \geq 0.
\label{eq:first_order_optimality_condition_for_pi_t_in_full_fb}
\end{align}

Similarly, from the first-order optimality condition for $\pi^{\mu, k(t)}$, we get:
\begin{align}
\left\langle V(\pi^{\mu, k(t)}) - \mu\left(\pi^{\mu, k(t)} - \hat{\sigma}^{k(t)}\right), \pi^{\mu, k(t)} - \pi^{t+1}\right\rangle \geq 0.
\label{eq:first_order_optimality_condition_for_pimu_k(t)}
\end{align}

Combining \eqref{eq:three_point_identity}, \eqref{eq:first_order_optimality_condition_for_pi_t_in_full_fb}, and \eqref{eq:first_order_optimality_condition_for_pimu_k(t)} yields:
\begin{align}
&\frac{1}{2}\left\|\pi^{\mu, k(t)} - \pi^{t+1}\right\|^2 - \frac{1}{2}\left\|\pi^{\mu, k(t)} - \pi^t\right\|^2 + \frac{1}{2}\left\|\pi^{t+1} - \pi^t\right\|^2 \nonumber\\
&\leq \eta \left\langle V(\pi^t) - \mu\left(\pi^t - \hat{\sigma}^{k(t)}\right), \pi^{t+1} - \pi^{\mu, k(t)}\right\rangle \nonumber\\
&= \eta \left\langle V(\pi^{t+1}) - \mu\left(\pi^{t+1} - \hat{\sigma}^{k(t)}\right), \pi^{t+1} - \pi^{\mu, k(t)}\right\rangle \nonumber\\
&\phantom{=} + \eta \left\langle V(\pi^t) - V(\pi^{t+1}) - \mu\left(\pi^t - \pi^{t+1}\right), \pi^{t+1} - \pi^{\mu, k(t)}\right\rangle \nonumber\\
&\leq \eta \left\langle V(\pi^{\mu, k(t)}) - \mu\left(\pi^{t+1} - \hat{\sigma}^{k(t)}\right), \pi^{t+1} - \pi^{\mu, k(t)}\right\rangle \nonumber\\
&\phantom{=} + \eta \left\langle V(\pi^t) - V(\pi^{t+1}) - \mu\left(\pi^t - \pi^{t+1}\right), \pi^{t+1} - \pi^{\mu, k(t)}\right\rangle \nonumber\\
&= \eta \left\langle V(\pi^{\mu, k(t)}) - \mu\left(\pi^{\mu, k(t)} - \hat{\sigma}^{k(t)}\right), \pi^{t+1} - \pi^{\mu, k(t)}\right\rangle - \eta \mu \left\|\pi^{t+1} - \pi^{\mu, k(t)}\right\|^2 \nonumber\\
&\phantom{=} + \eta \left\langle V(\pi^t) - V(\pi^{t+1}) - \mu\left(\pi^t - \pi^{t+1}\right), \pi^{t+1} - \pi^{\mu, k(t)}\right\rangle \nonumber\\
&\leq - \eta \mu \left\|\pi^{t+1} - \pi^{\mu, k(t)}\right\|^2 + \eta \left\langle V(\pi^t) - V(\pi^{t+1}) - \mu\left(\pi^t - \pi^{t+1}\right), \pi^{t+1} - \pi^{\mu, k(t)}\right\rangle,
\label{eq:three_point_inequality}
\end{align}
where the second inequality follows from \eqref{eq:monotone_game}.
From Cauchy-Schwarz inequality and Young's inequality, the second term in the right-hand side of this inequality can be bounded by:
\begin{align}
&\eta \left\langle V(\pi^t) - V(\pi^{t+1}) - \mu\left(\pi^t - \pi^{t+1}\right), \pi^{t+1} - \pi^{\mu, k(t)}\right\rangle \nonumber\\
&= \eta \left\langle V(\pi^t) - V(\pi^{t+1}), \pi^{t+1} - \pi^{\mu, k(t)}\right\rangle - \eta \mu \left\langle \pi^t - \pi^{t+1}, \pi^{t+1} - \pi^{\mu, k(t)}\right\rangle \nonumber\\
&\leq \eta \left( \left\| V(\pi^t) - V(\pi^{t+1})\right\| + \mu \left\| \pi^t - \pi^{t+1}\right\| \right) \cdot \left\| \pi^{t+1} - \pi^{\mu, k(t)}\right\| \nonumber\\
&\leq \eta (L + \mu) \left\| \pi^t - \pi^{t+1}\right\|  \cdot \left\| \pi^{t+1} - \pi^{\mu, k(t)}\right\| \nonumber\\
&\leq \frac{1}{2} \left\| \pi^t - \pi^{t+1}\right\|^2 + \frac{\eta^2 (L + \mu)^2}{2} \left\| \pi^{t+1} - \pi^{\mu, k(t)}\right\|^2 \nonumber\\
&\leq \frac{1}{2} \left\| \pi^t - \pi^{t+1}\right\|^2 + \frac{\eta \mu}{2} \left\| \pi^{t+1} - \pi^{\mu, k(t)}\right\|^2,
\label{eq:approximation_error_in_full_fb}
\end{align}
where the second inequality follow from \eqref{eq:L-smooth}, and the last inequality follows from the assumption that $\eta \leq \frac{\mu}{(L+\mu)^2}$.
By combining \eqref{eq:three_point_inequality} and \eqref{eq:approximation_error_in_full_fb}, we get:
\begin{align*}
&\frac{1}{2}\left\|\pi^{\mu, k(t)} - \pi^{t+1}\right\|^2 - \frac{1}{2}\left\|\pi^{\mu, k(t)} - \pi^t\right\|^2 + \frac{1}{2}\left\|\pi^{t+1} - \pi^t\right\|^2 \\
&\leq - \frac{\eta \mu}{2} \left\|\pi^{t+1} - \pi^{\mu, k(t)}\right\|^2 + \frac{1}{2} \left\| \pi^t - \pi^{t+1}\right\|^2.
\end{align*}
Thus,
\begin{align*}
\frac{1 + \eta \mu }{2}\left\|\pi^{\mu, k(t)} - \pi^{t+1}\right\|^2 \leq \frac{1}{2}\left\|\pi^{\mu, k(t)} - \pi^t\right\|^2.
\end{align*}

Therefore, we have for any $t\geq 1$:
\begin{align*}
\left\|\pi^{\mu, k(t)} - \pi^{t+1}\right\|^2 &\leq \frac{1}{1 + \eta \mu}\left\|\pi^{\mu, k(t)} - \pi^t\right\|^2.
\end{align*}

Furthermore, since $k(s) = k(t)$ for $s\in [(k(t)-1)T_{\sigma}+1, t]$, we have for such $s$ that:
\begin{align*}
\left\|\pi^{\mu, k(t)} - \pi^{s+1}\right\|^2  &\leq \frac{1}{1 + \eta \mu}\left\|\pi^{\mu, k(t)} - \pi^s\right\|^2.
\end{align*}

Therefore, by applying this inequality from $t, t-1, \cdots, (k(t) - 1)T_{\sigma}+1$, we get for any $t\geq 1$:
\begin{align}
\left\|\pi^{\mu, k(t)} - \pi^{t+1}\right\|^2 &\leq \left(\frac{1}{1 + \eta \mu}\right)^{t - (k(t) - 1)T_{\sigma}}\left\|\pi^{\mu, k(t)} - \pi^{(k(t)-1)T_{\sigma}+1}\right\|^2 \nonumber\\
&= \left(\frac{1}{1 + \eta \mu}\right)^{t - (k(t) - 1)T_{\sigma}}\left\|\pi^{\mu, k(t)} - \sigma^{k(t)}\right\|^2.
\label{eq:distance_between_sigma_k(t)_and_pi_t+1}
\end{align}

Here, since $k(t) = k(t+1)$ when $t$ satisfies that $t \neq T_{\sigma}\left\lfloor \frac{t}{T_{\sigma}}\right\rfloor$, we have for such $t$ that:
\begin{align}
\left\|\pi^{\mu, k(t+1)} - \pi^{t+1}\right\|^2 &\leq \left(\frac{1}{1 + \eta \mu}\right)^{t - (k(t+1) - 1)T_{\sigma}}\left\|\pi^{\mu, k(t+1)} - \sigma^{k(t+1)}\right\|^2.
\label{eq:distance_between_sigma_k(t+1)_and_pi_t+1_1}
\end{align}
On the other hand, when $t$ satisfies that $t = T_{\sigma}\left\lfloor \frac{t}{T_{\sigma}}\right\rfloor$:
\begin{align*}
& k(t+1) = \left\lfloor \frac{T_{\sigma}\left\lfloor \frac{t}{T_{\sigma}}\right\rfloor+1-1}{T_{\sigma}}\right\rfloor + 1 = \left\lfloor \frac{t}{T_{\sigma}}\right\rfloor + 1 \\
&\Rightarrow (k(t+1) - 1)T_{\sigma} = T_{\sigma}\left\lfloor \frac{t}{T_{\sigma}}\right\rfloor = t \\
&\Rightarrow \pi^{t+1} = \pi^{(k(t+1) -1)T_{\sigma}+1} = \sigma^{k(t+1)}.
\end{align*}
Therefore, we have for any $t\geq 1$ such that $t = T_{\sigma}\left\lfloor \frac{t}{T_{\sigma}}\right\rfloor$:
\begin{align}
\left\|\pi^{\mu, k(t+1)} - \pi^{t+1}\right\|^2 &= \left\|\pi^{\mu, k(t+1)} - \sigma^{k(t+1)}\right\|^2 \nonumber\\
&= \left(\frac{1}{1 + \eta \mu}\right)^{t - (k(t+1) - 1)T_{\sigma}}\left\|\pi^{\mu, k(t+1)} - \sigma^{k(t+1)}\right\|^2.
\label{eq:distance_between_sigma_k(t+1)_and_pi_t+1_2}
\end{align}

By combining \eqref{eq:distance_between_sigma_k(t)_and_pi_t+1}, \eqref{eq:distance_between_sigma_k(t+1)_and_pi_t+1_1}, and \eqref{eq:distance_between_sigma_k(t+1)_and_pi_t+1_2}, we have for any $t\geq 1$:
\begin{align*}
\left\|\pi^{\mu, k(t)} - \pi^{t+1}\right\|^2 &\leq \left(\frac{1}{1 + \eta \mu}\right)^{t - (k(t) - 1)T_{\sigma}}\left\|\pi^{\mu, k(t)} - \sigma^{k(t)}\right\|^2, \\
\left\|\pi^{\mu, k(t+1)} - \pi^{t+1}\right\|^2 &\leq \left(\frac{1}{1 + \eta \mu}\right)^{t - (k(t+1) - 1)T_{\sigma}}\left\|\pi^{\mu, k(t+1)} - \sigma^{k(t+1)}\right\|^2.
\end{align*}
\end{proof}

\subsection{Proof of Lemma \ref{lem:convergence_rate_of_distance_between_sigma_k(t)_and_pimu_k(t)}}
\begin{proof}[Proof of Lemma \ref{lem:convergence_rate_of_distance_between_sigma_k(t)_and_pimu_k(t)}]
First, we have for any Nash equilibrium $\pi^{\ast}\in \Pi^{\ast}$ and $t\geq 1$ such that $k(t)\geq 1$:
\begin{align*}
&\frac{(k(t)+1)(k(t)+2)}{2}\left\| \pi^{\mu, k(t)} - \hat{\sigma}^{k(t)} \right\|^2 + (k(t)+1)(k(t)+2)\left\langle \hat{\sigma}^{k(t)+1} - \pi^{\mu, k(t)}, \pi^{\mu, k(t)} - \hat{\sigma}^{k(t)} \right\rangle \\
&= \frac{(k(t)+1)(k(t)+2)}{2}\left\| \pi^{\mu, k(t)} - \hat{\sigma}^{k(t)} \right\|^2 \\
&\phantom{=} + (k(t)+1)\left\langle (k(t)+1)\sigma^{k(t)+1} + \sigma^1 - (k(t)+2)\pi^{\mu, k(t)}, \pi^{\mu, k(t)} - \hat{\sigma}^{k(t)} \right\rangle \\
&= \frac{(k(t)+1)(k(t)+2)}{2}\left\| \pi^{\mu, k(t)} - \hat{\sigma}^{k(t)} \right\|^2 + (k(t)+1)\left\langle \sigma^1 - \sigma^{k(t)+1}, \pi^{\mu, k(t)} - \hat{\sigma}^{k(t)} \right\rangle \\
&\phantom{=} + (k(t)+1)(k(t)+2)\left\langle \sigma^{k(t)+1} - \pi^{\mu, k(t)}, \pi^{\mu, k(t)} - \hat{\sigma}^{k(t)} \right\rangle \\
&= \frac{(k(t)+1)(k(t)+2)}{2}\left\| \pi^{\mu, k(t)} - \hat{\sigma}^{k(t)} \right\|^2 + (k(t)+1)\left\langle \sigma^1 - \pi^{\mu, k(t)}, \pi^{\mu, k(t)} - \hat{\sigma}^{k(t)} \right\rangle \\
&\phantom{=} + (k(t)+1)^2\left\langle \sigma^{k(t)+1} - \pi^{\mu, k(t)}, \pi^{\mu, k(t)} - \hat{\sigma}^{k(t)} \right\rangle \\
&= \frac{(k(t)+1)(k(t)+2)}{2}\left\| \pi^{\mu, k(t)} - \hat{\sigma}^{k(t)} \right\|^2 + (k(t)+1)\left\langle \sigma^1 - \pi^{\ast}, \pi^{\mu, k(t)} - \hat{\sigma}^{k(t)} \right\rangle \\
&\phantom{=} + (k(t)+1)\left\langle \pi^{\ast} - \pi^{\mu, k(t)}, \pi^{\mu, k(t)} - \hat{\sigma}^{k(t)} \right\rangle + (k(t)+1)^2\left\langle \sigma^{k(t)+1} - \pi^{\mu, k(t)}, \pi^{\mu, k(t)} - \hat{\sigma}^{k(t)} \right\rangle.
\end{align*}

Here, the first-order optimality condition for $\pi^{\mu, k(t)}$:
\begin{align*}
&\left\langle V(\pi^{\mu, k(t)}) - \mu \left(\pi^{\mu, k(t)} - \hat{\sigma}^{k(t)}\right), \pi^{\mu, k(t)} - \pi^{\ast}\right\rangle \geq 0 \\
&\Rightarrow \left\langle \pi^{\mu, k(t)} - \hat{\sigma}^{k(t)}, \pi^{\ast} - \pi^{\mu, k(t)}\right\rangle \geq \frac{1}{\mu}\left\langle V(\pi^{\mu, k(t)}), \pi^{\ast} - \pi^{\mu, k(t)}\right\rangle \geq \frac{1}{\mu}\left\langle V(\pi^{\ast}), \pi^{\ast} - \pi^{\mu, k(t)}\right\rangle \geq 0,
\end{align*}
where we use \eqref{eq:monotone_game} and the fact that $\pi^{\ast}$ is a Nash equilibrium.
Combining these inequalities yields:
\begin{align*}
&\frac{(k(t)+1)(k(t)+2)}{2}\left\| \pi^{\mu, k(t)} - \hat{\sigma}^{k(t)} \right\|^2 + (k(t)+1)(k(t)+2)\left\langle \hat{\sigma}^{k(t)+1} - \pi^{\mu, k(t)}, \pi^{\mu, k(t)} - \hat{\sigma}^{k(t)} \right\rangle \\
&\geq \frac{(k(t)+1)(k(t)+2)}{2}\left\| \pi^{\mu, k(t)} - \hat{\sigma}^{k(t)} \right\|^2 + (k(t)+1)\left\langle \sigma^1 - \pi^{\ast}, \pi^{\mu, k(t)} - \hat{\sigma}^{k(t)} \right\rangle \\
&\phantom{=} + (k(t)+1)^2\left\langle \sigma^{k(t)+1} - \pi^{\mu, k(t)}, \pi^{\mu, k(t)} - \hat{\sigma}^{k(t)} \right\rangle.
\end{align*}

From Young's inequality, we have for any $\rho_1, \rho_2 > 0$:
\begin{align*}
&\frac{(k(t)+1)(k(t)+2)}{2}\left\| \pi^{\mu, k(t)} - \hat{\sigma}^{k(t)} \right\|^2 + (k(t)+1)(k(t)+2)\left\langle \hat{\sigma}^{k(t)+1} - \pi^{\mu, k(t)}, \pi^{\mu, k(t)} - \hat{\sigma}^{k(t)} \right\rangle \\
&\geq \frac{(k(t)+1)(k(t)+2)}{2}\left\| \pi^{\mu, k(t)} - \hat{\sigma}^{k(t)} \right\|^2 - \frac{\rho_1 (k(t)+1)}{2}\left\| \sigma^1 - \pi^{\ast}\right\|^2 - \frac{(k(t)+1)}{2\rho_1} \left\| \pi^{\mu, k(t)} - \hat{\sigma}^{k(t)} \right\|^2 \\
&\phantom{=} - \frac{\rho_2 (k(t)+1)^2}{2}\left\| \sigma^{k(t)+1} - \pi^{\mu, k(t)} \right\|^2 - \frac{(k(t)+1)^2}{2\rho_2} \left\| \pi^{\mu, k(t)} - \hat{\sigma}^{k(t)} \right\|^2 \\
&= \left(\frac{(k(t)+1)(k(t)+2)}{2} - \frac{k(t)+1}{2\rho_1} - \frac{(k(t)+1)^2}{2\rho_2}\right)\left\| \pi^{\mu, k(t)} - \hat{\sigma}^{k(t)} \right\|^2 \\
&\phantom{=} - \frac{\rho_1 (k(t)+1)}{2}\left\| \sigma^1 - \pi^{\ast}\right\|^2 - \frac{\rho_2 (k(t)+1)^2}{2}\left\| \sigma^{k(t)+1} - \pi^{\mu, k(t)} \right\|^2.
\end{align*}

By setting $\rho_1=\frac{4}{k(t)+2}, \rho_2=\frac{4(k(t)+1)}{k(t)+2}$, we obtain:
\begin{align}
&\frac{(k(t)+1)(k(t)+2)}{2}\left\| \pi^{\mu, k(t)} - \hat{\sigma}^{k(t)} \right\|^2 + (k(t)+1)(k(t)+2)\left\langle \hat{\sigma}^{k(t)+1} - \pi^{\mu, k(t)}, \pi^{\mu, k(t)} - \hat{\sigma}^{k(t)} \right\rangle \nonumber\\
&\geq \frac{(k(t)+1)(k(t)+2)}{4}\left\| \pi^{\mu, k(t)} - \hat{\sigma}^{k(t)} \right\|^2 - \frac{2(k(t)+1)}{k(t)+2}\left\| \sigma^1 - \pi^{\ast}\right\|^2 \nonumber\\
&\phantom{=} - \frac{2(k(t)+1)^3}{k(t)+2}\left\| \sigma^{k(t)+1} - \pi^{\mu, k(t)} \right\|^2 \nonumber\\
&\geq \frac{(k(t)+1)(k(t)+2)}{4}\left\| \pi^{\mu, k(t)} - \hat{\sigma}^{k(t)} \right\|^2 - 2\left\| \sigma^1 - \pi^{\ast}\right\|^2 - 2(k(t)+1)^2\left\| \sigma^{k(t)+1} - \pi^{\mu, k(t)} \right\|^2.
\label{eq:upper_bound_on_distance_between_sigma_k(t)_and_pimu_k(t)_1}
\end{align}

Here, we introduce the following lemma:
\begin{lemma}
\label{lem:telescoping_inequality}
For any $t\geq 1$ such that $k(t)\geq 2$, we have:
\begin{align*}
&\frac{(k(t)+1)(k(t)+2)}{2}\left\| \pi^{\mu, k(t)} - \hat{\sigma}^{k(t)} \right\|^2 + (k(t)+1)(k(t)+2)\left\langle \hat{\sigma}^{k(t)+1} - \pi^{\mu, k(t)}, \pi^{\mu, k(t)} - \hat{\sigma}^{k(t)} \right\rangle \\
&\leq \frac{k(t)(k(t)+1)}{2}\left\| \pi^{\mu, k(t)-1} - \hat{\sigma}^{k(t)-1} \right\|^2 + k(t)(k(t)+1)\left\langle \hat{\sigma}^{k(t)} - \pi^{\mu, k(t)-1}, \pi^{\mu, k(t)-1} - \hat{\sigma}^{k(t)-1} \right\rangle \\
&\phantom{=} + (k(t)+1)\left\langle (k(t)+1)(\pi^{\mu, k(t)}-\sigma^{k(t)+1}) + k(t)(\sigma^{k(t)} - \pi^{\mu, k(t)-1}), \hat{\sigma}^{k(t)} - \pi^{\mu, k(t)} \right\rangle.
\end{align*}
\end{lemma}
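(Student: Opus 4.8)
The plan is to read the claim as a one-step potential estimate $P^{k+1} \le P^{k} + (\text{error})$, where the left-hand side is $P^{k(t)+1}$, the first two terms on the right are $P^{k(t)}$, and the last term is the error. Throughout write $k := k(t)$ and abbreviate $u := \pi^{\mu, k}$, $v := \pi^{\mu, k-1}$. The single analytic input will be a monotonicity inequality comparing the two consecutive stationary points; everything else is an exact algebraic identity that collapses the difference of potentials onto that inequality.

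First I would extract the comparison between stationary points. Each $\pi^{\mu, l}$ is the argmax of a strongly concave perturbed objective, so its first-order optimality condition is the variational inequality $\langle V(\pi^{\mu,l}) - \mu(\pi^{\mu,l} - \hat\sigma^{l}),\, \pi^{\mu,l} - x\rangle \ge 0$ for all $x \in \mathcal{X}$. Testing the level-$k$ condition at $x = v$ and the level-$(k-1)$ condition at $x = u$, adding the two, and discarding $\langle V(u) - V(v), u - v\rangle \le 0$ via the monotonicity \eqref{eq:monotone_game} of $V$, I obtain the central estimate
\begin{align*}
\big\langle (u - \hat\sigma^{k}) - (v - \hat\sigma^{k-1}),\, u - v\big\rangle \le 0. \tag{$\star$}
\end{align*}

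Next I would reduce the target to $(\star)$. The error term is stated with $\sigma^{k+1}, \sigma^{k}$, so using $\hat\sigma^{l} = \tfrac{l\sigma^{l} + \sigma^1}{l+1}$ at levels $k$ and $k+1$ I would first rewrite its inner vector purely in averaged anchors, $(k+1)(u - \sigma^{k+1}) + k(\sigma^{k} - v) = (k+1)u - (k+2)\hat\sigma^{k+1} + (k+1)\hat\sigma^{k} - kv$. Introducing $p := u - \hat\sigma^{k}$, $q := v - \hat\sigma^{k-1}$, $\delta := \hat\sigma^{k} - \hat\sigma^{k-1}$ and $\gamma := \hat\sigma^{k+1} - \hat\sigma^{k}$, I would expand each three-point kernel via $\tfrac12\|u - \hat\sigma^{k}\|^2 + \langle \hat\sigma^{k+1} - u, u - \hat\sigma^{k}\rangle = -\tfrac12\|p\|^2 + \langle \gamma, p\rangle$ (and its level-$(k-1)$ analogue), so that both potentials and the error become polynomials in $p, q, \delta, \gamma$. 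The decisive cancellation is that all terms carrying the "look-ahead" anchor increment $\gamma$ drop out identically, leaving the exact identity
\begin{align*}
P^{k+1} - P^{k} - (\text{error}) = k(k+1)\Big(\tfrac12\|p - q\|^2 + \langle \delta,\, p - q\rangle\Big).
\end{align*}

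Finally, since $u - v = (p - q) + \delta$, inequality $(\star)$ reads exactly $\|p - q\|^2 + \langle \delta, p - q\rangle \le 0$, i.e. $\langle \delta, p - q\rangle \le -\|p-q\|^2$; substituting this into the identity bounds the right-hand side by $k(k+1)\big(\tfrac12\|p-q\|^2 - \|p-q\|^2\big) = -\tfrac{k(k+1)}{2}\|p-q\|^2 \le 0$, which is the claimed inequality (valid since $k \ge 2$ guarantees both $u$ and $v$ are defined). The main obstacle is the bookkeeping in the reduction step: one must carry out the full expansion and verify that the many cross terms collapse to the single quantity $\tfrac12\|p-q\|^2 + \langle \delta, p-q\rangle$, and in particular confirm that every $\gamma$-contribution cancels. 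That clean cancellation of the level-$(k+1)$ anchor is exactly the structural feature that makes the potential telescope, and getting the integer coefficients $(k+1)(k+2)$ versus $k(k+1)$ to align is the delicate part of the argument.
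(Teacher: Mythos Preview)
Your proposal is correct and follows essentially the same approach as the paper: both derive the single analytic input $(\star)$ from the first-order optimality conditions at levels $k$ and $k-1$ combined with monotonicity of $V$, and then reduce the claim to pure algebra. Your change of variables $p,q,\delta,\gamma$ makes the bookkeeping more transparent than the paper's expansion (which routes through an intermediate inequality and then a direct computation of $\tfrac12\|\hat\sigma^{k}-\pi^{\mu,k-1}\|^2-\tfrac12\|\pi^{\mu,k}-\pi^{\mu,k-1}\|^2$), but the underlying identity and the cancellation of the $\gamma$-terms are the same.
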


By combining \eqref{eq:upper_bound_on_distance_between_sigma_k(t)_and_pimu_k(t)_1} and Lemma \ref{lem:telescoping_inequality}, we get:
\begin{align*}
&\frac{(k(t)+1)(k(t)+2)}{4}\left\| \pi^{\mu, k(t)} - \hat{\sigma}^{k(t)} \right\|^2 \\
&\leq \frac{(k(t)+1)(k(t)+2)}{2}\left\| \pi^{\mu, k(t)} - \hat{\sigma}^{k(t)} \right\|^2 + (k(t)+1)(k(t)+2)\left\langle \hat{\sigma}^{k(t)+1} - \pi^{\mu, k(t)}, \pi^{\mu, k(t)} - \hat{\sigma}^{k(t)} \right\rangle \\
&\phantom{=} + 2\left\| \sigma^1 - \pi^{\ast}\right\|^2 + 2(k(t)+1)^2\left\| \sigma^{k(t)+1} - \pi^{\mu, k(t)} \right\|^2 \\
&\leq 3 \left\| \pi^{\mu, 1} - \hat{\sigma}^1 \right\|^2 + 6\left\langle \hat{\sigma}^2 - \pi^{\mu, 1}, \pi^{\mu, 1} - \hat{\sigma}^1 \right\rangle + 2\left\| \sigma^1 - \pi^{\ast}\right\|^2 + 2(k(t)+1)^2\left\| \sigma^{k(t)+1} - \pi^{\mu, k(t)} \right\|^2 \\
&\phantom{=} + \sum_{l=2}^{k(t)} (l+1)\left\langle (l+1)(\pi^{\mu, l}-\sigma^{l+1}) + l(\sigma^l - \pi^{\mu, l-1}), \hat{\sigma}^l - \pi^{\mu, l} \right\rangle \\
&= 3 \left\| \pi^{\mu, 1} - \sigma^1 \right\|^2 + 2\left\langle 2\sigma^2 + \sigma^1 - 3\pi^{\mu, 1}, \pi^{\mu, 1} - \sigma^1 \right\rangle + 2\left\| \sigma^1 - \pi^{\ast}\right\|^2 \\
&\phantom{=}  + 2(k(t)+1)^2\left\| \sigma^{k(t)+1} - \pi^{\mu, k(t)} \right\|^2 + \sum_{l=2}^{k(t)} (l+1)\left\langle (l+1)(\pi^{\mu, l}-\sigma^{l+1}) + l(\sigma^l - \pi^{\mu, l-1}), \hat{\sigma}^l - \pi^{\mu, l} \right\rangle \\
&= 3 \left\| \pi^{\mu, 1} - \sigma^1 \right\|^2 + 2\left\langle  \sigma^1 - \pi^{\mu, 1}, \pi^{\mu, 1} - \sigma^1 \right\rangle + 4\left\langle \sigma^2 - \pi^{\mu, 1}, \pi^{\mu, 1} - \sigma^1 \right\rangle \\
&\phantom{=} + 2\left\| \sigma^1 - \pi^{\ast}\right\|^2 + 2(k(t)+1)^2\left\| \sigma^{k(t)+1} - \pi^{\mu, k(t)} \right\|^2 \\
&\phantom{=}+ \sum_{l=2}^{k(t)} (l+1)\left\langle (l+1)(\pi^{\mu, l}-\sigma^{l+1}) + l(\sigma^l - \pi^{\mu, l-1}), \hat{\sigma}^l - \pi^{\mu, l} \right\rangle \\
&= \left\| \pi^{\mu, 1} - \sigma^1 \right\|^2 + 4\left\langle \sigma^2 - \pi^{\mu, 1}, \pi^{\mu, 1} - \sigma^1 \right\rangle + 2\left\| \sigma^1 - \pi^{\ast}\right\|^2 + 2(k(t)+1)^2\left\| \sigma^{k(t)+1} - \pi^{\mu, k(t)} \right\|^2 \\
&\phantom{=}+ \sum_{l=2}^{k(t)} (l+1)\left\langle (l+1)(\pi^{\mu, l}-\sigma^{l+1}) + l(\sigma^l - \pi^{\mu, l-1}), \hat{\sigma}^l - \pi^{\mu, l} \right\rangle \\
&= \left\| \pi^{\mu, 1} - \sigma^1 \right\|^2 + 2\left\| \sigma^1 - \pi^{\ast}\right\|^2 + 2(k(t)+1)^2\left\| \sigma^{k(t)+1} - \pi^{\mu, k(t)} \right\|^2 \\
&\phantom{=}+ \sum_{l=1}^{k(t)} (l+1)^2\left\langle \pi^{\mu, l}-\sigma^{l+1}, \hat{\sigma}^l - \pi^{\mu, l} \right\rangle + \sum_{l=2}^{k(t)} l(l+1)\left\langle \sigma^l - \pi^{\mu, l-1}, \hat{\sigma}^l - \pi^{\mu, l} \right\rangle \\
&\leq 3D^2 + 2(k(t)+1)^2\left\| \sigma^{k(t)+1} - \pi^{\mu, k(t)} \right\|^2 + 2D (k(t)+1)^2 \sum_{l=1}^{k(t)} \left\| \pi^{\mu, l}-\sigma^{l+1} \right\|.
\end{align*}

Therefore, we have for any $t\geq 1$ such that $k(t) \geq 2$:
\begin{align*}
&\left\| \pi^{\mu, k(t)} - \hat{\sigma}^{k(t)} \right\|^2 \leq \frac{12D^2}{(k(t)+1)^2} + 8\left\| \sigma^{k(t)+1} - \pi^{\mu, k(t)} \right\|^2 + 8D\sum_{l=1}^{k(t)} \left\| \pi^{\mu, l}-\sigma^{l+1} \right\|.
\end{align*}

By the definition of $\hat{\sigma}^{k(t)}$,
\begin{align*}
&\left\| \pi^{\mu, k(t)} - \sigma^{k(t)} \right\|^2 + \frac{\left\| \sigma^{k(t)} - \sigma^1\right\|^2}{(k(t)+1)^2} + \frac{2}{k(t)+1}\left\langle \pi^{\mu, k(t)} - \sigma^{k(t)}, \sigma^{k(t)} - \sigma^1\right\rangle \\
& \leq \frac{12D^2}{(k(t)+1)^2} + 8\left\| \sigma^{k(t)+1} - \pi^{\mu, k(t)} \right\|^2 + 8D\sum_{l=1}^{k(t)} \left\| \pi^{\mu, l}-\sigma^{l+1} \right\|.
\end{align*}

Therefore, from Cauchy-Schwarz inequality, we have:
\begin{align}
&\left\| \pi^{\mu, k(t)} - \sigma^{k(t)} \right\|^2 \nonumber\\
&\leq \frac{2}{k(t)+1}\left\langle \pi^{\mu, k(t)} - \sigma^{k(t)}, \sigma^1 - \sigma^{k(t)} \right\rangle + \frac{12D^2}{(k(t)+1)^2} \nonumber\\
&\phantom{=} + 8\left\| \sigma^{k(t)+1} - \pi^{\mu, k(t)} \right\|^2 + 8D\sum_{l=1}^{k(t)} \left\| \pi^{\mu, l}-\sigma^{l+1} \right\| \nonumber\\
&\leq \frac{2D}{k(t)+1}\left\| \pi^{\mu, k(t)} - \sigma^{k(t)}\right\| + \frac{12D^2}{(k(t)+1)^2} + 8\left\| \sigma^{k(t)+1} - \pi^{\mu, k(t)} \right\|^2 + 8D\sum_{l=1}^{k(t)} \left\| \pi^{\mu, l}-\sigma^{l+1} \right\|.
\label{eq:upper_bound_on_distance_between_sigma_k(t)_and_pimu_k(t)_2}
\end{align}

Furthermore, from Lemma \ref{lem:convergence_rate_of_inner_loop_in_full_fb}, we have for any $k\geq 1$:
\begin{align}
\left\|\pi^{\mu, k} - \sigma^{k+1} \right\|^2 &\leq \left(\frac{1}{1 + \eta \mu}\right)^{T_{\sigma}}\left\|\pi^{\mu, k} - \sigma^k\right\|^2.
\label{eq:convergence_rate_of_distance_between_sigma_k+1_and_pimu_k}
\end{align}

Combining \eqref{eq:upper_bound_on_distance_between_sigma_k(t)_and_pimu_k(t)_2} nad \eqref{eq:convergence_rate_of_distance_between_sigma_k+1_and_pimu_k}, we have for any $t\geq 1$ such that $k(t)\geq 2$:
\begin{align*}
\left\| \pi^{\mu, k(t)} - \sigma^{k(t)} \right\|^2 &\leq \frac{2D}{k(t)+1}\left\| \pi^{\mu, k(t)} - \sigma^{k(t)}\right\| + \frac{12D^2}{(k(t)+1)^2} \\
&\phantom{=} + 8\left(\frac{1}{1 + \eta \mu}\right)^{T_{\sigma}}\left\| \pi^{\mu, k(t)} - \sigma^{k(t)} \right\|^2 + 8D^2k(t)\left(\frac{1}{1 + \eta \mu}\right)^{\frac{T_{\sigma}}{2}}.
\end{align*}

Therefore, since $T_{\sigma} \geq \max(1, \frac{6 \ln 3(T+1)}{\ln (1 + \eta \mu)}) \Rightarrow \left(\frac{1}{1 + \eta \mu}\right)^{T_{\sigma}} \leq \frac{(k(t)+1)^3}{(1 + \eta \mu)^{T_{\sigma}}} \leq \frac{1}{16}$, we have for $k(t)\geq 2$:
\begin{align*}
&\frac{1}{2}\left(\left\| \pi^{\mu, k(t)} - \sigma^{k(t)} \right\| - \frac{2D}{k(t)+1}\right)^2 \leq \frac{2D^2}{(k(t)+1)^2} + \frac{12D^2}{(k(t)+1)^2} + \frac{D^2}{2(k(t)+1)^2} \leq \frac{16D^2}{(k(t)+1)^2},
\end{align*}
and then:
\begin{align*}
&\left\| \pi^{\mu, k(t)} - \sigma^{k(t)} \right\| \leq \frac{2D}{k(t)+1} + \frac{4\sqrt{2}D}{k(t)+1} \leq \frac{8D}{k(t)+1}.
\end{align*}

On the other hand, for $k(t)=1$, we have:
\begin{align*}
\left\| \pi^{\mu, 1} - \sigma^1 \right\| \leq D \leq \frac{8D}{1+1}.
\end{align*}

In summary, for any $t\geq 1$, we have:
\begin{align*}
\left\| \pi^{\mu, k(t)} - \sigma^{k(t)} \right\| \leq \frac{8D}{k(t)+1}.
\end{align*}

\end{proof}

\subsection{Proof of Lemma \ref{lem:telescoping_inequality}}
\begin{proof}[Proof of Lemma \ref{lem:telescoping_inequality}]
From the first-order optimality condition for $\pi^{\mu, k(t)}$, we have:
\begin{align*}
\left\langle V(\pi^{\mu, k(t)}) - \mu (\pi^{\mu, k(t)} - \hat{\sigma}^{k(t)}), \pi^{\mu, k(t)} - \pi^{\mu, k(t)-1}\right\rangle \geq 0.
\end{align*}

Similarly, from the first-order optimality condition for $\pi^{\mu, k(t)-1}$, we have:
\begin{align*}
\left\langle V(\pi^{\mu, k(t)-1}) - \mu (\pi^{\mu, k(t)-1} - \hat{\sigma}^{k(t)-1}), \pi^{\mu, k(t)-1} - \pi^{\mu, k(t)}\right\rangle \geq 0.
\end{align*}

Summing up these inequalities, we get for any $t\geq 1$ such that $k(t)\geq 2$:
\begin{align*}
0 &\leq \left\langle V(\pi^{\mu, k(t)}) - V(\pi^{\mu, k(t)-1}), \pi^{\mu, k(t)} - \pi^{\mu, k(t)-1}\right\rangle - \mu \left\langle \pi^{\mu, k(t)} - \hat{\sigma}^{k(t)}, \pi^{\mu, k(t)} - \pi^{\mu, k(t)-1}\right\rangle \\
&\phantom{=} + \mu \left\langle \hat{\sigma}^{k(t)-1} - \pi^{\mu, k(t)-1}, \pi^{\mu, k(t)-1} - \pi^{\mu, k(t)}\right\rangle \\
&\leq - \mu \left\langle \pi^{\mu, k(t)} - \hat{\sigma}^{k(t)}, \pi^{\mu, k(t)} - \pi^{\mu, k(t)-1}\right\rangle + \mu \left\langle \hat{\sigma}^{k(t)-1} - \pi^{\mu, k(t)-1}, \pi^{\mu, k(t)-1} - \pi^{\mu, k(t)}\right\rangle \\
&= - \mu \left\langle \pi^{\mu, k(t)} - \sigma^{k(t)} + \sigma^{k(t)} - \hat{\sigma}^{k(t)}, \pi^{\mu, k(t)} - \sigma^{k(t)} + \sigma^{k(t)} - \pi^{\mu, k(t)-1}\right\rangle \\
&\phantom{=} + \mu \left\langle \hat{\sigma}^{k(t)-1} - \pi^{\mu, k(t)-1}, \pi^{\mu, k(t)-1} - \pi^{\mu, k(t)}\right\rangle \\
&= -\mu \left\|\pi^{\mu, k(t)} - \sigma^{k(t)}\right\|^2 -\mu \left\langle \pi^{\mu, k(t)} - \sigma^{k(t)}, \sigma^{k(t)} - \pi^{\mu,k(t)-1}\right\rangle \\
&\phantom{=} - \mu \left\langle \sigma^{k(t)} - \hat{\sigma}^{k(t)}, \pi^{\mu, k(t)} - \pi^{\mu, k(t)-1}\right\rangle + \mu \left\langle \hat{\sigma}^{k(t)-1} - \pi^{\mu, k(t)-1}, \pi^{\mu, k(t)-1} - \pi^{\mu, k(t)}\right\rangle \\
&= -\mu \left\|\pi^{\mu, k(t)} - \sigma^{k(t)}\right\|^2 -\mu \left\langle \pi^{\mu, k(t)} - \sigma^{k(t)}, \sigma^{k(t)} - \pi^{\mu,k(t)-1}\right\rangle \\
&\phantom{=} + \mu \left\langle \pi^{\mu, k(t)} - \pi^{\mu, k(t)-1}, \pi^{\mu, k(t)-1}- \sigma^{k(t)} \right\rangle + \mu \left\langle \hat{\sigma}^{k(t)-1} - \hat{\sigma}^{k(t)}, \pi^{\mu, k(t)-1} - \pi^{\mu, k(t)}\right\rangle.
\end{align*}

Here, for any vectors $a, b, c$, it holds that:
\begin{align*}
\langle a - b, b - c\rangle &= \frac{1}{2}\|a - c\|^2 - \frac{1}{2}\|b - c\|^2 - \frac{1}{2}\|a - b\|^2, \\
\langle a - b, c - d\rangle &= \frac{1}{2}\|a - b\|^2 + \frac{1}{2}\|c - d\|^2 - \frac{1}{2}\|d - c + a - b\|^2. 
\end{align*}

Thus, we have:
\begin{align}
0 &\leq - \mu \left\|\pi^{\mu, k(t)} - \sigma^{k(t)}\right\|^2 - \frac{\mu}{2}\left\|\pi^{\mu, k(t)} - \pi^{\mu, k(t)-1}\right\|^2 \nonumber\\
&\phantom{=} + \frac{\mu}{2}\left\|\pi^{\mu, k(t)-1} - \sigma^{k(t)}\right\|^2 + \frac{\mu}{2}\left\| \pi^{\mu, k(t)} - \sigma^{k(t)}\right\|^2 \nonumber\\
&\phantom{=} + \frac{\mu}{2}\left\| \pi^{\mu, k(t)} - \sigma^{k(t)} \right\|^2 - \frac{\mu}{2}\left\| \pi^{\mu, k(t)-1} - \sigma^{k(t)}\right\|^2 - \frac{\mu}{2}\left\| \pi^{\mu, k(t)} - \pi^{\mu, k(t)-1} \right\|^2 \nonumber\\
&\phantom{=} + \frac{\mu}{2}\left\| \hat{\sigma}^{k(t)-1} - \hat{\sigma}^{k(t)} \right\|^2 + \frac{\mu}{2}\left\| \pi^{\mu, k(t)} - \pi^{\mu, k(t)-1} \right\|^2 \nonumber\\
&\phantom{=} - \frac{\mu}{2}\left\| \pi^{\mu, k(t)} - \pi^{\mu, k(t)-1} + \hat{\sigma}^{k(t)-1} + \hat{\sigma}^{k(t)} \right\|^2 \nonumber\\
&= - \frac{\mu}{2}\left\| \pi^{\mu, k(t)} - \pi^{\mu, k(t)-1} \right\|^2 + \frac{\mu}{2}\left\| \hat{\sigma}^{k(t)} - \hat{\sigma}^{k(t)-1} \right\|^2 \nonumber\\
&\phantom{=} - \frac{\mu}{2}\left\| \pi^{\mu, k(t)} - \pi^{\mu, k(t)-1} + \hat{\sigma}^{k(t)-1} + \hat{\sigma}^{k(t)} \right\|^2 \nonumber\\
&\leq - \frac{\mu}{2}\left\| \pi^{\mu, k(t)} - \pi^{\mu, k(t)-1} \right\|^2 + \frac{\mu}{2}\left\| \hat{\sigma}^{k(t)} - \hat{\sigma}^{k(t)-1} \right\|^2 \nonumber\\
&= - \frac{\mu}{2}\left\| \pi^{\mu, k(t)} - \pi^{\mu, k(t)-1} \right\|^2 + \frac{\mu}{2}\left\| \hat{\sigma}^{k(t)} - \pi^{\mu, k(t)-1} + \pi^{\mu, k(t)-1} - \hat{\sigma}^{k(t)-1} \right\|^2 \nonumber\\
&= \frac{\mu}{2}\left\| \pi^{\mu, k(t)-1} - \hat{\sigma}^{k(t)-1} \right\|^2 + \frac{\mu}{2}\left\| \hat{\sigma}^{k(t)} - \pi^{\mu, k(t)-1}\right\|^2 - \frac{\mu}{2}\left\| \pi^{\mu, k(t)} - \pi^{\mu, k(t)-1} \right\|^2 \nonumber\\
&\phantom{=} + \mu \left\langle \hat{\sigma}^{k(t)} - \pi^{\mu, k(t)-1}, \pi^{\mu, k(t)-1} - \hat{\sigma}^{k(t)-1} \right\rangle.
\label{eq:upper_bound_on_distance_between_sigma_k(t)-1_and_pimu_k(t)}
\end{align}

Here, from the definition of $\hat{\sigma}^{k(t)}$, we have:
\begin{align}
&\frac{1}{2}\left\| \hat{\sigma}^{k(t)} - \pi^{\mu, k(t)-1}\right\|^2 - \frac{1}{2}\left\| \pi^{\mu, k(t)} - \pi^{\mu, k(t)-1} \right\|^2 \nonumber\\
&= \frac{1}{2}\left\| \frac{k(t)\sigma^{k(t)}+ \sigma^1}{k(t)+1} - \pi^{\mu, k(t)-1}\right\|^2 - \frac{1}{2}\left\| \pi^{\mu, k(t)} - \pi^{\mu, k(t)-1} \right\|^2 \nonumber\\
&= \frac{1}{2}\left\langle \frac{k(t)\sigma^{k(t)}+ \sigma^1}{k(t)+1} - \pi^{\mu, k(t)-1} + \pi^{\mu, k(t)} - \pi^{\mu, k(t)-1}, \frac{k(t)\sigma^{k(t)}+ \sigma^1}{k(t)+1} - \pi^{\mu, k(t)-1} - \pi^{\mu, k(t)} + \pi^{\mu, k(t)-1} \right\rangle \nonumber\\
&= \frac{1}{2}\left\langle \frac{\sigma^1 + (k(t)+1)\pi^{\mu, k(t)} - 2(k(t)+1)\pi^{\mu, k(t)-1} + k(t)\sigma^{k(t)}}{k(t)+1}, \hat{\sigma}^{k(t)} - \pi^{\mu, k(t)} \right\rangle \nonumber\\
&= \frac{1}{2k(t)}\left\langle 2(k(t)+1)\sigma^{k(t)+1} + 2\sigma^1 - 2(k(t)+2)\pi^{\mu, k(t)}, \hat{\sigma}^{k(t)} - \pi^{\mu, k(t)} \right\rangle \nonumber\\
&\phantom{=} + \frac{1}{2k(t)}\left\langle - \frac{k(t)+2}{k(t)+1}\sigma^1 + (3k(t)+4)\pi^{\mu, k(t)} - 2(k(t)+1)\sigma^{k(t)+1}  \hat{\sigma}^{k(t)} - \pi^{\mu, k(t)} \right\rangle \nonumber\\
&\phantom{=} + \frac{1}{2k(t)}\left\langle - 2k(t)\pi^{\mu, k(t)-1} + \frac{k(t)^2}{k(t)+1}\sigma^{k(t)}, \hat{\sigma}^{k(t)} - \pi^{\mu, k(t)} \right\rangle \nonumber\\
&= \frac{k(t)+2}{k(t)}\left\langle \hat{\sigma}^{k(t)+1} - \pi^{\mu, k(t)}, \hat{\sigma}^{k(t)} - \pi^{\mu, k(t)} \right\rangle \nonumber\\
& \phantom{=} + \frac{1}{2k(t)}\left\langle - \frac{k(t)+2}{k(t)+1}\sigma^1 - \frac{k(t)(k(t)+2)}{k(t)+1}\sigma^{k(t)} + (k(t)+2)\pi^{\mu, k(t)}, \hat{\sigma}^{k(t)} - \pi^{\mu, k(t)} \right\rangle \nonumber\\
& \phantom{=} + \frac{1}{2k(t)}\left\langle 2(k(t)+1)(\pi^{\mu, k(t)}-\sigma^{k(t)+1}) + 2k(t)(\sigma^{k(t)} - \pi^{\mu, k(t)-1}), \hat{\sigma}^{k(t)} - \pi^{\mu, k(t)} \right\rangle \nonumber\\
&= -\frac{k(t)+2}{k(t)}\left\langle \hat{\sigma}^{k(t)+1} - \pi^{\mu, k(t)}, \pi^{\mu, k(t)} - \hat{\sigma}^{k(t)} \right\rangle \nonumber\\
& \phantom{=} - \frac{k(t)+2}{2k(t)}\left\langle \frac{k(t)\sigma^{k(t)} + \sigma^1}{k(t)+1} - \pi^{\mu, k(t)}, \hat{\sigma}^{k(t)} - \pi^{\mu, k(t)} \right\rangle \nonumber\\
& \phantom{=} + \frac{1}{k(t)}\left\langle (k(t)+1)(\pi^{\mu, k(t)}-\sigma^{k(t)+1}) + k(t)(\sigma^{k(t)} - \pi^{\mu, k(t)-1}), \hat{\sigma}^{k(t)} - \pi^{\mu, k(t)} \right\rangle \nonumber\\
&= -\frac{k(t)+2}{k(t)}\left\langle \hat{\sigma}^{k(t)+1} - \pi^{\mu, k(t)}, \pi^{\mu, k(t)} - \hat{\sigma}^{k(t)} \right\rangle - \frac{k(t)+2}{2k(t)}\left\| \hat{\sigma}^{k(t)} - \pi^{\mu, k(t)} \right\|^2 \nonumber\\
& \phantom{=} + \frac{1}{k(t)}\left\langle (k(t)+1)(\pi^{\mu, k(t)}-\sigma^{k(t)+1}) + k(t)(\sigma^{k(t)} - \pi^{\mu, k(t)-1}), \hat{\sigma}^{k(t)} - \pi^{\mu, k(t)} \right\rangle.
\label{eq:P_t+1}
\end{align}

Combining \eqref{eq:upper_bound_on_distance_between_sigma_k(t)-1_and_pimu_k(t)} and \eqref{eq:P_t+1} yields for any $t\geq 1$ such that $k(t)\geq 2$:
\begin{align*}
&\frac{k(t)+2}{2k(t)}\left\| \pi^{\mu, k(t)} - \hat{\sigma}^{k(t)} \right\|^2 + \frac{k(t)+2}{k(t)}\left\langle \hat{\sigma}^{k(t)+1} - \pi^{\mu, k(t)}, \pi^{\mu, k(t)} - \hat{\sigma}^{k(t)} \right\rangle \\
&\leq \frac{1}{2}\left\| \pi^{\mu, k(t)-1} - \hat{\sigma}^{k(t)-1} \right\|^2 + \left\langle \hat{\sigma}^{k(t)} - \pi^{\mu, k(t)-1}, \pi^{\mu, k(t)-1} - \hat{\sigma}^{k(t)-1} \right\rangle \\
&\phantom{=} + \frac{1}{k(t)}\left\langle (k(t)+1)(\pi^{\mu, k(t)}-\sigma^{k(t)+1}) + k(t)(\sigma^{k(t)} - \pi^{\mu, k(t)-1}), \hat{\sigma}^{k(t)} - \pi^{\mu, k(t)} \right\rangle. 
\end{align*}

Multiplying both sides by $k(t)(k(t)+1)$, we have:
\begin{align*}
&\frac{(k(t)+1)(k(t)+2)}{2}\left\| \pi^{\mu, k(t)} - \hat{\sigma}^{k(t)} \right\|^2 + (k(t)+1)(k(t)+2)\left\langle \hat{\sigma}^{k(t)+1} - \pi^{\mu, k(t)}, \pi^{\mu, k(t)} - \hat{\sigma}^{k(t)} \right\rangle \\
&\leq \frac{k(t)(k(t)+1)}{2}\left\| \pi^{\mu, k(t)-1} - \hat{\sigma}^{k(t)-1} \right\|^2 + k(t)(k(t)+1)\left\langle \hat{\sigma}^{k(t)} - \pi^{\mu, k(t)-1}, \pi^{\mu, k(t)-1} - \hat{\sigma}^{k(t)-1} \right\rangle \\
&\phantom{=} + (k(t)+1)\left\langle (k(t)+1)(\pi^{\mu, k(t)}-\sigma^{k(t)+1}) + k(t)(\sigma^{k(t)} - \pi^{\mu, k(t)-1}), \hat{\sigma}^{k(t)} - \pi^{\mu, k(t)} \right\rangle.
\end{align*}
\end{proof}

\section{Proofs for Theorem \ref{thm:lic_rate_in_noisy_fb}}
\label{sec:appx_proof_of_lic_rate_in_noisy_fb}

\subsection{Proof of Theorem \ref{thm:lic_rate_in_noisy_fb}}
\begin{proof}[Proof of Theorem \ref{thm:lic_rate_in_noisy_fb}]
Let us define $K:= \frac{T}{T_{\sigma}}$.
We can decompose the gap function for $\pi^{T+1}$ as follows:
\begin{align*}
&\mathrm{GAP}(\pi^{T+1}) \\
&= \max_{x\in \mathcal{X}} \left\langle V(\pi^{T+1}), x - \pi^{T+1}\right\rangle \\
&= \max_{x\in \mathcal{X}} \left(\left\langle V(\pi^{\mu, K}), x - \pi^{\mu, K}\right\rangle - \left\langle V(\pi^{\mu, K}), x - \pi^{\mu, K}\right\rangle + \left\langle V(\pi^{T+1}), x - \pi^{T+1}\right\rangle\right) \\
&= \max_{x\in \mathcal{X}} \left(\left\langle V(\pi^{\mu, K}), x - \pi^{\mu, K}\right\rangle - \left\langle V(\pi^{\mu, K}) - V(\pi^{T+1}), x - \pi^{T+1}\right\rangle + \left\langle V(\pi^{\mu, K}), \pi^{\mu, K} - \pi^{T+1}\right\rangle\right) \\
&\leq \max_{x\in \mathcal{X}} \left(\left\langle V(\pi^{\mu, K}), x - \pi^{\mu, K}\right\rangle + D\left\| V(\pi^{\mu, K}) - V(\pi^{T+1})\right\| + \zeta \left\|\pi^{\mu, K} - \pi^{T+1}\right\|\right) \\
&\leq \mathrm{GAP}(\pi^{\mu, K}) + (LD + \zeta) \left\|\pi^{\mu, K} - \pi^{T+1}\right\| \\
&\leq D\cdot \min_{c\in N_{\mathcal{X}}(\pi^{\mu, K})}\left\|-V(\pi^{\mu, K}) + c\right\| + (LD + \zeta) \left\|\pi^{\mu, K} - \pi^{T+1}\right\|,
\end{align*}
where the last inequality follows from Lemma \ref{lem:upper_bound_on_gap_function_by_tangent}.
From the first-order optimality condition for $\pi^{\mu, K}$, we have for any $x\in \mathcal{X}$:
\begin{align*}
\left\langle V(\pi^{\mu, K}) - \mu \left(\pi^{\mu, K} - \frac{K\sigma^K + \sigma^1}{K+1}\right), \pi^{\mu, K} - x\right\rangle \geq 0,
\end{align*}
and then $V(\pi^{\mu, K}) - \mu \left(\pi^{\mu, K} - \frac{K\sigma^K + \sigma^1}{K+1}\right) \in N_{\mathcal{X}}(\pi^{\mu, K})$.
Thus, the gap function for $\pi^{T+1}$ can be bounded by:
\begin{align*}
\mathrm{GAP}(\pi^{T+1}) &\leq \mu D \cdot \left\| \pi^{\mu, K} - \frac{K\sigma^K + \sigma^1}{K+1} \right\| + (LD + \zeta) \left\|\pi^{\mu, K} - \pi^{T+1}\right\| \\
&= \mu D \cdot \left\| \frac{\sigma^K - \sigma^1}{K+1} + \pi^{\mu, K} - \sigma^K \right\| + (LD + \zeta) \left\|\pi^{\mu, K} - \pi^{T+1}\right\| \\
&\leq \mu D \cdot \left( \frac{D}{K+1} + \left\| \pi^{\mu, K} - \sigma^K \right\| \right) + (LD + \zeta) \left\|\pi^{\mu, K} - \pi^{T+1}\right\|.
\end{align*}

Taking its expectation yields:
\begin{align}
\mathbb{E}\left[\mathrm{GAP}(\pi^{T+1})\right] &\leq \frac{\mu D^2}{K+1} + \mu D\cdot \mathbb{E}\left[\left\|\pi^{\mu, K} - \sigma^K\right\|\right] + (LD + \zeta) \cdot \mathbb{E}\left[\left\|\pi^{\mu, K} - \pi^{T+1}\right\|\right] \nonumber\\
&\leq \frac{\mu D^2}{K+1} + \mu D\cdot \mathbb{E}\left[\left\|\pi^{\mu, K} - \sigma^K\right\|\right] + (LD + \zeta) \cdot \sqrt{\mathbb{E}\left[\left\|\pi^{\mu, K} - \pi^{T+1}\right\|^2\right]}.
\label{eq:upper_bound_on_expected_gap_funciton_of_sigma^K}
\end{align}

Here, we derive the following upper bound on $\mathbb{E}\left[\left\|\pi^{\mu, k(t)} - \pi^{t+1}\right\|^2 \right]$:
\begin{lemma}
\label{lem:convergence_rate_of_inner_loop_in_noisy_fb}
Let $\kappa = \frac{\mu}{2}, \theta = \frac{3\mu^2 + 8L^2}{2\mu}$.
Suppose that Assumption \ref{asm:noise} holds.
If we set $\eta_t = \frac{1}{\kappa (t - T_{\sigma}(k(t) - 1)) + 2\theta }$, we have for any $t\geq 1$:
\begin{align*}
&\mathbb{E}\left[\left\|\pi^{\mu, k(t)} - \pi^{t+1}\right\|^2 \right] \leq \frac{2\theta}{\kappa\left(t - (k(t) - 1)T_{\sigma}\right) + 2\theta } \left( D^2 + \frac{C^2}{\kappa \theta}\ln \left(\frac{\kappa \left(t-(k(t)-1)T_{\sigma}\right)}{2\theta } + 1\right)\right).
\end{align*}
\end{lemma}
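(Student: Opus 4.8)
The plan is to promote the deterministic contraction of Lemma~\ref{lem:convergence_rate_of_inner_loop_in_full_fb} to a stochastic one and then collapse everything into a scalar recursion that telescopes. Throughout I fix the epoch $k=k(t)$, write $s:=t-(k-1)T_\sigma$ for the within-epoch counter so that $\eta_t=\frac{1}{\kappa s+2\theta}$ and $\pi^{(k-1)T_\sigma+1}=\sigma^k$, and abbreviate $\xi^t=(\xi_i^t)_{i\in[N]}$, $w:=\pi^{t+1}-\pi^{\mu,k}$ and $u:=\pi^t-\pi^{t+1}$. Starting from the three-point identity \eqref{eq:three_point_identity}, the first-order optimality condition for the GABP update --- which now carries the noisy gradient $\widehat\nabla_{\pi_i}v_i=\nabla_{\pi_i}v_i+\xi_i^t$ --- the optimality condition for $\pi^{\mu,k}$, and monotonicity \eqref{eq:monotone_game}, I would repeat verbatim the chain producing \eqref{eq:three_point_inequality}, merely retaining the extra stochastic term. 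This yields
\begin{align*}
\|\pi^{\mu,k}-\pi^{t+1}\|^2 \le \|\pi^{\mu,k}-\pi^t\|^2 - \|u\|^2 - 2\eta_t\mu\|w\|^2 + 2\eta_t\left\langle V(\pi^t)-V(\pi^{t+1})-\mu u,\, w\right\rangle + 2\eta_t\left\langle \xi^t,\, w\right\rangle.
\end{align*}

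Next I would bound the two inner products against the $\|u\|^2$ budget. For the first, $L$-smoothness \eqref{eq:L-smooth} and the triangle inequality give $\langle V(\pi^t)-V(\pi^{t+1})-\mu u, w\rangle \le (L+\mu)\|u\|\|w\|$, and Young's inequality with weight $\tfrac12$ turns $2\eta_t(L+\mu)\|u\|\|w\|$ into $\tfrac12\|u\|^2+2\eta_t^2(L+\mu)^2\|w\|^2$. For the stochastic term I use the decomposition $w=(\pi^t-\pi^{\mu,k})-u$ to split $2\eta_t\langle\xi^t,w\rangle = 2\eta_t\langle\xi^t,\pi^t-\pi^{\mu,k}\rangle - 2\eta_t\langle\xi^t,u\rangle$ and bound the last piece by $\tfrac12\|u\|^2+2\eta_t^2\|\xi^t\|^2$. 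The two $\tfrac12\|u\|^2$ contributions exactly consume the $-\|u\|^2$ term. Taking $\mathbb{E}[\cdot\mid\mathcal F_t]$, the term $\langle\xi^t,\pi^t-\pi^{\mu,k}\rangle$ vanishes because $\pi^t$ and $\pi^{\mu,k}$ are $\mathcal F_t$-measurable while $\xi^t$ is zero-mean (Assumption~\ref{asm:noise}(a)), and Assumption~\ref{asm:noise}(b) gives $\mathbb{E}[\|\xi^t\|^2\mid\mathcal F_t]\le C^2$. Writing $b:=\mathbb{E}[\|w\|^2\mid\mathcal F_t]=\mathbb{E}[\|\pi^{\mu,k}-\pi^{t+1}\|^2\mid\mathcal F_t]$ and $a:=\|\pi^{\mu,k}-\pi^t\|^2$, this leaves $b\big(1+2\eta_t\mu-2\eta_t^2(L+\mu)^2\big)\le a+2\eta_t^2C^2$.

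The choice of $\kappa$ and $\theta$ is exactly what makes this collapse cleanly. Since $\eta_t\le\frac1{2\theta}=\frac{\mu}{3\mu^2+8L^2}$ and the elementary inequality $(\mu-2L)^2+2L^2\ge0$ rearranges to $2(L+\mu)^2\le 3\mu^2+8L^2=2\mu\theta$, I get $2\eta_t^2(L+\mu)^2\le\eta_t\mu$, hence $1+2\eta_t\mu-2\eta_t^2(L+\mu)^2\ge 1+\eta_t\mu$. Combining this with $\eta_t\mu\le1$ and the bound $\frac1{1+x}\le 1-\frac x2$ (valid for $x\in[0,1]$) turns the previous inequality, after taking total expectation and setting $\hat a_s:=\mathbb{E}[\|\pi^{\mu,k}-\pi^{t+1}\|^2]$, into the recursion $\hat a_s\le(1-\eta_t\kappa)\hat a_{s-1}+2\eta_t^2C^2$ with $\kappa=\tfrac\mu2$ and base case $\hat a_0=\mathbb{E}[\|\pi^{\mu,k}-\sigma^k\|^2]\le D^2$.

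Finally I would telescope. Because $1-\eta_t\kappa=\frac{\kappa(s-1)+2\theta}{\kappa s+2\theta}$ and $2\eta_t^2C^2=\frac{2C^2}{(\kappa s+2\theta)^2}$, unrolling and using the telescoping product $\prod_{i=j+1}^{s}\frac{\kappa(i-1)+2\theta}{\kappa i+2\theta}=\frac{\kappa j+2\theta}{\kappa s+2\theta}$ gives $\hat a_s\le\frac{2\theta}{\kappa s+2\theta}D^2+\frac{2C^2}{\kappa s+2\theta}\sum_{j=1}^{s}\frac{1}{\kappa j+2\theta}$; bounding the sum by $\int_0^s\frac{dx}{\kappa x+2\theta}=\frac1\kappa\ln\!\big(\frac{\kappa s}{2\theta}+1\big)$ yields the stated bound. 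The main obstacle is the stochastic inner product: since $\pi^{t+1}$ depends on $\xi^t$, $\langle\xi^t,w\rangle$ is not conditionally mean-zero, and the whole argument hinges on peeling off the $\mathcal F_t$-measurable direction $\pi^t-\pi^{\mu,k}$ and paying for the residual $\langle\xi^t,u\rangle$ out of the same $\|\pi^{t+1}-\pi^t\|^2$ budget that must also absorb the smoothness error --- a tight calibration that forces one to sacrifice half of the strong monotonicity (hence $\kappa=\mu/2$) and pins down the offset $\theta$. Secondary care is needed for the measurability justifying the cancellation and for the within-epoch base case $\hat a_0\le D^2$.
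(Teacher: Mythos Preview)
Your proposal is correct and reaches exactly the recursion $\mathbb{E}[\|\pi^{\mu,k}-\pi^{t+1}\|^2\mid\mathcal F_t]\le(1-\eta_t\kappa)\|\pi^{\mu,k}-\pi^t\|^2+2\eta_t^2C^2$ that the paper obtains, after which the telescoping and integral bound are identical. The route to that recursion differs, however. You extend the full-feedback cross-term estimate \eqref{eq:approximation_error_in_full_fb} verbatim, keeping the $V$-difference and the $\mu$-term bundled as $(L+\mu)\|u\|\|w\|$, and split the $\|u\|^2$ budget in halves between this term and the noise residual; the contraction then appears as a factor $1+\eta_t\mu$ on the left, which you convert via $\tfrac1{1+x}\le1-\tfrac x2$. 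The paper instead separates the two pieces: it rewrites $-\eta_t\mu\langle\pi^t-\pi^{t+1},\pi^{t+1}-\pi^{\mu,k}\rangle$ by the polarization identity (producing the $-\tfrac{\eta_t\mu}{2}\|\pi^t-\pi^{\mu,k}\|^2$ that gives $1-\eta_t\kappa$ directly on the right), bounds the $V$-difference by a Young inequality with weights $\tfrac{2L^2}{\mu}$ and $\tfrac\mu8$ followed by a triangle-inequality relaxation, and then absorbs the noise via a Young inequality with weight $1-\eta_t\theta$ that exactly cancels the remaining $\|\pi^{t+1}-\pi^t\|^2$. Your path is shorter and makes clear that the noisy lemma is literally the full-feedback argument plus one extra Young inequality; the paper's path avoids the post-hoc $\tfrac1{1+x}$ conversion and makes the meaning of $\theta$ more visible as the exact coefficient that must dominate the $\|\pi^{t+1}-\pi^t\|^2$ budget.
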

Setting $t=T=KT_{\sigma}$, we can write $k(t) = \lfloor \frac{KT_{\sigma} - 1}{T_{\sigma}}\rfloor + 1= K$.
Therefore, from Lemma \ref{lem:convergence_rate_of_inner_loop_in_noisy_fb}, we have:
\begin{align}
\mathbb{E}\left[\left\|\pi^{\mu, K} - \pi^{T+1}\right\|^2 \right] \leq \frac{2\theta}{\kappa T_{\sigma} + 2\theta } \left( D^2 + \frac{C^2}{\kappa \theta}\ln \left(\frac{\kappa T_{\sigma}}{2\theta } + 1\right)\right).
\label{eq:convergence_rate_of_expected_distance_between_pi_T+1_and_pimu_K}
\end{align}

On the other hand, in terms of $\mathbb{E}\left[\left\| \pi^{\mu, k(t)} - \sigma^{k(t)} \right\|\right]$, we introduce the following lemma:
\begin{lemma}
\label{lem:convergence_rate_of_expected_distance_between_sigma_k(t)_and_pimu_k(t)}
If we set $\eta_t = \frac{1}{\kappa (t - T_{\sigma}(k(t) - 1)) + 2\theta }$ and $T_{\sigma} \geq \max(1, T^{\frac{6}{7}})$, we have for any $t\geq 1$:
\begin{align*}
\mathbb{E}\left[\left\| \pi^{\mu, k(t)} - \sigma^{k(t)} \right\|\right] \leq \frac{6\left(\sqrt{\kappa} + \sqrt{\theta} + \sqrt{D \theta}+\sqrt{D}\right)}{k(t)} \left(\sqrt{\frac{1}{\kappa} \left( D^2 + \frac{C^2}{\kappa \theta}\ln \left(\frac{\kappa T}{2\theta } + 1\right)\right)} + 1\right).
\end{align*}
\end{lemma}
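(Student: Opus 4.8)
The plan is to follow the full-feedback argument used to prove Lemma~\ref{lem:convergence_rate_of_distance_between_sigma_k(t)_and_pimu_k(t)}, replacing only the single step that exploited full feedback by its stochastic counterpart. The crucial observation is that the pathwise inequality \eqref{eq:upper_bound_on_distance_between_sigma_k(t)_and_pimu_k(t)_2},
\begin{align*}
\left\| \pi^{\mu, k(t)} - \sigma^{k(t)} \right\|^2 &\leq \frac{2D}{k(t)+1}\left\| \pi^{\mu, k(t)} - \sigma^{k(t)}\right\| + \frac{12D^2}{(k(t)+1)^2} \\
&\phantom{\leq} + 8\left\| \sigma^{k(t)+1} - \pi^{\mu, k(t)} \right\|^2 + 8D\sum_{l=1}^{k(t)} \left\| \pi^{\mu, l}-\sigma^{l+1} \right\|,
\end{align*}
is obtained solely from the telescoping Lemma~\ref{lem:telescoping_inequality}, the first-order optimality conditions that define the stationary points $\pi^{\mu,l}$, and Cauchy--Schwarz. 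None of these involve the gradient feedback, so the inequality holds along every sample path in the noisy setting, for all $t$ with $k(t)\ge 2$; the base case $k(t)=1$ is handled directly by $\|\pi^{\mu,1}-\sigma^1\|\le D$.

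Next I would take expectations. The only feedback-dependent quantities are $\|\sigma^{k(t)+1}-\pi^{\mu,k(t)}\|^2$ and the summands $\|\pi^{\mu,l}-\sigma^{l+1}\|$. Since $\sigma^{l+1}=\pi^{T_{\sigma}l+1}$, applying Lemma~\ref{lem:convergence_rate_of_inner_loop_in_noisy_fb} with $t=T_{\sigma}l$ (so that $t-(k(t)-1)T_{\sigma}=T_{\sigma}$) gives the same per-block bound for every $l$,
\begin{align*}
\mathbb{E}\left[\left\|\pi^{\mu,l}-\sigma^{l+1}\right\|^2\right] \le \frac{2\theta}{\kappa T_{\sigma} + 2\theta}\left( D^2 + \frac{C^2}{\kappa \theta}\ln\left(\frac{\kappa T_{\sigma}}{2\theta } + 1\right)\right) =: B_{T_{\sigma}}.
\end{align*}
Jensen's inequality yields $\mathbb{E}[\|\pi^{\mu,l}-\sigma^{l+1}\|]\le \sqrt{B_{T_{\sigma}}}$, so the sum contributes at most $k(t)\sqrt{B_{T_{\sigma}}}$, while the squared term contributes at most $B_{T_{\sigma}}$. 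Setting $u=\mathbb{E}[\|\pi^{\mu,k(t)}-\sigma^{k(t)}\|]$ and using $u^2\le\mathbb{E}[\|\pi^{\mu,k(t)}-\sigma^{k(t)}\|^2]$, I reduce to the scalar quadratic inequality $u^2\le \frac{2D}{k(t)+1}u + \frac{12D^2}{(k(t)+1)^2}+8B_{T_{\sigma}}+8Dk(t)\sqrt{B_{T_{\sigma}}}$, whose solution is $u\le \frac{D}{k(t)+1}+\sqrt{\frac{13D^2}{(k(t)+1)^2}+8B_{T_{\sigma}}+8Dk(t)\sqrt{B_{T_{\sigma}}}}$.

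The heart of the argument, and the step I expect to be the main obstacle, is forcing every term inside this square root to be $O(1/k(t)^2)$ uniformly in $k(t)$, since the summation term $8Dk(t)\sqrt{B_{T_{\sigma}}}$ grows with $k(t)$ while the target scaling decays. This is exactly what the choice $T_{\sigma}\ge\max(1,T^{\frac{6}{7}})$ buys: with $\Phi:=D^2+\frac{C^2}{\kappa\theta}\ln(\frac{\kappa T}{2\theta}+1)$ one has $B_{T_{\sigma}}\le\frac{2\theta}{\kappa T_{\sigma}}\Phi$, and since $k(t)\le K\le T/T_{\sigma}$ the worst case is $k(t)=K$, where $k(t)^3\sqrt{B_{T_{\sigma}}}\lesssim \sqrt{\theta/\kappa}\,\frac{T^3}{T_{\sigma}^{7/2}}\sqrt{\Phi}$; this is $O(\sqrt{\Phi})$ precisely when $T_{\sigma}^{7/2}\gtrsim T^3$, i.e. $T_{\sigma}\gtrsim T^{\frac{6}{7}}$. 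Hence $8Dk(t)\sqrt{B_{T_{\sigma}}}=8Dk(t)^3\sqrt{B_{T_{\sigma}}}/k(t)^2=O(\sqrt{\Phi})/k(t)^2$, and likewise $8B_{T_{\sigma}}$ and the $D^2/(k(t)+1)^2$ pieces are $O(\Phi\cdot(\text{constants}))/k(t)^2$ using $k(t)^2/T_{\sigma}\le 1$. Taking the square root, applying $\sqrt{a+b+c}\le\sqrt a+\sqrt b+\sqrt c$, and using AM--GM on the cross term (writing $\Phi^{1/4}=\sqrt{\sqrt{\Phi}\cdot 1}\le\frac{\sqrt{\Phi}+1}{2}$) regroups everything, after absorbing the problem constants $\kappa,\theta,D$, into the advertised bound $\frac{6(\sqrt\kappa+\sqrt\theta+\sqrt{D\theta}+\sqrt D)}{k(t)}(\sqrt{\Phi/\kappa}+1)$; this final repackaging is routine bookkeeping, and one checks that the deliberately loose constant also covers the $k(t)=1$ base case.
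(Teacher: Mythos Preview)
Your proposal is correct and follows essentially the same route as the paper: both start from the pathwise inequality \eqref{eq:upper_bound_on_distance_between_sigma_k(t)_and_pimu_k(t)_2} (which is feedback-independent), take expectations, control the inner-loop errors via Lemma~\ref{lem:convergence_rate_of_inner_loop_in_noisy_fb}, and use $T_{\sigma}\ge T^{6/7}$ to ensure $k(t)^3/\sqrt{T_{\sigma}}\le 1$ so that every term becomes $O(1/k(t)^2)$. The only cosmetic difference is that the paper completes the square pathwise to $\mathbb{E}[(\|\pi^{\mu,k(t)}-\sigma^{k(t)}\|-\tfrac{D}{k(t)+1})^2]$ and then applies $\mathbb{E}[X]^2\le\mathbb{E}[X^2]$, whereas you bound $u^2\le\mathbb{E}[\|\cdot\|^2]$ first and solve the resulting quadratic in $u$; these yield the identical bound on $u$.
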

By setting $t=KT_{\sigma}$ in this lemma, we get:
\begin{align}
\mathbb{E}\left[\left\| \pi^{\mu, K} - \sigma^K \right\| \right] \leq \frac{6\left(\sqrt{\kappa} + \sqrt{\theta} + \sqrt{D \theta}+\sqrt{D}\right)}{K} \left(\sqrt{\frac{1}{\kappa} \left( D^2 + \frac{C^2}{\kappa \theta}\ln \left(\frac{\kappa T}{2\theta } + 1\right)\right)} + 1\right).
\label{eq:convergence_rate_of_expected_distance_between_sigma_K_and_pimu_K}
\end{align}

Combining \eqref{eq:upper_bound_on_expected_gap_funciton_of_sigma^K}, \eqref{eq:convergence_rate_of_expected_distance_between_pi_T+1_and_pimu_K}, and \eqref{eq:convergence_rate_of_expected_distance_between_sigma_K_and_pimu_K}, we have:
\begin{align*}
&\mathbb{E}\left[\mathrm{GAP}(\sigma^{K+1})\right] \\
&\leq \frac{\mu D^2}{K+1} + \mu D\cdot \frac{6\left(\sqrt{\kappa} + \sqrt{\theta} + \sqrt{D \theta}+\sqrt{D}\right)}{K} \left(\sqrt{\frac{1}{\kappa} \left( D^2 + \frac{C^2}{\kappa \theta}\ln \left(\frac{\kappa T}{2\theta } + 1\right)\right)} + 1\right) \\
&\phantom{=} + (LD + \zeta) \cdot \sqrt{\frac{2\theta}{\kappa T_{\sigma} + 2\theta } \left( D^2 + \frac{C^2}{\kappa \theta}\ln \left(\frac{\kappa T_{\sigma}}{2\theta } + 1\right)\right)} \\
&\leq \mu D^2\frac{ T_{\sigma}}{T} + \mu D\cdot \frac{6T_{\sigma}\left(\sqrt{\kappa} + \sqrt{\theta} + \sqrt{D \theta}+\sqrt{D}\right)}{T} \left(\sqrt{\frac{1}{\kappa} \left( D^2 + \frac{C^2}{\kappa \theta}\ln \left(\frac{\kappa T}{2\theta } + 1\right)\right)} + 1\right) \\
&\phantom{=} + (LD + \zeta) \cdot \sqrt{\frac{2\theta}{\kappa T_{\sigma}} \left( D^2 + \frac{C^2}{\kappa \theta}\ln \left(\frac{\kappa T}{2\theta } + 1\right)\right)},
\end{align*}
where the second inequality follows from $K=\frac{T}{T_{\sigma}}$.
Finally, since $T_{\sigma}=c\cdot \max(1, T^{\frac{6}{7}})$, we have for any $T\geq T_{\sigma}$:
\begin{align*}
&\mathbb{E}\left[\mathrm{GAP}(\sigma^{K+1})\right] \\
&\leq \frac{c \mu D^2}{T^{\frac{1}{7}}} + \frac{6c \mu D\left(\sqrt{\kappa} + \sqrt{\theta} + \sqrt{D \theta}+\sqrt{D}\right)}{T^{\frac{1}{7}}} \left(\sqrt{\frac{1}{\kappa} \left( D^2 + \frac{C^2}{\kappa \theta}\ln \left(\frac{\kappa T}{2\theta } + 1\right)\right)} + 1\right) \\
&\phantom{=} + \frac{(LD + \zeta)}{T^{\frac{3}{7}}} \sqrt{\frac{2\theta}{\kappa} \left( D^2 + \frac{C^2}{\kappa \theta}\ln \left(\frac{\kappa T}{2\theta } + 1\right)\right)} \\
&\leq \frac{6c \mu D\left(\sqrt{\kappa} + \sqrt{\theta} + \sqrt{D \theta}+\sqrt{D} + D\right)}{T^{\frac{1}{7}}} \left(\sqrt{\frac{1}{\kappa} \left( D^2 + \frac{C^2}{\kappa \theta}\ln \left(\frac{\kappa T}{2\theta } + 1\right)\right)} + 1\right) \\
&\phantom{=} + \frac{(LD + \zeta)\sqrt{2\theta}}{T^{\frac{1}{7}}} \left(\sqrt{\frac{1}{\kappa} \left( D^2 + \frac{C^2}{\kappa \theta}\ln \left(\frac{\kappa T}{2\theta } + 1\right)\right)} + 1\right) \\
&\leq \frac{9c \left(\mu D + LD + \zeta\right)\left(\sqrt{\kappa} + \sqrt{\theta} + \sqrt{D \theta}+\sqrt{D} + D\right)}{T^{\frac{1}{7}}} \left(\sqrt{\frac{1}{\kappa} \left( D^2 + \frac{C^2}{\kappa \theta}\ln \left(\frac{\kappa T}{2\theta } + 1\right)\right)} + 1\right).
\end{align*}

Since $T=T_{\sigma}K$, we have finally:
\begin{align*}
&\mathbb{E}\left[\mathrm{GAP}(\pi^{T+1})\right] \\
&\leq \frac{9c \left(\mu D + LD + \zeta\right)\left(\sqrt{\kappa} + \sqrt{\theta} + \sqrt{D \theta}+\sqrt{D} + D\right)}{T^{\frac{1}{7}}} \left(\sqrt{\frac{1}{\kappa} \left( D^2 + \frac{C^2}{\kappa \theta}\ln \left(\frac{\kappa T}{2\theta } + 1\right)\right)} + 1\right) \\
&= \frac{9c \left(D(\mu  + L) + \zeta\right)\left(\sqrt{\kappa} + (\sqrt{D} + 1)(\sqrt{D} + \sqrt{\theta})\right)}{T^{\frac{1}{7}}} \left(\sqrt{\frac{1}{\kappa} \left( D^2 + \frac{C^2}{\kappa \theta}\ln \left(\frac{\kappa T}{2\theta } + 1\right)\right)} + 1\right) \\
&\leq \frac{18c \left(D(\mu  + L) + \zeta\right)\left(\sqrt{\kappa} + \sqrt{(D + 1)(D + \theta)}\right)}{T^{\frac{1}{7}}} \left(\sqrt{\frac{1}{\kappa} \left( D^2 + \frac{C^2}{\kappa \theta}\ln \left(\frac{\kappa T}{2\theta } + 1\right)\right)} + 1\right) \\
&\leq \frac{26c \left(D(\mu + L) + \zeta\right)\sqrt{(D+1)(D+\theta) + \kappa}}{T^{\frac{1}{7}}} \left(\sqrt{\frac{1}{\kappa} \left( D^2 + \frac{C^2}{\kappa \theta}\ln \left(\frac{\kappa T}{2\theta } + 1\right)\right)} + 1\right).
\end{align*}
\end{proof}

\subsection{Proof of Lemma \ref{lem:convergence_rate_of_inner_loop_in_noisy_fb}}
\begin{proof}[Proof of Lemma \ref{lem:convergence_rate_of_inner_loop_in_noisy_fb}]
From the first-order optimality condition for $\pi^{t+1}$, we have for $t\geq 1$:
\begin{align}
\left\langle \eta_t \left(\hat{V}(\pi^t) - \mu (\pi^t - \hat{\sigma}^{k(t)}) \right) - \pi^{t+1} + \pi^t, \pi^{t+1} - \pi^{\mu, k(t)}\right\rangle \geq 0.
\label{eq:first_order_optimality_condition_for_pi_t_in_noisy_fb}
\end{align}

Combining \eqref{eq:first_order_optimality_condition_for_pi_t_in_noisy_fb}, \eqref{eq:three_point_identity}, and \eqref{eq:first_order_optimality_condition_for_pimu_k(t)}, we have:
\begin{align}
&\frac{1}{2}\left\|\pi^{\mu, k(t)} - \pi^{t+1}\right\|^2 - \frac{1}{2}\left\|\pi^{\mu, k(t)} - \pi^t\right\|^2 + \frac{1}{2}\left\|\pi^{t+1} - \pi^t\right\|^2 \nonumber\\
&\leq \eta_t \left\langle \hat{V}(\pi^t) - \mu (\pi^t - \hat{\sigma}^{k(t)}), \pi^{t+1} - \pi^{\mu, k(t)}\right\rangle \nonumber\\
&= \eta_t\left\langle V(\pi^{t+1}) - \mu (\pi^{t+1} - \hat{\sigma}^{k(t)}), \pi^{t+1} - \pi^{\mu, k(t)}\right\rangle + \eta_t \left\langle \hat{V}(\pi^t) - V(\pi^{t+1}) - \mu (\pi^t - \pi^{t+1}), \pi^{t+1} - \pi^{\mu, k(t)}\right\rangle \nonumber\\
&\leq \eta_t \left\langle V(\pi^{\mu, k(t)}) - \mu (\pi^{t+1} - \hat{\sigma}^{k(t)}), \pi^{t+1} - \pi^{\mu, k(t)}\right\rangle + \eta_t \left\langle \hat{V}(\pi^t) - V(\pi^{t+1}) - \mu (\pi^t - \pi^{t+1}), \pi^{t+1} - \pi^{\mu, k(t)}\right\rangle \nonumber\\
&= \eta_t \left\langle V(\pi^{\mu, k(t)}) - \mu (\pi^{\mu, k(t)} - \hat{\sigma}^{k(t)}), \pi^{t+1} - \pi^{\mu, k(t)}\right\rangle - \eta_t \mu \left\| \pi^{\mu, k(t)} - \pi^{t+1} \right\|^2 \nonumber\\
&\phantom{=} + \eta_t \left\langle V(\pi^t) - V(\pi^{t+1}), \pi^{t+1} - \pi^{\mu, k(t)}\right\rangle - \eta_t \mu \left\langle \pi^t - \pi^{t+1}, \pi^{t+1} - \pi^{\mu, k(t)}\right\rangle + \eta_t \left\langle \xi^t, \pi^{t+1} - \pi^{\mu, k(t)}\right\rangle \nonumber\\
&\leq - \eta_t \mu \left\| \pi^{\mu, k(t)} - \pi^{t+1} \right\|^2 + \eta_t \mu \left\langle \pi^{t+1} - \pi^t, \pi^{t+1} - \pi^{\mu, k(t)}\right\rangle \nonumber\\
&\phantom{=} + \eta_t \left\langle V(\pi^t) - V(\pi^{t+1}), \pi^{t+1} - \pi^{\mu, k(t)}\right\rangle + \eta_t \left\langle \xi^t, \pi^{t+1} - \pi^{\mu, k(t)}\right\rangle \nonumber\\
&= - \eta_t \mu \left\| \pi^{\mu, k(t)} - \pi^{t+1} \right\|^2 + \frac{\eta_t \mu}{2}\left\|\pi^{t+1} - \pi^t\right\|^2 + \frac{\eta_t \mu}{2}\left\|\pi^{t+1} - \pi^{\mu, k(t)}\right\|^2 - \frac{\eta_t \mu}{2}\left\|\pi^t - \pi^{\mu, k(t)}\right\|^2 \nonumber\\
&\phantom{=} + \eta_t \left\langle V(\pi^t) - V(\pi^{t+1}), \pi^{t+1} - \pi^{\mu, k(t)}\right\rangle + \eta_t \left\langle \xi^t, \pi^{t+1} - \pi^{\mu, k(t)}\right\rangle \nonumber\\
&= - \frac{\eta_t \mu}{2} \left\| \pi^{t+1} - \pi^{\mu, k(t)} \right\|^2 - \frac{\eta_t \mu}{2}\left\|\pi^t - \pi^{\mu, k(t)}\right\|^2 + \frac{\eta_t \mu}{2}\left\|\pi^{t+1} - \pi^t\right\|^2 \nonumber\\
&\phantom{=} + \eta_t \left\langle V(\pi^t) - V(\pi^{t+1}), \pi^{t+1} - \pi^{\mu, k(t)}\right\rangle + \eta_t \left\langle \xi^t, \pi^{t+1} - \pi^{\mu, k(t)}\right\rangle,
\label{eq:three_point_inequality_in_noisy_fb}
\end{align}
where the third inequality follows from \eqref{eq:monotone_game}.
From Cauchy-Schwarz inequality and Young's inequality, the fourth term on the right-hand side of this inequality can be bounded by:
\begin{align}
&\left\langle V(\pi^t) - V(\pi^{t+1}), \pi^{t+1} - \pi^{\mu, k(t)}\right\rangle \nonumber\\
&\leq \left\| V(\pi^t) - V(\pi^{t+1})\right\| \cdot \left\|\pi^{t+1} - \pi^{\mu, k(t)}\right\| \nonumber\\
&\leq L \left\|\pi^t - \pi^{t+1}\right\| \cdot \left\|\pi^{t+1} - \pi^{\mu, k(t)}\right\| \nonumber\\
&\leq \frac{2L^2}{\mu}\left\|\pi^t - \pi^{t+1}\right\|^2 + \frac{\mu}{8}\left\|\pi^{t+1} - \pi^{\mu, k(t)}\right\|^2 \nonumber\\
&\leq \frac{2L^2}{\mu}\left\|\pi^t - \pi^{t+1}\right\|^2 + \frac{\mu}{4}\left\|\pi^t - \pi^{\mu, k(t)}\right\|^2 + \frac{\mu}{4}\left\|\pi^{t+1} - \pi^t\right\|^2 \nonumber\\
&= \left(\frac{4L^2}{\mu} + \frac{\mu}{2} \right)\frac{\left\|\pi^t - \pi^{t+1}\right\|^2}{2} + \frac{\mu}{2}\frac{\left\|\pi^t - \pi^{\mu, k(t)}\right\|^2}{2}.
\label{eq:approximation_error_in_noisy_fb}
\end{align}

By combining \eqref{eq:three_point_inequality_in_noisy_fb} and \eqref{eq:approximation_error_in_noisy_fb}, we have:
\begin{align*}
\left\|\pi^{\mu, k(t)} - \pi^{t+1}\right\|^2 &\leq - \eta_t \mu \left\| \pi^{t+1} - \pi^{\mu, k(t)} \right\|^2 + \left(1 - \frac{\eta_t \mu}{2}\right)\left\|\pi^t - \pi^{\mu, k(t)}\right\|^2 \\
&\phantom{=} - \left(1 - \eta_t \left(\frac{3\mu}{2} + \frac{4L^2}{\mu} \right)\right)\left\|\pi^{t+1} - \pi^t\right\|^2 + 2\eta_t \left\langle \xi^t, \pi^{t+1} - \pi^{\mu, k(t)}\right\rangle \\
&\leq \left(1 - \frac{\eta_t \mu}{2} \right)\left\|\pi^t - \pi^{\mu, k(t)}\right\|^2 - \left(1 - \eta_t \left(\frac{3\mu}{2} + \frac{4L^2}{\mu} \right)\right)\left\|\pi^{t+1} - \pi^t\right\|^2 \\
&\phantom{=} + 2\eta_t \left\langle \xi^t, \pi^t - \pi^{\mu, k(t)}\right\rangle + 2\eta_t \left\langle \xi^t, \pi^{t+1} - \pi^t \right\rangle \\
&= \left(1 - \eta_t \kappa \right)\left\|\pi^t - \pi^{\mu, k(t)}\right\|^2 - \left(1 - \eta_t \theta\right)\left\|\pi^{t+1} - \pi^t\right\|^2 \\
&\phantom{=} + 2\eta_t \left\langle \xi^t, \pi^t - \pi^{\mu, k(t)}\right\rangle + 2\eta_t \left\langle \xi^t, \pi^{t+1} - \pi^t \right\rangle.
\end{align*}

By taking the expectation conditioned on $\mathcal{F}_t$ for both sides and using Assumption \ref{asm:noise} (a) and (b),
\begin{align*}
&\mathbb{E}\left[\left\|\pi^{\mu, k(t)} - \pi^{t+1}\right\|^2 ~|~ \mathcal{F}_t \right] \\
&\leq \left(1 - \eta_t \kappa \right) \mathbb{E}\left[\left\|\pi^t - \pi^{\mu, k(t)}\right\|^2 ~|~ \mathcal{F}_t\right] - \left(1 - \eta_t \theta \right) \mathbb{E}\left[\left\|\pi^{t+1} - \pi^t\right\|^2 ~|~ \mathcal{F}_t\right] \\
&\phantom{=} + 2\eta_t \left\langle \mathbb{E}\left[\xi^t ~|~ \mathcal{F}_t\right], \pi^t - \pi^{\mu, k(t)} \right\rangle + 2\eta_t \mathbb{E}\left[\left\langle \xi^t, \pi^{t+1} - \pi^t \right\rangle ~|~ \mathcal{F}_t\right] \\
&= \left(1 - \eta_t \kappa \right) \left\|\pi^t - \pi^{\mu, k(t)}\right\|^2 - \left(1 - \eta_t \theta \right) \mathbb{E}\left[\left\|\pi^{t+1} - \pi^t\right\|^2 ~|~ \mathcal{F}_t\right] + 2\eta_t \mathbb{E}\left[\left\langle \xi^t, \pi^{t+1} - \pi^t \right\rangle ~|~ \mathcal{F}_t\right] \\
&\leq \left(1 - \eta_t \kappa \right) \left\|\pi^t - \pi^{\mu, k(t)}\right\|^2 - \left(1 - \eta_t \theta \right) \mathbb{E}\left[\left\|\pi^{t+1} - \pi^t\right\|^2 ~|~ \mathcal{F}_t\right] \\
&\phantom{=} + \frac{\eta_t^2}{1 - \eta_t\theta } \mathbb{E}\left[\left\| \xi^t\right\|^2  ~|~ \mathcal{F}_t \right] + \left(1 - \eta_t\theta\right) \mathbb{E}\left[\left\|\pi^{t+1} - \pi^t \right\|^2 ~|~ \mathcal{F}_t \right] \\
&\leq \left(1 - \eta_t \kappa \right) \left\|\pi^t - \pi^{\mu, k(t)}\right\|^2 + \frac{\eta_t^2}{1 - \eta_t\theta } \mathbb{E}\left[\left\| \xi^t\right\|^2  ~|~ \mathcal{F}_t \right] \\
&\leq \left(1 - \eta_t \kappa \right) \left\|\pi^t - \pi^{\mu, k(t)}\right\|^2 + 2\eta_t^2 \mathbb{E}\left[\left\| \xi^t\right\|^2  ~|~ \mathcal{F}_t \right] \\
&\leq \left(1 - \eta_t \kappa \right) \left\|\pi^t - \pi^{\mu, k(t)}\right\|^2 + 2\eta_t^2 C^2.
\end{align*}

Therefore, under the setting where $\eta_t = \frac{1}{\kappa (t - T_{\sigma}(k(t) - 1)) + 2\theta }$, we have for any $t\geq 1$:
\begin{align*}
\mathbb{E}\left[\left\|\pi^{\mu, k(t)} - \pi^{t+1}\right\|^2 ~|~ \mathcal{F}_t \right] \leq \left(1 - \frac{1}{t - T_{\sigma}(k(t) - 1) + 2\theta / \kappa} \right) \left\|\pi^t - \pi^{\mu, k(t)}\right\|^2 + 2\eta_t^2C^2.
\end{align*}

Rearranging and taking the expectations, we get:
\begin{align*}
&\left(t - T_{\sigma}(k(t) - 1) + 2\theta / \kappa \right)\mathbb{E}\left[\left\|\pi^{\mu, k(t)} - \pi^{t+1}\right\|^2 \right] \\
&\leq \left(t - 1 - T_{\sigma}(k(t) - 1) + 2\theta / \kappa \right) \mathbb{E}\left[\left\|\pi^{\mu, k(t)} - \pi^t\right\|^2\right] + \frac{2C^2}{\kappa \left(\kappa (t - T_{\sigma}(k(t) - 1)) + 2 \theta \right)}.
\end{align*}

Since $k(s) = k(t)$ for any $s\in [(k(t)-1)T_{\sigma} + 1, T]$, telescoping the sum yields:
\begin{align*}
&\left(t - T_{\sigma}(k(t) - 1) + 2\theta / \kappa \right)\mathbb{E}\left[\left\|\pi^{\mu, k(t)} - \pi^{t+1}\right\|^2 \right] \\
&\leq \left(s - 1 - T_{\sigma}(k(t) - 1) + 2\theta / \kappa \right) \mathbb{E}\left[\left\|\pi^{\mu, k(t)} - \pi^s\right\|^2\right] + \sum_{m=s}^t\frac{2C^2}{\kappa \left(\kappa (m - T_{\sigma}(k(t) - 1)) + 2 \theta \right)}.
\end{align*}

Defining $s=(k(t) - 1)T_{\sigma} + 1$,
\begin{align*}
&\left(t - T_{\sigma}(k(t) - 1) + 2\theta / \kappa \right)\mathbb{E}\left[\left\|\pi^{\mu, k(t)} - \pi^{t+1}\right\|^2 \right] \\
&\leq \frac{2\theta}{\kappa} \mathbb{E}\left[\left\|\pi^{\mu, k(t)} - \pi^{(k(t)-1)T_{\sigma} + 1}\right\|^2\right] + \frac{2C^2}{\kappa}\sum_{m=(k(t)-1)T_{\sigma} + 1}^t\frac{1}{\kappa (m - T_{\sigma}(k(t) - 1)) + 2 \theta}.
\end{align*}

Therefore,
\begin{align}
\mathbb{E}\left[\left\|\pi^{\mu, k(t)} - \pi^{t+1}\right\|^2 \right] &\leq \frac{2\theta}{\kappa\left(t - T_{\sigma}(k(t) - 1)\right) + 2\theta } \mathbb{E}\left[\left\|\pi^{\mu, k(t)} - \pi^{(k(t)-1)T_{\sigma} + 1}\right\|^2\right] \nonumber\\
&\phantom{=} + \frac{2C^2}{\kappa \left(t - T_{\sigma}(k(t) - 1)\right) + 2\theta}\sum_{m=1}^{t-(k(t)-1)T_{\sigma}}\frac{1}{\kappa m + 2 \theta}.
\label{eq:expected_convergence_rate_of_inner_loop_in_noify_fb}
\end{align}

Here, we have:
\begin{align}
\sum_{m=1}^{t-(k(t)-1)T_{\sigma}}\frac{1}{\kappa m + 2 \theta} &\leq \int_0^{t-(k(t)-1)T_{\sigma}}\frac{1}{\kappa x + 2 \theta}dx = \frac{1}{\kappa} \ln \left(\frac{\kappa \left(t-(k(t)-1)T_{\sigma}\right)}{2\theta } + 1\right).
\label{eq:sum_eval}
\end{align}

Combining \eqref{eq:expected_convergence_rate_of_inner_loop_in_noify_fb}, \eqref{eq:sum_eval}, and the fact that $\pi^{(k(t)-1)T_{\sigma}+1}=\sigma^{k(t)}$, we have:
\begin{align*}
&\mathbb{E}\left[\left\|\pi^{\mu, k(t)} - \pi^{t+1}\right\|^2 \right] \\
&\leq \frac{2\theta}{\kappa\left(t - (k(t) - 1)T_{\sigma}\right) + 2\theta } \left( \mathbb{E}\left[\left\|\pi^{\mu, k(t)} - \sigma^{k(t)}\right\|^2\right] + \frac{C^2}{\kappa \theta}\ln \left(\frac{\kappa \left(t-(k(t)-1)T_{\sigma}\right)}{2\theta } + 1\right)\right) \\
&\leq \frac{2\theta}{\kappa\left(t - (k(t) - 1)T_{\sigma}\right) + 2\theta } \left( D^2 + \frac{C^2}{\kappa \theta}\ln \left(\frac{\kappa \left(t-(k(t)-1)T_{\sigma}\right)}{2\theta } + 1\right)\right).
\end{align*}
\end{proof}

\subsection{Proof of Lemma \ref{lem:convergence_rate_of_expected_distance_between_sigma_k(t)_and_pimu_k(t)}}
\begin{proof}[Proof of Lemma \ref{lem:convergence_rate_of_expected_distance_between_sigma_k(t)_and_pimu_k(t)}]
First, from Lemma \ref{lem:convergence_rate_of_inner_loop_in_noisy_fb}, we have for any $k\geq 1$:
\begin{align*}
\mathbb{E}\left[\left\|\pi^{\mu, k} - \sigma^{k+1}\right\|^2 \right] \leq \frac{2\theta}{\kappa T_{\sigma} + 2\theta } \left( D^2 + \frac{C^2}{\kappa \theta}\ln \left(\frac{\kappa T_{\sigma}}{2\theta } + 1\right)\right).
\end{align*}

Moreover, by taking the expectation of \eqref{eq:upper_bound_on_distance_between_sigma_k(t)_and_pimu_k(t)_2}, we have for any $t\geq 1$ such that $k(t)\geq 2$:
\begin{align*}
\mathbb{E}\left[\left\| \pi^{\mu, k(t)} - \sigma^{k(t)} \right\|^2\right] &\leq \frac{2D}{k(t)+1}\mathbb{E}\left[\left\| \pi^{\mu, k(t)} - \sigma^{k(t)}\right\|\right] + \frac{12D^2}{(k(t)+1)^2} \\
&\phantom{=} + 8 \mathbb{E}\left[\left\| \sigma^{k(t)+1} - \pi^{\mu, k(t)} \right\|^2\right] + 8D\sum_{l=1}^{k(t)} \mathbb{E}\left[\left\| \pi^{\mu, l}-\sigma^{l+1} \right\|\right].    
\end{align*}

Combining these inequalities, we get for any $t\geq 1$ such that $k(t)\geq 2$:
\begin{align*}
&\mathbb{E}\left[\left\| \pi^{\mu, k(t)} - \sigma^{k(t)} \right\|^2\right] \leq \frac{2D}{k(t)+1} \mathbb{E}\left[\left\| \pi^{\mu, k(t)} - \sigma^{k(t)}\right\|\right] + \frac{12D^2}{(k(t)+1)^2} \\
&\phantom{=} + \frac{16\theta}{\kappa T_{\sigma}  } \left( D^2 + \frac{C^2}{\kappa \theta}\ln \left(\frac{\kappa T_{\sigma}}{2\theta } + 1\right)\right) + 8D k(t) \sqrt{\frac{2\theta}{\kappa T_{\sigma}} \left( D^2 + \frac{C^2}{\kappa \theta}\ln \left(\frac{\kappa T_{\sigma}}{2\theta } + 1\right)\right)}. 
\end{align*}

Since $T_{\sigma} \geq \max(1, T^{\frac{6}{7}}) \Rightarrow \frac{k(t)^3}{\sqrt{T_{\sigma}}} \leq 1$, we have:
\begin{align*}
\mathbb{E}\left[\left(\left\| \pi^{\mu, k(t)} - \sigma^{k(t)} \right\| - \frac{D}{k(t)+1}\right)^2\right] &\leq \frac{13D^2}{k(t)^2} + \frac{16\theta}{\kappa k(t)^2} \left( D^2 + \frac{C^2}{\kappa \theta}\ln \left(\frac{\kappa T}{2\theta } + 1\right)\right) \\
&\phantom{=} + \frac{8D}{k(t)^2} \sqrt{\frac{2\theta}{\kappa} \left( D^2 + \frac{C^2}{\kappa \theta}\ln \left(\frac{\kappa T}{2\theta } + 1\right)\right)}. 
\end{align*}

Since $\mathbb{E}[X]^2 \leq \mathbb{E}[X^2]$ for any random variable $X$, we get:
\begin{align*}
&\frac{13D^2}{k(t)^2} + \frac{16\theta}{\kappa k(t)^2} \left( D^2 + \frac{C^2}{\kappa \theta}\ln \left(\frac{\kappa T}{2\theta } + 1\right)\right) + \frac{8D}{k(t)^2} \sqrt{\frac{2\theta}{\kappa} \left( D^2 + \frac{C^2}{\kappa \theta}\ln \left(\frac{\kappa T}{2\theta } + 1\right)\right)} \\
&\geq \mathbb{E}\left[\left(\left\| \pi^{\mu, k(t)} - \sigma^{k(t)} \right\| - \frac{D}{k(t)+1}\right)^2\right] \\
&\geq \mathbb{E}\left[\left\| \pi^{\mu, k(t)} - \sigma^{k(t)} \right\| - \frac{D}{k(t)+1}\right]^2 \\
&= \left(\mathbb{E}\left[\left\| \pi^{\mu, k(t)} - \sigma^{k(t)} \right\|\right] - \frac{D}{k(t)+1}\right)^2.
\end{align*}

Then, we have:
\begin{align*}
&\mathbb{E}\left[\left\| \pi^{\mu, k(t)} - \sigma^{k(t)} \right\|\right] \\
&\leq \frac{D}{k(t)} + \frac{4D}{k(t)} + \frac{4\sqrt{\theta}}{\sqrt{\kappa} k(t)} \sqrt{ D^2 + \frac{C^2}{\kappa \theta}\ln \left(\frac{\kappa T}{2\theta } + 1\right)} + \frac{3\sqrt{D}}{k(t)} \left(\frac{2\theta}{\kappa} \left( D^2 + \frac{C^2}{\kappa \theta}\ln \left(\frac{\kappa T}{2\theta } + 1\right)\right)\right)^{\frac{1}{4}} \\
&\leq \frac{5(\sqrt{\kappa} + \sqrt{\theta})}{k(t)\sqrt{\kappa}} \sqrt{ D^2 + \frac{C^2}{\kappa \theta}\ln \left(\frac{\kappa T}{2\theta } + 1\right)} + \frac{6\sqrt{D}(\sqrt{\theta}+1)}{k(t)} \left(\sqrt{\frac{1}{\kappa} \left( D^2 + \frac{C^2}{\kappa \theta}\ln \left(\frac{\kappa T}{2\theta } + 1\right)\right)} + 1\right) \\
&\leq \frac{6\left(\sqrt{\kappa} + \sqrt{\theta} + \sqrt{D \theta}+\sqrt{D})\right)}{k(t)} \left(\sqrt{\frac{1}{\kappa} \left( D^2 + \frac{C^2}{\kappa \theta}\ln \left(\frac{\kappa T}{2\theta } + 1\right)\right)} + 1\right).
\end{align*}

Furthermore, for $k(t)=1$, we have:
\begin{align*}
\mathbb{E}\left[\left\| \pi^{\mu, 1} - \sigma^{1} \right\|\right] \leq D \leq \frac{6\left(\sqrt{\kappa} + \sqrt{\theta} + \sqrt{D \theta}+\sqrt{D})\right)}{1} \left(\sqrt{\frac{1}{\kappa} \left( D^2 + \frac{C^2}{\kappa \theta}\ln \left(\frac{\kappa T}{2\theta } + 1\right)\right)} + 1\right).
\end{align*}

Therefore, we have for any $t\geq 1$:
\begin{align*}
\mathbb{E}\left[\left\| \pi^{\mu, k(t)} - \sigma^{k(t)} \right\|\right] \leq \frac{6\left(\sqrt{\kappa} + \sqrt{\theta} + \sqrt{D \theta}+\sqrt{D})\right)}{k(t)} \left(\sqrt{\frac{1}{\kappa} \left( D^2 + \frac{C^2}{\kappa \theta}\ln \left(\frac{\kappa T}{2\theta } + 1\right)\right)} + 1\right).
\end{align*}
\end{proof}

\section{Proof of Theorem \ref{thm:individual_regret_bound}}
\label{sec:appx_proof_of_individual_regret_bound}
\begin{proof}[Proof of Theorem \ref{thm:individual_regret_bound}]
By the definition of dynamic regret, we have:
\begin{align*}
\mathrm{DynamicReg}_i(T) &= \sum_{t=1}^T\left(\max_{x\in \mathcal{X}_i}v_i(x, \pi_{-i}^t) - v_i(\pi^t)\right) \\
&\leq \mathcal{O}(1) + \sum_{t=2}^T\sum_{i=1}^N\left(\max_{x\in \mathcal{X}_i}v_i(x, \pi_{-i}^t) - v_i(\pi^t)\right).
\end{align*}

Here, we introduce the following lemma:
\begin{lemma}[Lemma 2 of \cite{cai2022finite}]
For any $\pi\in \mathcal{X}$, we have:
\begin{align*}
\sum_{i=1}^N\left(\max_{\tilde{\pi}_i\in \mathcal{X}_i}v_i(\tilde{\pi}_i, \pi_{-i}) - v_i(\pi)\right) \leq \mathrm{GAP}(\pi) \leq D\cdot \max_{\tilde{\pi}\in \mathcal{X}} \langle V(\pi), \tilde{\pi} - \pi\rangle.
\end{align*}
\end{lemma}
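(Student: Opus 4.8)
The plan is to prove the two inequalities separately, treating the left-hand inequality (relating the per-player best-response gains to the first-order gap) as the substantive step. The first thing I would record is that the monotonicity condition \eqref{eq:monotone_game} forces each $v_i$ to be concave in its own argument: fixing $\pi_{-i}$ and applying \eqref{eq:monotone_game} to the profiles $(\pi_i,\pi_{-i})$ and $(\pi_i',\pi_{-i})$, which differ only in the $i$-th block, all other blocks of $\pi-\pi'$ vanish and we are left with $\langle \nabla_{\pi_i}v_i(\pi_i,\pi_{-i}) - \nabla_{\pi_i}v_i(\pi_i',\pi_{-i}), \pi_i - \pi_i'\rangle \le 0$, which is exactly the statement that $\pi_i \mapsto v_i(\pi_i,\pi_{-i})$ is concave. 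Concavity then supplies the tangent overestimate $v_i(\tilde{\pi}_i,\pi_{-i}) - v_i(\pi) \le \langle \nabla_{\pi_i}v_i(\pi), \tilde{\pi}_i - \pi_i\rangle$ for every $\tilde{\pi}_i \in \mathcal{X}_i$.

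Next I would maximize over $\tilde{\pi}_i$ in each coordinate and sum over $i$. The key structural observation is that $\mathcal{X} = \prod_{i\in[N]}\mathcal{X}_i$ and that the linearized objective $\tilde{\pi}\mapsto \sum_{i}\langle \nabla_{\pi_i}v_i(\pi), \tilde{\pi}_i - \pi_i\rangle$ is separable across the blocks $\tilde{\pi}_i$, so the coordinatewise maxima assemble into a single joint maximum over the product set:
\begin{align*}
\sum_{i=1}^N \max_{\tilde{\pi}_i\in\mathcal{X}_i}\langle \nabla_{\pi_i}v_i(\pi), \tilde{\pi}_i - \pi_i\rangle = \max_{\tilde{\pi}\in\mathcal{X}} \langle V(\pi), \tilde{\pi} - \pi\rangle = \mathrm{GAP}(\pi).
\end{align*}
Taking the maximum of the tangent overestimate over $\tilde{\pi}_i$ coordinatewise and summing, then comparing with this identity, yields the left inequality.

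For the right inequality I would first note that, by the definition of the gap function, the middle quantity coincides with the maximization appearing on the right, $\mathrm{GAP}(\pi) = \max_{\tilde{\pi}\in\mathcal{X}}\langle V(\pi), \tilde{\pi} - \pi\rangle$; the informative diameter-scaled upper bound of this form is the tangent-residual estimate $\mathrm{GAP}(\pi) \le D\, r^{\mathrm{tan}}(\pi)$ of Lemma~\ref{lem:upper_bound_on_gap_function_by_tangent}, which I would reprove to exhibit where the factor $D$ enters. For any $a \in N_{\mathcal{X}}(\pi)$ the normal-cone inequality gives $\langle a, \tilde{\pi} - \pi\rangle \le 0$, hence $\langle V(\pi), \tilde{\pi} - \pi\rangle \le \langle V(\pi) - a, \tilde{\pi} - \pi\rangle \le \| -V(\pi) + a\|\,\|\tilde{\pi} - \pi\| \le D\,\| -V(\pi) + a\|$ by Cauchy--Schwarz and the definition of the diameter $D$; maximizing over $\tilde{\pi}$ and minimizing over $a \in N_{\mathcal{X}}(\pi)$ produces $D\, r^{\mathrm{tan}}(\pi)$, the intended diameter-scaled bound.

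I expect the only delicate point to be the interchange of the coordinatewise maxima with the joint maximum in the left inequality; this is valid precisely because $\mathcal{X}$ is a Cartesian product and the linearized objective is block-separable, so the argument does not extend to coupled strategy spaces. The extraction of concavity from the sign convention in \eqref{eq:monotone_game} and the Cauchy--Schwarz step underpinning the diameter factor are then routine.
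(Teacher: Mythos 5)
Your proof is correct, and there is nothing internal to compare it against: the paper does not prove this lemma but imports it verbatim as Lemma 2 of \citet{cai2022finite} inside the proof of Theorem \ref{thm:individual_regret_bound}, so your reconstruction is effectively the proof from the cited source. Your two main steps are exactly right. For the left inequality, deriving own-strategy concavity from \eqref{eq:monotone_game} by restricting to profiles that differ in a single block is the correct way to proceed without assuming concavity separately: all off-block terms of the inner product vanish, leaving $\langle \nabla_{\pi_i}v_i(\pi_i,\pi_{-i})-\nabla_{\pi_i}v_i(\pi_i',\pi_{-i}),\pi_i-\pi_i'\rangle\le 0$ on the convex set $\mathcal{X}_i$, which is equivalent to concavity of $\pi_i\mapsto v_i(\pi_i,\pi_{-i})$; the gradient overestimate plus block-separability of the linearized objective over the product set $\mathcal{X}=\prod_{i\in[N]}\mathcal{X}_i$ then assembles the coordinatewise maxima into $\mathrm{GAP}(\pi)$, and your caveat that this interchange needs the Cartesian-product structure is well taken.

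You also correctly diagnosed the one genuine defect, which lies in the statement rather than in any proof: since $\mathrm{GAP}(\pi)=\max_{\tilde{\pi}\in\mathcal{X}}\langle V(\pi),\tilde{\pi}-\pi\rangle$ by definition, the right-hand inequality as transcribed reads $\mathrm{GAP}(\pi)\le D\cdot \mathrm{GAP}(\pi)$, which is vacuous when $D\ge 1$ and false when $D<1$ with positive gap. This is a transcription typo for $\mathrm{GAP}(\pi)\le D\cdot r^{\mathrm{tan}}(\pi)$, i.e., Lemma \ref{lem:upper_bound_on_gap_function_by_tangent}, which is what the source's Lemma 2 states and what the paper actually uses downstream; your normal-cone plus Cauchy--Schwarz argument proves exactly that corrected bound, and it is the standard one. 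No gaps beyond flagging the typo, which you did.
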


Therefore, we have:
\begin{align*}
\mathrm{DynamicReg}_i(T) &\leq \mathcal{O}(1) + 
 \sum_{t=2}^T\mathrm{GAP}(\pi^t).
\end{align*}
Thus, from Theorem \ref{thm:lic_rate_in_full_fb}:
\begin{align*}
\mathrm{DynamicReg}_i(T) &\leq \mathcal{O}(1) + \sum_{t=2}^T\mathcal{O}\left(\frac{\ln T}{t}\right) \\
&\leq \mathcal{O}\left((\ln T)^2\right).
\end{align*}
\end{proof}

\section{Experimental details}
\label{app:exp_details}
\subsection{Information on the computer resources}
The experiments were conducted on macOS Sonoma 14.4.1 with Apple M2 Max and 32GB RAM.

\subsection{Hard concave-convex game}
\label{app:hard_concave_convex_game}

Following the setup in \citet{ouyang2022lower, cai2023doubly}, we choose 
\begin{align*}
    A = \frac{1}{4}
    \begin{bmatrix} 
    & & &-1 &1 \\
    & & \cdots & \cdots& \\
    & -1& 1&   &  \\
    -1& 1& &   & \\
    1& & &   & 
    \end{bmatrix} 
    \in \mathbb{R}^{n \times n}, 
    \quad b = \frac{1}{4} 
    \begin{bmatrix}
        1\\
        1\\
        \cdots \\
        1\\
        1
    \end{bmatrix} 
    \in \mathbb{R}^{n}, 
    \quad h = \frac{1}{4} 
    \begin{bmatrix}
        0\\
        0\\
        \cdots \\
        0\\
        1
    \end{bmatrix} 
    \in \mathbb{R}^{n},
\end{align*}

and $H = 2A^\top A$. 

\subsection{Hyperparameters}
\label{sec:appx_hyperparameter}
For each game, we carefully tuned the hyperparameters for each algorithm to ensure optimal performance. 
The specific parameters for each game and setting are summarized in Table~\ref{tab:hyperparameters}.

\begin{table}[ht!]
\centering
\begin{tabular}{c|cccc}
\hline
Game & Algorithm & $\eta$ & $T_{\sigma}$ & $\mu$ \\ \hline
\multirow{3}{*}{Random Payoff (Full Feedback)} 
 & OG & 0.05 & - & - \\ 
 & AOG & 0.05 & - & - \\ 
 & APGA & 0.05 & 20 & 1.0 \\ 
 & GABP & 0.05 & 10 & 1.0 \\ \hline 

\multirow{3}{*}{Random Payoff (Noisy Feedback)} 
 & OG & 0.001 & - & - \\ 
 & AOG & 0.001 & - & - \\ 
 & APGA & 0.001 & 2000 & 1.0 \\ 
 & GABP & 0.001 & 1000 & 1.0 \\ \hline 

\multirow{3}{*}{Hard Concave-Convex (Full Feedback)} 
 & OG & 1.0 & - & - \\ 
 & AOG & 1.0 & - & - \\ 
 & APGA & 1.0 & 20 & 0.1 \\ 
 & GABP & 1.0 & 20 & 0.1 \\ \hline

\multirow{3}{*}{Hard Concave-Convex (Noisy Feedback)} 
 & OG & 0.5 & - & - \\ 
 & AOG & 0.5 & - & - \\ 
 & APGA & 0.5 & 50 & 0.1 \\ 
 & GABP & 0.1 & 100 & 0.1 \\ \hline
\end{tabular}
\caption{Hyperparameters}
\label{tab:hyperparameters}
\end{table}

\section{Comparison with Existing Learning Algorithms}

\subsection{Relationship with accelerated optimistic gradient algorithm}
\label{sec:appx_aog}
Our GABP bears some relation to Accelerated Optimistic Gradient (AOG) \citep{cai2023doubly}, which updates the strategy by:
\begin{align*}
    \pi_i^{t+\frac{1}{2}} &= \argmax_{x \in \mathcal{X}_i}\left\{ \left\langle \eta\widehat{\nabla}_{\pi_i}v_i(\pi^{t-\frac{1}{2}}) + \frac{\pi_i^1 - \pi_i^t}{t+1}, x\right\rangle - \frac{1}{2}\left\|x - \pi_i^t\right\|^2\right\}, \\
    \pi_i^{t+1} &= \argmax_{x \in \mathcal{X}_i}\left\{\left\langle  \eta\widehat{\nabla}_{\pi_i}v_i(\pi^{t+\frac{1}{2}}) + \frac{\pi_i^1 - \pi_i^t}{t+1}, x\right\rangle - \frac{1}{2}\left\|x - \pi_i^t\right\|^2\right\}.
\end{align*}
This can be equivalently written as:
\begin{align*}
    \pi_i^{t+\frac{1}{2}} &= \argmax_{x \in \mathcal{X}_i}\left\{\eta \left\langle \widehat{\nabla}_{\pi_i}v_i(\pi^{t-\frac{1}{2}}), x\right\rangle - \frac{1}{2}\left\|x - \frac{t\pi_i^t + \pi_i^1}{t+1}\right\|^2\right\}, \\
    \pi_i^{t+1} &= \argmax_{x \in \mathcal{X}_i}\left\{\eta\left\langle  \widehat{\nabla}_{\pi_i}v_i(\pi^{t+\frac{1}{2}}), x\right\rangle - \frac{1}{2}\left\|x - \frac{t\pi_i^t + \pi_i^1}{t+1}\right\|^2\right\}.
\end{align*}

This means that AOG employs a convex combination $\frac{t\pi_i^t + \pi_i^1}{t+1}$ of the current strategy $\pi_i^t$ and initial strategy $\pi_i^1$ as the proximal point in gradient ascent.
However, our GABP diverges from AOG in that it uses a convex combination $\frac{k(t)\sigma_i^{k(t)} + \sigma_i^1}{k(t)+1}$ of $\sigma_i^{k(t)}$ and $\sigma_i^1$ as the reference strategy for the perturbation term.

In terms of proof for a last-iterate convergence result, \citet{cai2023doubly} have employed the following potential function for the full feedback setting:
\begin{align*}
    P^t &:= \frac{t(t+1)}{2}\left(\eta^2\left\| -\widehat{V}(\pi^t) +  c^t\right\|^2 + \eta^2 \left\|\widehat{V}(\pi^t) - \widehat{V}(\pi^{t-\frac{1}{2}})\right\|^2\right) + t\eta \langle -\widehat{V}(\pi^t) + c^t, \pi^t - \pi^1\rangle,
\end{align*}
where $c^t = \frac{\pi^{t-1} + \eta \widehat{V}(\pi^{t-\frac{1}{2}}) + \frac{1}{t}(\pi^1 - \pi^{t-1}) - \pi^t}{\eta}$ and $\widehat{V}(\cdot) = (\widehat{\nabla}_{\pi_i}v_i(\cdot))_{i\in [N]}$.
Compared to our potential function $P^{k(t)}$, their potential function includes the term of $\eta^2 \left\|\widehat{V}(\pi^t) - \widehat{V}(\pi^{t-\frac{1}{2}})\right\|^2$.
In the noisy feedback setting, the value of this term could remain high even if $\pi^t = \pi^{t - \frac{1}{2}}$.
Therefore, this term complicates providing a last-iterate convergence result for the noisy feedback setting.
In contrast, our potential function $P^{k(t)}$ does not contain the term depending on $\widehat{\nabla}v_i(\pi^t)$:
\begin{align*}
P^{k(t)} &:= \frac{k(t)(k(t)+1)}{2}\left\| \pi^{\mu, k(t)-1} - \hat{\sigma}^{k(t)-1} \right\|^2 \\
&\phantom{=} + k(t)(k(t)+1)\left\langle \hat{\sigma}^{k(t)} - \pi^{\mu, k(t)-1}, \pi^{\mu, k(t)-1} - \hat{\sigma}^{k(t)-1} \right\rangle.
\end{align*}
This allows us to provide the last-iterate convergence rates both for full and noisy feedback settings.

\subsection{Comparison of Last-Iterate Convergence Results}
\label{sec:appx_comparison_with_exsisting_lic_results}
In this section, we provide Table \ref{tab:comparison_with_exsisting_lic_results} for comparison with last-iterate convergence results of existing representative learning algorithms in monotone games.

\begin{table}[h!]
    \centering
    \begin{tabular}{ccc} \hline
         Algorithm & Full Feedback & Noisy Feedback \\ \hline
         Extragradient & \multirow{2}{*}{$\mathcal{O}(1/\sqrt{T})$} & \multirow{2}{*}{N/A} \\ 
         \citep{cai2022finite,cai2022tight} & & \\ \hline
         Optimistic Gradient & \multirow{2}{*}{$\mathcal{O}(1/\sqrt{T})$} & \multirow{2}{*}{N/A} \\ 
         \citep{golowich2020last,gorbunov2022last,cai2022finite} & & \\ \hline
         Extra Anchored Gradient & \multirow{2}{*}{$\mathcal{O}(1/T)$} & \multirow{2}{*}{N/A} \\ 
         \citep{yoon2021accelerated} & & \\ \hline
         Accelerated Optimistic Gradient & \multirow{2}{*}{$\mathcal{O}(1/T)$} & \multirow{2}{*}{N/A} \\ 
         \citep{cai2023doubly} & & \\ \hline
         Iterative Tikhonov Regularization & \multirow{2}{*}{N/A} & \multirow{2}{*}{Asymptotic$^\text{*}$} \\ 
         \citep{koshal2010single, tatarenko2019learning} & & \\ \hline
         Adaptively Perturbed Gradient Ascent & \multirow{2}{*}{$\tilde{\mathcal{O}}(1/\sqrt{T})$} & \multirow{2}{*}{$\tilde{\mathcal{O}}(1/T^{\frac{1}{10}})$} \\ 
         \citep{abe2024slingshot} & & \\ \hline
         \multirow{2}{*}{\textbf{Ours}} & \multirow{2}{*}{$\tilde{\mathcal{O}}(1/T)$} & \multirow{2}{*}{$\tilde{\mathcal{O}}(1/T^{\frac{1}{7}})$} \\
         \\ \hline
    \end{tabular}
    \caption{Existing results on last-iterate convergence in monotone games. (*) means the result holds only under bandit feedback.}
    \label{tab:comparison_with_exsisting_lic_results}
\end{table}

\end{document}